\newtheorem{lemma}{Lemma}
\newtheorem{definition}{Definition}
\newtheorem{theorem}{Theorem}
\newtheorem{example}{Example}
\newtheorem{remark}{Remark}
\newcommand{\upperRomannumeral}[1]{\uppercase\expandafter{\romannumeral#1}}
\newcommand{\PP}{\mathbb{P}}
\newcommand{\E}{\mathbb{E}}
\newcommand{\ta}{\theta_{\mathrm{a}}}
\newcommand{\tr}{\theta_{\mathrm{r}}}
\newcommand{\tuv}{\theta_{u}}
\newcommand{\tuvp}{\theta_{v}}
\newcommand{\suv}{\mathcal{S}_{uv}}
\newcommand{\U}{\mathsf{U}}
\newcommand{\vij}{\mathsf{U}_{ij}}
\newcommand{\bij}{(B_{\xi'})_{ij}}
\newcommand{\bbij}{(B_{\xi})_{ij}}
\newcommand{\tar}{\tau_{\mathrm{ar}}}
\newcommand{\tarr}{\nu_{\mathrm{ar}}}
\newcommand{\taaa}{\nu_{\mathrm{aa}}}
\newcommand{\trrr}{\nu_{\mathrm{rr}}}
\newcommand{\saa}{\mathcal{S}_{\mathrm{aa}}}
\newcommand{\sar}{\mathcal{S}_{\mathrm{ar}}}
\newcommand{\sra}{\mathcal{S}_{\mathrm{ra}}}
\newcommand{\srr}{\mathcal{S}_{\mathrm{rr}}}
\newcommand{\At}{\mathcal{A}_{\mathrm{t}}}
\newcommand{\Aa}{\mathcal{A}_{\mathrm{a}}}
\newcommand{\Rt}{\mathcal{R}_{\mathrm{t}}}
\newcommand{\Ra}{\mathcal{R}_{\mathrm{a}}}
\newcommand{\ttt}{h(\theta)}
\newcommand{\zij}{Z_{ij}}
\newcommand{\V}{\mathcal{V}}
\newcommand{\e}{\mathcal{E}}
\newcommand{\tij}{\Theta_{ij}}
\newcommand{\taij}{\Theta^{\mathrm{a}}_{ij}}
\newcommand{\trij}{\Theta^{\mathrm{r}}_{ij}}
\newcommand{\lt}{f(\theta)}
\newcommand{\s}{\mathcal{S}}
\newcommand{\tX}{\widetilde{X}}
\newcommand{\tY}{\widetilde{Y}}
\newcommand{\Imm}{\mathcal{I}_{\mathrm{mm}}}
\newcommand{\Imw}{\mathcal{I}_{\mathrm{mw}}}
\newcommand{\Iwm}{\mathcal{I}_{\mathrm{wm}}}
\newcommand{\Iww}{\mathcal{I}_{\mathrm{ww}}}
\newcommand{\Iaa}{\mathcal{I}_{\mathrm{aa}}}
\newcommand{\Iar}{\mathcal{I}_{\mathrm{ar}}}
\newcommand{\Ira}{\mathcal{I}_{\mathrm{ra}}}
\newcommand{\Irr}{\mathcal{I}_{\mathrm{rr}}}
\newcommand{\mc}[1]{{\color{blue}#1}}
\begin{document}
%
\title{Community Detection and Matrix Completion \\with Social and Item Similarity Graphs}
%
%
%

\author{Qiaosheng Zhang,
	Vincent Y.~F.~Tan,~\IEEEmembership{Senior~Member,~IEEE,}
	Changho Suh,~\IEEEmembership{Senior~Member,~IEEE}
	\thanks{Q.~Zhang and V.~Y.~F.~Tan are with the Department of Electrical and Computer Engineering, National University of Singapore (Emails: elezqiao@nus.edu.sg, vtan@nus.edu.sg). V.~Y.~F.~Tan is also with the Department of Mathematics, National University of Singapore. C.~Suh is with the School of Electrical Engineering, Korea Advanced Institute of Science and Technology (Email: chsuh@kaist.ac.kr).} 
	\thanks{Q.~Zhang and V.~Y.~F.~Tan are supported by Singapore National Research Foundation (NRF) Fellowship (R-263-000-D02-281) and an NUS-Berlin Alliance Grant (R-263-000-E13-133). C.~Suh is supported by the Institute of Information \& Communications Technology Planning \& Evaluation (IITP) grant funded by the Korea government (MSIT) (2020-0-00626, Ensuring high AI learning performance with only a small amount of training data)} \thanks{This paper was presented in part and virtually at the 2020 International Symposium on Information Theory (ISIT).} }

%
%

\markboth{}%
{Shell \MakeLowercase{\textit{et al.}}: Bare Demo of IEEEtran.cls for IEEE Journals}
%



\maketitle

\begin{abstract}
  We consider the problem of recovering a binary rating matrix as well as clusters of users and items based on a partially observed matrix together with side-information in the form of social and item similarity graphs. These two graphs are both generated according to the celebrated  stochastic block model (SBM). We develop lower and upper bounds on sample complexity that match for various scenarios. Our information-theoretic results quantify the benefits of the availability of the social and item similarity graphs. Further analysis reveals that under certain scenarios, the social and item similarity graphs produce an interesting synergistic effect. This means that observing two graphs is strictly better than observing just one in terms of reducing the sample complexity.
 
\end{abstract}

\begin{IEEEkeywords}
Matrix completion, Community detection, Stochastic block model, Graph side-information.
\end{IEEEkeywords}

\section{Introduction} \label{sec:introduction}
Recommender systems aim to accurately predict users' preferences and recommend appropriate items for users based on available data that is usually scant and/or of low quality. For example, Nexflix's movie recommender system relies heavily on a partially filled {\it rating matrix} that comprises users' evaluations of movies, and various recommendation algorithms (such as collaborative filtering~\cite{goldberg1992using,sarwar2001item,linden2003amazon,mnih2008probabilistic}) have been developed. However, merely exploiting the available ratings may not be sufficient for producing high-quality recommendations, especially when we would like to (i) recommend items to new users who have not rated any items (i.e., the {\it cold start problem}); and (ii) promote new items that have not received any ratings yet (i.e., the {\it dual cold start problem}). 

An effective approach to overcome the aforementioned challenges is to make proper use of the \emph{side-information}~\cite{jamali2010matrix, ma2011recommender,kalofolias2014matrix, ahn2018binary, yoon2018joint, jo2020discrete} such as the {\it similarities of users} (e.g., the friendships in Facebook) or {\it similarities of items} (e.g., the categories/genres of movies in the Netflix database). The rationale is that users in the same community tends to share similar preferences (called {\it homophily}~\cite{mcpherson2001birds} in the social sciences), and items of similar features are more likely to have similar attractiveness to users. 
 

Most of the prior works studied the algorithmic developments of the graph-aided recommender systems (see~\cite{tang2013social} for a review of {\it social recommender systems}); however, a relatively fewer number of works focused on the fundamental limits of such problems, which, we believe, are equally pertinent. Ahn et al.~\cite{ahn2018binary} considered the problem of recovering the binary rating matrix based on a partially observed matrix and a social graph. The authors characterized a sharp threshold on the sample complexity for recovery, and also quantified the gains due to the graph side-information. 
In addition to the social graph, an item similarity graph is also often readily available, especially in this era of massive data. For instance,
\begin{enumerate}
	\item One can make use of high-dimensional  feature vectors of items (e.g., genre or language for movies or level of calories and carbonation for beverages~\cite{Condliff1999}) to \emph{construct} item graphs via clustering algorithms~\cite{rattigan07};
	\item One can also leverage {\em users' behavior history} to construct item graphs as has been  done by Taobao~\cite{wang18billion}.
\end{enumerate}
A natural question is whether the   item similarity graph yields a {\em strict} benefit for recovery, and whether observing {\em two} pieces of graph side-information has a  {\em synergistic} effect in reducing the sample complexity. This work addresses these questions  and uncovers the roles and benefits of the social and item similarity graphs. 

We consider a movie recommender system with $n$ users and $m$ movies, and users' ratings to movies are either $0$ (dislike) or $1$ (like). For simplicity, users are partitioned into men and women, while movies are partitioned into action movies and romance movies. From anecdotal evidence, action movies usually attract more men than women, while the reverse is true for romance movies. To capture this phenomenon, we put forth the following two models.

\begin{enumerate}[label=(\roman*), wide, labelwidth=!, labelindent=0pt]
	\item We assume that men's \emph{nominal} ratings to action and romance movies are respectively `$1$' and `$0$', and women's nominal ratings to action and romance movies are respectively `$0$' and `$1$'. This assumption is shown in Table~\ref{table:1}. The corresponding model is referred to as \emph{the basic model} (Model 1).
	
	\item In reality, there may often exist \emph{atypical} movies such that their attractiveness is different from the typical ones (such as the popular action movie \emph{Captain America}, which has  a large following of  female fans, arguably more so than male fans). To capture this, we introduce another model that we call Model 2 wherein we allow the existence of {\it atypical} action and romance movies, and assume that atypical action (resp.\ romance) movies attract more women (resp.\ men). The nominal ratings for Model 2 are  shown in Table~\ref{table:2}. 
\end{enumerate}
Given clusters of users and movies, one can form an $n \times m$ \emph{nominal rating matrix} that comprises $n$ users' ratings to all the $m$ movies, according to either Model 1 or  Model 2. Each user may have distinct taste such that his/her \emph{actual ratings} may differ from the nominal ratings of the associated cluster. To model this flexibility, we assume the actual rating from each user to each movie is a perturbed version of the nominal rating. The perturbation can be viewed as \emph{personalization}, and the perturbed matrix is called the \emph{personalized rating matrix}.

\begin{table}
	\caption{Nominal ratings from users to movies (Model 1)}
	\label{table:1}
	\centering
	\begin{tabular}{lll}
		\toprule
		& Action movies     & Romance movies \\
		\midrule
		Men & \multicolumn{1}{c}{$1$} & \multicolumn{1}{c}{$0$}    \\
		Women     & \multicolumn{1}{c}{$0$} & \multicolumn{1}{c}{$1$}      \\
		\bottomrule
	\end{tabular}
\end{table}

\begin{table}
	\caption{Nominal ratings from users to movies (Model 2)}
	\label{table:2}
	\centering
	\begin{tabular}{lllll}
		\toprule
		&\multicolumn{2}{c}{Action movies}  & \multicolumn{2}{c}{Romance movies}                  \\
		\cmidrule(r){2-3} \cmidrule(r){4-5}
	     & Typical     & Atypical & Typical & Atypical \\
		\midrule
		Men & \multicolumn{1}{c}{$1$}  & \multicolumn{1}{c}{$0$}  & \multicolumn{1}{c}{$0$} & \multicolumn{1}{c}{$1$}    \\
		Women     & \multicolumn{1}{c}{$0$} & \multicolumn{1}{c}{$1$}  & \multicolumn{1}{c}{$1$} & \multicolumn{1}{c}{$0$}    \\
		\bottomrule
	\end{tabular}
\vspace{-10pt}
\end{table}

In summary, three pieces of information are observed: (i) entries in the personalized rating matrix that are sampled independently with a certain probability, (ii) the social graph, and (iii) the movie graph (see Fig.~\ref{fig:model} for a pictorial representation of our setting). The task here is to exactly recover the clusters of users and movies and to reconstruct the nominal rating matrix.

\subsection{Main Contributions} \label{sec:contribution}
In this work, we model the social and movie graphs by a celebrated generative model for random graphs---the {\it stochastic block model} (SBM)~\cite{holland1983stochastic}. For Model 1, we develop a sharp threshold on the  sample complexity for recovery. For Model 2, lower and upper bounds on the  sample complexity  are derived---they match for a wide range of parameters of interest, and match up to a factor of two for the remaining regime. Both the threshold (characterized under Model 1) and the upper and lower bounds (intended for Model 2) are functions of the \emph{qualities} of the social and movie graph. Roughly speaking, the qualities can be quantified by the difference between the intra- and inter-cluster probabilities of the SBMs that govern them. Our theoretical studies show that the sample complexity gains due to the social and movie graphs appear for a wide range of parameters.  More interestingly, we show that there exists a certain regime in which there is  a synergistic  effect generated by the two graphs---observing {\it both} graphs is strictly better than observing {\it only one} graph. 

This synergistic  effect can be seen from Fig.~\ref{fig:1}, which considers Model 1 with equal numbers of users and movies (i.e., $n = m$). It plots the sample complexity as a function of $I_1$ (the quality of social graph to be defined in Section~\ref{sec:result}) under three different values of $I_2$ (the quality of movie graph to be defined similarly). $I_1 > 0$ and $I_1 = 0$ respectively mean that the social graph is available and unavailable. Compared to the case when no graph is available (i.e., $I_1 = I_2 = 0$), the sample complexity is reduced only when both $I_1$ and $I_2$ are positive, while the gain disappears when either $I_1$ or $I_2$ becomes zero. On the other hand, if the number of users exceeds the number of movies (e.g., $n = 2m$, as illustrated in Fig.~\ref{fig:2}), the availability of social graph is always helpful in reducing the sample complexity regardless of the availability of movie graph; while the movie graph is helpful only when the quality of social graph is good enough (i.e., $I_1 > I^*$). Thus, observing two graphs with $I_1 > I^*$ and $I_2 > 0$ also produces a synergistic  effect. The reasons are provided in Section~\ref{sec:result_simple}.    

\begin{figure}
	\begin{subfigure}[b]{0.24\textwidth}
		\includegraphics[width=\textwidth]{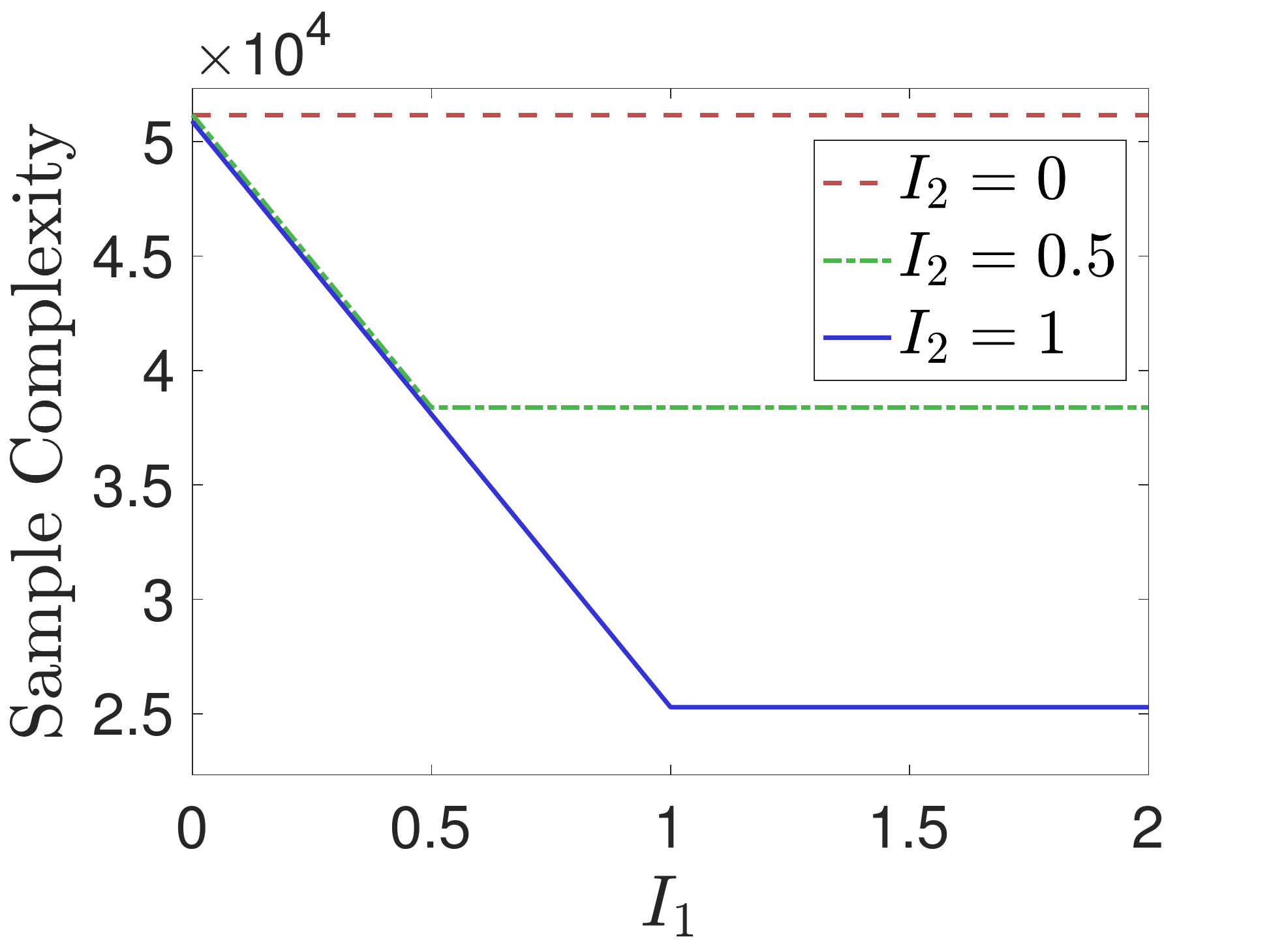}
		\caption{$n = m = 10,000$}
		\label{fig:1}
	\end{subfigure}
	\begin{subfigure}[b]{0.24\textwidth}
		\includegraphics[width=\textwidth]{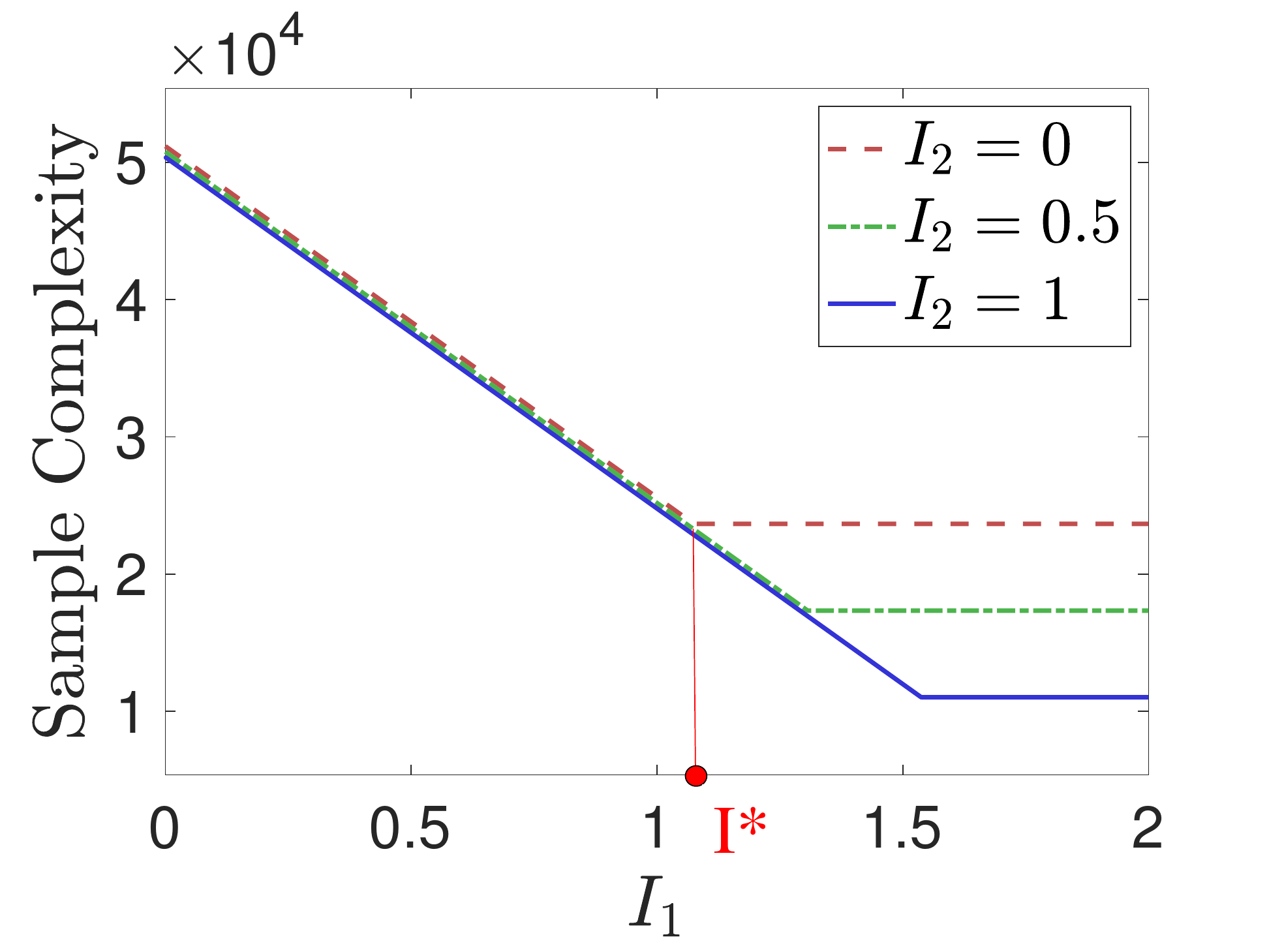}
		\caption{$n = 2m = 10,000$}
		\label{fig:2}
	\end{subfigure}
\caption{Sample complexity versus $I_1$ (Model 1).}
\label{fig:12}
\vspace{-10pt}
\end{figure}

\subsection{Related Works}
This work is closely related to community detection and matrix completion. While there is a vast literature on these two topics (especially from algorithmic and experimental perspectives), in the following, we mainly discuss theoretical works that provide provable guarantees.  

The theoretical underpinnings of community detection have been well-studied and sharp thresholds for exact recovery of communities have been successively established~\cite{abbe2015exact,mossel2015consistency, abbe2015community, hajek2017information,abbe2017community}. Moreover, it has been shown that side-information (e.g., node values~\cite{saad2018community, saad2018recovering, saad2018exact, yang2013community, mayya2019mutual}, edge weights~\cite{aicher2013adapting}, similarity information between data points~\cite{mazumdar2017query}) is also helpful in recovering communities. In our   setting, given  realizations from two SBMs together with a partially observed matrix, we are required to recover  the communities of users and movies and the rating matrix. We note that the task in~\cite{xu2014jointly} (joint recovery of rows and columns communities) is similar to ours, but therein,   graph information is not available. Another relevant problem is the \emph{labelled or weighted SBM problem}~\cite{heimlicher2012community,xu2014edge,lelarge2015reconstruction,xu2020opt, yun2016optimal}; we provide a detailed discussion on this point in Remark~\ref{remark:1} in Section~\ref{sec:result}.   


Various efficient algorithms have been developed for low-rank matrix completion~\cite{candes2009exact,candes2010power,marjanovic2012l_q,chen2015signal,dai2011subspace,ma2014decomposition}. In addition to the low-rank property, some other works considered applications in which the  matrix to be recovered has certain extra properties. In particular,~\cite{ahn2018binary,yoon2018joint, jo2020discrete} assumed that the social graph imposes dependencies amongst the rows of low-rank matrix. The task in this work can also be regarded as a matrix completion problem in which the social and movie graphs impose dependencies on both rows and columns of matrix.

Finally, we point out that the objective of this work is to gain a theoretical understanding of the benefits of graph side-information by investigating the fundamental limits of the recovery problem of interest. In contrast, a follow-up work~\cite{zhang2020mc2g} considers the same problem, but the main focus there is the design and analysis of efficient algorithms.  

\subsection{Outline}
We describe our model in Section~\ref{sec:model}. In Section~\ref{sec:result}, we present the main results for both Model 1 (Theorem~\ref{thm:simple}) and Model 2 (Theorems~\ref{thm:1} and~\ref{thm:2}), and reveal the benefits of   the social and movie graphs. Section~\ref{sec:simple}  provides the detailed proofs of Theorem~\ref{thm:simple}, while Section~\ref{sec:model2}  provides the proof sketches of Theorems~\ref{thm:1} and~\ref{thm:2}. Section~\ref{sec:conclusion} concludes this work
and proposes   directions for future research. 

\section{Problem statement} \label{sec:model}

For any integer $a \ge 1$, $[a]$ represents the set $\{1,\ldots, a\}$. For any integers $a, b$ such that $a < b$, $[a: b]$ represents  $\{a,a+1,\ldots, b\}$. For any event $\mathcal{E}$, the conditional probability $\PP(\cdot|\mathcal{E})$ is abbreviated as $\PP_{\mathcal{E}}(\cdot)$.  Throughout this paper we use standard {\it asymptotic notations}~\cite[Ch. 3.1]{leiserson2001introduction} to describe the limiting behaviour of functions/sequences. 

\subsection{Models}
Consider $n$ users and $m$ movies. To convey the main message (i.e., uncover the benefits of the social and movie graphs) as concisely as possible, we assume both users and movies are partitioned into equal-sized clusters.  The sets of men and women are respectively denoted by $\mathcal{M}$ and $\mathcal{W}$, where $\mathcal{M}, \mathcal{W} \subset [n]$, $|\mathcal{M}| = |\mathcal{W}| = n/2$, and $\mathcal{M} \cap \mathcal{W} = \emptyset$. The sets of action and romance movies are respectively denoted by $\mathcal{A}$ and $\mathcal{R}$, where $\mathcal{A}, \mathcal{R} \subset [m]$, $|\mathcal{A}| = |\mathcal{R}| = m/2$, and $\mathcal{A} \cap \mathcal{R} = \emptyset$. Without loss of generality,\footnote{Without this assumption, for any instance $(\mathcal{M},\mathcal{W}, \mathcal{A}, \mathcal{R})$, one can always find another instance $(\mathcal{M}',\mathcal{W}', \mathcal{A}', \mathcal{R}')$ with $\mathcal{W}' = \mathcal{M}$ and $\mathcal{R}' = \mathcal{A}$ (i.e., simultaneously flipping the clusters of users and movies) such that these two instances are statistically indistinguishable.} we assume that the majority of the first $n/2$ users are men (i.e., $|\mathcal{M} \cap [n/2]| \ge n/4$), and the majority of the first $m/2$ movies are action movies (i.e., $|\mathcal{A} \cap [m/2]| \ge m/4$).   
\subsubsection{Model 1 (The Basic Model)}
We assume that men's nominal rating to action and romance movies are respectively `$1$' and `$0$', and women's nominal rating to action and romance movies are respectively `$0$' and `$1$'. See Table~\ref{table:1}. For each   user, the actual rating to each movie is perturbation of the nominal rating as per  $\text{Bern}(\theta)$, where $\theta \in (0,\frac{1}{2})$ is referred to as the \emph{personalization parameter} and is independent of $m$ and $n$.

Let $\xi_{\mathcal{M},\mathcal{W},\mathcal{A},\mathcal{R}}$ be an aggregation of the parameters; these include the sets of users $\mathcal{M}$ and $\mathcal{W}$, and the sets of movies $\mathcal{A}$ and $\mathcal{R}$. We sometimes abbreviate $\xi_{\mathcal{M},\mathcal{W},\mathcal{A},\mathcal{R}}$ as $\xi$ for notational convenience. The sets of men, women, action and romance movies (associated with $\xi$) are respectively denoted by $\xi_{\mathcal{M}}, \xi_{\mathcal{W}}, \xi_{\mathcal{A}},$ and $\xi_{\mathcal{R}}$.  The parameter space $\Xi$ is the collection of valid parameters $\xi_{\mathcal{M},\mathcal{W},\mathcal{A},\mathcal{R}}$.
Given any $\xi \in \Xi$, one can construct the $n \times m$ nominal rating matrix $B_{\xi}$ based on $\xi_{\mathcal{M}}, \xi_{\mathcal{W}}, \xi_{\mathcal{A}}$, $\xi_{\mathcal{R}}$ and Table~\ref{table:1}.
The $n \times m$ personalized rating matrix $V_{\xi}$ denotes users' actual ratings to all the movies. Specifically, the $i$-th row of $V_{\xi}$ represents the $i$-th user's ratings to all the movies and  the $j$-th column of $V_{\xi}$ represents all the users' rating to the $j$-th movie. Each element $(V_{\xi})_{ij} = (B_{\xi})_{ij} \oplus \tij$, where $\{\tij\} \stackrel{\text{i.i.d.}}{\sim} \text{Bern}(\theta)$.

\subsubsection{Model 2 (The Model with Atypical Movies)}
This model assumes that there exist an \emph{unknown-sized} subset of atypical action
movies $\Aa \subseteq \mathcal{A}$ and an unknown-sized subset of atypical romance movies $\Ra \subseteq \mathcal{R}$. We refer to $\At \triangleq \mathcal{A} \setminus \Aa$ and $\Rt \triangleq \mathcal{R} \setminus \Ra$ as \emph{typical} action and romance movies, respectively. The nominal ratings from users to movies are shown in Table~\ref{table:2}, which reflects our assumption that typical action movies and atypical romance movies attract more men than women, while typical romance movies and atypical action movies attract more women than men.

For each user, it is assumed that the actual rating to each action movie is a perturbation of the nominal rating by   $\text{Bern}(\ta)$, and the actual rating to each romance movie is perturbation of the nominal rating by $\text{Bern}(\tr)$, where $\ta,\tr \in (0,\frac{1}{2})$   are independent of $m$ and $n$. We find the difference between $\ta$ and $\tr$ is an important statistic for distinguishing action and romance movies in Model 2---this will be apparent  in Theorems~\ref{thm:1} and~\ref{thm:2}.

Let $\xi_{\mathcal{M},\mathcal{W},\At,\Aa,\Rt,\Ra}$ (abbreviated as $\xi$) be an aggregation of the parameters of interest, and the sets of typical/atypical action and romance movies (associated with $\xi$) are denoted by $\xi_{\At}, \xi_{\Aa}, \xi_{\Rt}, \xi_{\Ra}$.  The parameter space $\Xi$ is the collection of valid parameters $\xi_{\mathcal{M},\mathcal{W},\At,\Aa,\Rt,\Ra}$. Given any $\xi \in \Xi$, one can construct the $n \times m$ nominal rating matrix $B_{\xi}$ based on $\xi_{\mathcal{M}}, \xi_{\mathcal{W}}, \xi_{\At}, \xi_{\Aa}, \xi_{\Rt}, \xi_{\Ra}$ and Table~\ref{table:2}. Let $\{\taij\} \stackrel{\text{i.i.d.}}{\sim} \text{Bern}(\ta)$ and $\{\trij\} \stackrel{\text{i.i.d.}}{\sim} \text{Bern}(\tr)$.
Each element of the $n \times m$ personalized rating matrix $V_{\xi}$ takes the form $(V_{\xi})_{ij} = (B_{\xi})_{ij} \oplus \taij$ if movie $j$ is an action movie, and $(V_{\xi})_{ij} = (B_{\xi})_{ij} \oplus \trij$ if movie $j$ is a romance movie.

\begin{figure}[t]
	\centering
	\includegraphics[width=6.5cm]{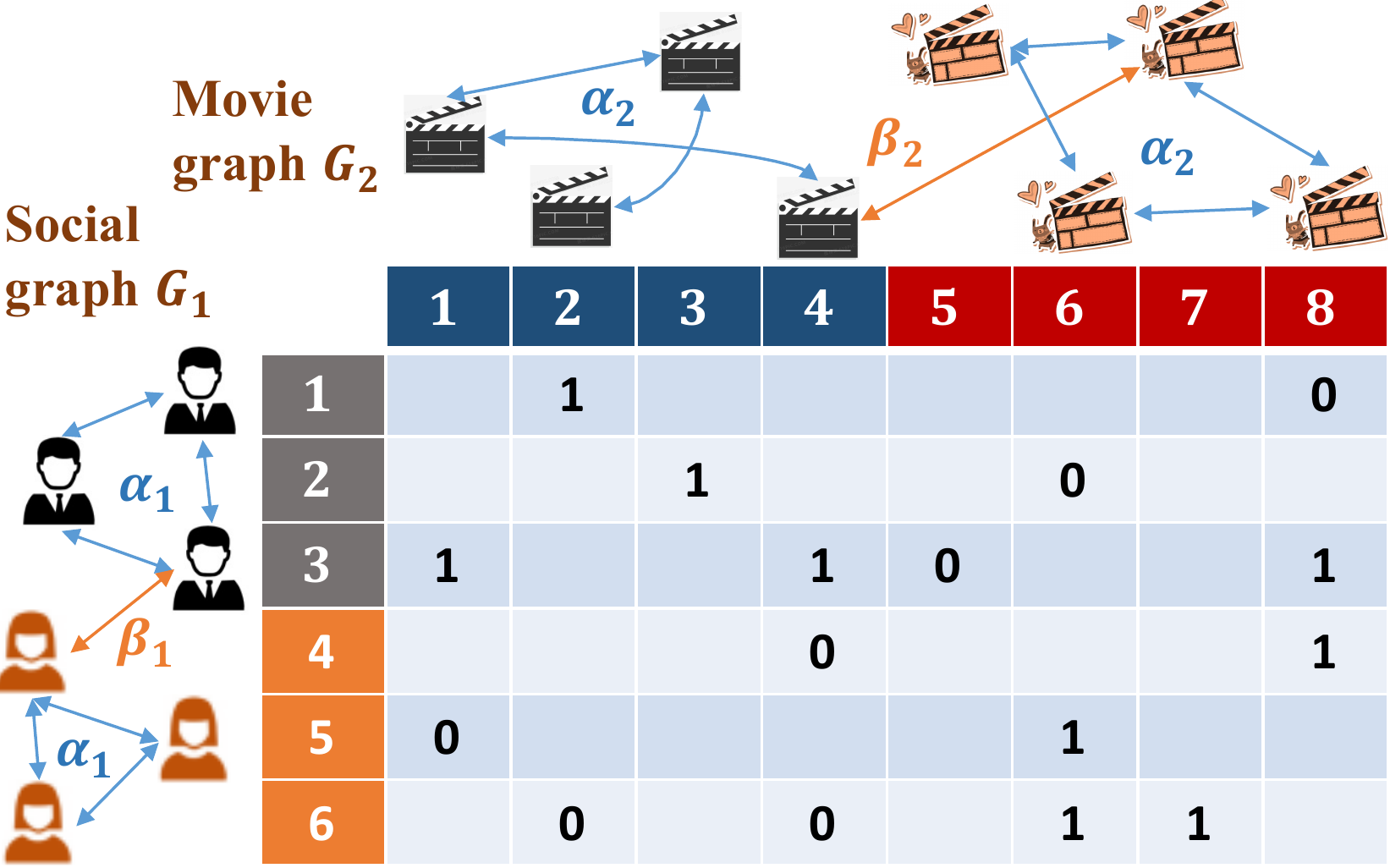}
	\caption{An illustration of $(\U, G_1,G_2)$ that is generated according to the model parameterized by $\xi$, where $\xi_\mathcal{M} = \{1,2,3\}$ (gray), $\xi_\mathcal{W} = \{4,5,6\}$ (orange), $\xi_\mathcal{A} = \{1,2,3,4\}$ (blue), $\xi_\mathcal{R} = \{5,6,7,8\}$ (red). }  
	\label{fig:model}
	\centering
	\vspace{-10pt}
\end{figure}

\subsection{Observations}
In both models, for any $\xi \in \Xi$, the learner observes the following three pieces of information.

\begin{enumerate}[wide, labelwidth=!, labelindent=0pt]
\item The partially observed matrix $\U$. Let $\{\zij \} \stackrel{\text{i.i.d.}}{\sim} \text{Bern}(p)$. For each $(i,j) \in [n] \times [m]$, the $(i,j)$-th entry of $\U$ is  
$$\U_{ij} =\begin{cases} (V_{\xi})_{ij}, \ &\text{ if } \zij = 1,\\
\perp, \ &\text{ if } \zij = 0,
\end{cases}$$
where  $\perp$ is the erasure symbol, and $p$ is the {\it sample probability}.
   
\item The social graph $G_{1} = (\V_1, \e_1)$ with $\V_1 = [n]$ being the set of $n$ user nodes. For any pairs of nodes $i \ne i'$, it is connected with probability $\alpha_1$ if both $i$ and $i'$ are in the same cluster (either in  $\xi_{\mathcal{M}}$ or $\xi_{\mathcal{W}}$), and is connected with probability $\beta_1$ otherwise. 

\item The movie graph $G_{2} = (\V_2, \e_2)$ with $\V_2 = [m]$ being the set of $m$ movie nodes. For any pairs of nodes $j \ne j'$, it is connected with probability $\alpha_2$ if both $j$ and $j'$ are in the same cluster (either in $\xi_{\mathcal{A}}$ or $\xi_{\mathcal{R}}$), and is connected with probability $\beta_2$ otherwise. 
\end{enumerate}

An example of the three pieces of information $(\U, G_1,G_2)$ is illustrated in Fig.~\ref{fig:model}. Throughout this work, we assume $m = \omega(\log n)$ and $n = \omega(\log m)$ such that $m\to\infty$ as $n\to\infty$. 

\begin{remark}{\em
In Model 2,	one may alternatively treat typical action movies $\At$, atypical action movies $\Aa$, typical romance movies $\Rt$, and atypical romance movies $\Ra$ as four {\em distinct} clusters. However, the movie graph cannot be represented as a \emph{general SBM}~\cite{abbe2015community} with these four clusters. This is because the relative sizes   $\At, \Aa, \Rt, \Ra$ are \emph{unknown} (as the number of atypical movies is unknown), while in a general SBM the relative sizes of the clusters are predefined. 

Besides, if $\At$ and $\Aa$ (resp.\ $\Rt$ and $\Ra$) are viewed as two sub-clusters of the cluster of action movies $\mathcal{A}$ (resp.\ romance movies $\mathcal{R}$), then  there would be a  \emph{hierarchy}. However, it is worth pointing out some subtleties: (i) As mentioned above, the relative sizes of the sub-clusters are not known;  (ii) The intra-cluster probability is exactly the same as the inter-cluster probability between $\At$ and $\Aa$ (or between $\Rt$ and $\Ra$), which seems to be an unrealistic assumption for random graphs with hierarchical structures. For hierarchical graph side information in the matrix completion problem, the reader may wish to refer to the recent work~\cite{elmahdy2020matrix}.}
\end{remark}

\subsection{Objective} \label{sec:objective}

Given $(\U, G_1,G_2)$, the learner constructs an estimator $\phi=\phi(\U, G_1,G_2)$ to recover $\xi$, which includes the clusters of users ($\xi_\mathcal{M}$ and $\xi_\mathcal{W}$) and  the clusters of movies ($\xi_\mathcal{A}$ and $\xi_\mathcal{R}$ in Model 1; $\xi_{\At}, \xi_{\Aa}, \xi_{\Rt}, \xi_{\Ra}$ in Model 2). If an estimator is able to recover $\xi$ reliably, it is also able to reliably recover the nominal rating matrix $B_{\xi}$. That is, matrix completion comes as a consequence of recovering $\xi$, since one can construct the $n \times m$ matrix $B_{\xi}$ based on clusters of users and movies.

\begin{definition}[Exact recovery]
	For any estimator $\phi$, the maximum error probability is defined as 
	\begin{align}
	P_{\emph{err}}(\phi) \triangleq \max_{\xi \in \Xi}\PP_\xi(\phi(\U, G_1,G_2) \ne \xi), \label{eq:error}
	\end{align}
	where $\PP_\xi(\cdot)$ is the error probability when $(\U, G_1,G_2)$ is generated as per the model parameterized by $\xi$.
	A sequence of estimators $\Phi = \{\phi_n \}_{n=1}^{\infty}$ ensures exact recovery if 
	\begin{align}
	\lim_{n \to \infty} P_{\emph{err}}(\phi_n) = 0. \label{eqn:vanish_prob}
	\end{align}  
\end{definition}

\begin{remark}{\em
 For Model 1, recovering the nominal rating matrix is also sufficient for the recovery of $\xi$. However, this is not true for Model 2, since movies that attract more men may be regarded as either typical action or atypical romance movies.}
\end{remark}

\begin{definition}[Sample complexity]
The {\em sample complexity}  is the infimum of the expected number of entries in the matrix $\U$ such that there exists $\Phi$ for which~\eqref{eqn:vanish_prob} holds.
\end{definition}
We remark that the sample complexity can also be expressed as $nmp^*$ where $p^*$  is the {\em minimum sample probability} (MSP)  such that  $ P_{\text{err}}(\phi_n) \to 0$ as $n $ grows.

\section{Main results} \label{sec:result}
As mentioned in Section~\ref{sec:introduction}, our main contribution is to characterize the   sample complexity. These quantities are functions of the qualities of graphs, which are defined as follows.
\begin{itemize}
	\item Let $I_{1} \triangleq n(\sqrt{\alpha_1}-\sqrt{\beta_1})^2/\log n$ be a measure of the quality of the social graph $G_1$. Intuitively, a larger $I_1$ implies a larger difference between $\alpha_1$ and $\beta_1$; this   means that the structure of communities are more transparent. Thus, increasing $I_1$ makes it easier to recover the communities of users.  A well-known result~\cite{abbe2015exact, mossel2015consistency} states that it is possible to recover $\mathcal{M}$ and $\mathcal{W}$ (based on the observation of $G_1$ only)  if $I_1 > 2$, and impossible if $I_1 <2$.
	
	\item Analogously, let $I_{2} \triangleq m(\sqrt{\alpha_2}-\sqrt{\beta_2})^2/\log m$ be the quality of the movie graph $G_2$.
\end{itemize}
For ease of presentation, we state our results in terms of the sample probability $p$, instead of the sample complexity. 

\subsection{Model 1 (The Basic Model)} \label{sec:result_simple}
Theorem~\ref{thm:simple} provides a sharp threshold on the sample probability $p$, as a function of $n,m$, $I_1$, $I_2$, and the personalization parameter $\theta$. Let $h(x) \triangleq (\sqrt{1-x} -\sqrt{x})^2$.

\begin{theorem} \label{thm:simple}
Consider any $\epsilon > 0$. If
	\begin{align}
	p \!\ge \!\max\!\left\{\!\frac{(2(1\!+\!\epsilon) - I_{1}) \log n}{2\ttt m}, \frac{(2(1\!+\!\epsilon) - I_{2})\log m}{2\ttt n} \!\right\}, \label{eq:achievability}
	\end{align}
 then there exists a sequence of estimators $\Phi = \{\phi_n\}_{n=1}^\infty$ satisfying $\lim_{n \to \infty} P_{\emph{err}}(\phi_n) = 0$. If
	\begin{align}
	p \!\le\! \max\!\left\{\!\frac{(2(1\!-\!\epsilon) - I_{1}) \log n}{2\ttt m}, \frac{(2(1\!-\!\epsilon) - I_{2})\log m}{2\ttt n} \!\right\}, \label{eq:converse}
	\end{align}
 then $\lim_{n \to \infty} P_{\emph{err}}(\phi_n) = 1$ for any sequence of estimators $\Phi$.
\end{theorem}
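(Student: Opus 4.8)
## Proof Proposal

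The plan is to establish the two halves of Theorem~\ref{thm:simple} by separately analyzing the achievability (sufficiency of~\eqref{eq:achievability}) and the converse (necessity, i.e., the impossibility under~\eqref{eq:converse}). The underlying intuition is a decomposition of the recovery task into two essentially decoupled subproblems: recovering the user clusters $(\xi_\mathcal{M},\xi_\mathcal{W})$ and recovering the movie clusters $(\xi_\mathcal{A},\xi_\mathcal{R})$. For the user side, each user $i$ has a row of roughly $mp$ observed ratings; in Model 1, a man and a woman have nominal rows that differ in every one of the $m$ entries, so distinguishing a user's gender from the observed row is a binary hypothesis test whose per-entry Chernoff/Bhattacharyya exponent is exactly $h(\theta) = (\sqrt{1-\theta}-\sqrt{\theta})^2$. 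Combined with the SBM side-information $G_1$, which contributes an exponent governed by $I_1 \log n$, the total ``signal'' for separating one misclassified user from the truth scales like $2\,h(\theta)\,mp + I_1\log n$, and exact recovery demands this exceed $2\log n$ (union bound over $n$ users). Solving for $p$ gives the first term in the max; the movie side is symmetric with $n \leftrightarrow m$ and $I_1 \leftrightarrow I_2$, giving the second term.

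For the achievability direction, I would construct a two-stage estimator. First, run a maximum-likelihood (or a known polynomial-time) community-detection routine on each graph alone; when $I_1 > 2$ (resp.\ $I_2 > 2$) this already exactly recovers the clusters and the rating-matrix entries follow. In the interesting regime $I_1 < 2$ (resp.\ $I_2<2$), the graph alone is insufficient and the matrix observations must make up the deficit. Here I would use the standard ``almost-exact then clean-up'' strategy: use the matrix $\U$ together with $G_1$ to produce an initial clustering of users that is correct except on a vanishing fraction, then perform a local refinement in which each user is reassigned by a likelihood-ratio test comparing its observed row (against the two candidate nominal rows, estimated from the almost-correct clustering) together with its edge pattern to the two candidate communities. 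A union bound over all $n$ users, with the error of the per-user test controlled by a Chernoff bound yielding exponent $-(h(\theta)mp + \tfrac12 I_1\log n)$, shows the refinement is exactly correct w.h.p.\ precisely when $p$ exceeds the stated threshold; the same applies to movies. One subtlety to handle carefully is that the two refinement steps (users, movies) should be run so that the errors do not compound — this is why the assumptions $m=\omega(\log n)$, $n=\omega(\log m)$ are invoked, ensuring each side individually has enough observations.

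For the converse, I would lower-bound $P_{\text{err}}(\phi)$ by a genie-aided argument. Suppose $p$ satisfies~\eqref{eq:converse}; WLOG the max is achieved by the first (user) term, so $2h(\theta)mp + I_1\log n \le (2-2\epsilon)\log n$. I reveal to the learner the true movie clustering and the true gender of all but one carefully chosen user, reducing the problem to deciding the community of a single user from its observed row (a length-$\sim mp$ i.i.d.\ sample) and its $n-1$ edges. The optimal test is the likelihood-ratio test, and by a change-of-measure / second-moment argument (or directly via the Chernoff–Stein lemma applied to the product of the Bernoulli-$\theta$ row-likelihood and the SBM edge-likelihood), its error probability is $n^{-(h(\theta)mp/\log n + I_1/2) + o(1)} \ge n^{-(1-\epsilon)+o(1)}$, which does not vanish. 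To upgrade this single-user non-vanishing error into $P_{\text{err}}(\phi_n)\to 1$, I would apply the standard argument that there are $\Theta(n)$ nearly-independent such ``swap'' events (flipping one of many low-degree, sparsely-observed users), so the probability that the ML estimator errs on \emph{at least one} of them tends to $1$; a mutual-information / Fano-type bound or a direct Suen-type correlation inequality handles the mild dependence. The symmetric construction on the movie side covers the case where the second term achieves the max.

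The main obstacle I anticipate is the converse's sharpness: getting the constant exactly right in the exponent $h(\theta)mp + \tfrac12 I_1\log n$ requires a tight two-sided Chernoff analysis of the combined row-plus-edges likelihood ratio, and — more delicately — lifting the single-vertex impossibility to a statement about the maximum error probability over \emph{all} vertices without losing constants. This demands that the $\Theta(n)$ (or $\Theta(m)$) candidate swaps be made genuinely asymptotically independent, e.g.\ by restricting to a linear-sized subset of users whose pairwise shared structure is negligible, and then showing the union of their individual $\Omega(1)$ (in fact $n^{-1+\epsilon}$) failure events has probability $1-o(1)$. The achievability's clean-up step is comparatively routine but also relies on this same Chernoff exponent, so both directions hinge on one careful large-deviations computation of $h(\theta)$ as the Bhattacharyya coefficient of $\mathrm{Bern}(\theta)$ versus $\mathrm{Bern}(1-\theta)$.
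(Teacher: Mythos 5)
Your converse is, in substance, the paper's argument: reduce to the most-confusable alternatives (single swaps), lower-bound the probability that one swap beats the truth with a tight reverse-Chernoff estimate carrying the exponent $m p\, h(\theta)+\tfrac12 I_1\log n$ (the paper imports this as Lemma~\ref{lemma:chernoff1}), and boost to $P_{\text{err}}\to 1$ via many (conditionally) independent swap candidates. Your achievability, however, takes a genuinely different route: a two-stage ``almost-exact then clean-up'' estimator, whereas the paper analyzes the one-shot ML rule directly, partitioning the alternatives into type classes $\Xi_{\xi}(k_1,k_2)$ indexed by the user/movie overlaps, applying the Chernoff bound at $t=\tfrac12$ to get \eqref{eq:mo}, and killing the large-overlap classes with the crude count $|\Xi|\le 2^{n+m}$ against $g_3\gtrsim \delta_\epsilon mn$. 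Your route is closer to an implementable algorithm (it is essentially what the follow-up work does), but as an information-theoretic proof it takes on obligations the ML union bound avoids.

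Those obligations are where the genuine gaps lie. (i) The almost-exact initialization at the stated $p$, the clean-up step's reuse of the same data (or a sample-splitting device), and the non-compounding of user-side and movie-side refinement errors are all asserted rather than proved; each is standard but none is routine to the point of omission, and the exact constant in the threshold hinges on the clean-up test being analyzed with plug-in (estimated) nominal rows. (ii) Your genie reduction is ill-posed as stated: since clusters have size exactly $n/2$, revealing the labels of all but one user determines the last one; ambiguity requires leaving one user unrevealed in each community and testing the pair swap, which is exactly why the paper works with $\xi_{\text{row}}^{(i,i')}$ and needs Lemma~\ref{lemma:compose} to split the pair event into two single-row comparisons. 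Relatedly, a ``linear-sized subset of users whose pairwise shared structure is negligible'' does not exist in the relevant regime $\alpha_1,\beta_1=\Theta(\log n/n)$; the paper instead takes only $r_1=n/\log^2 n$ nodes and conditions on their being mutually isolated (Lemma~\ref{lemma:strong}), which suffices because one only needs $n^{\Omega(1)}$ independent failure events. A Fano/mutual-information bound would only yield a weak converse, so the independence-boosting step, together with a reverse Chernoff bound with the exact constant (Chernoff--Stein alone does not give the needed lower bound on the crossing probability at threshold zero), must carry the whole strong-converse claim. (iii) A small but real imprecision: $h(\theta)$ is \emph{one minus} the Bhattacharyya coefficient of $\mathrm{Bern}(\theta)$ versus $\mathrm{Bern}(1-\theta)$; the per-entry exponent $p\,h(\theta)$ arises from the erasure-composed observation, whose Bhattacharyya coefficient is $1-p\,h(\theta)$, via $-\log(1-p\,h(\theta))\sim p\,h(\theta)$ as $p=o(1)$ --- not from ``$mp$ observed entries each worth $h(\theta)$.'' Your final constants are correct, but the justification as written would not survive scrutiny, and in the converse the distinction is exactly what makes the bound tight.
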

The proof of Theorem~\ref{thm:simple} is presented in Section~\ref{sec:simple}. For the achievability part in~\eqref{eq:achievability}, the estimator $\Phi$ is chosen to be the {\it maximum likelihood (ML) estimator} $\phi_{\text{ML}}$. The converse presented in~\eqref{eq:converse} is the so-called  \emph{strong converse}~\cite{wolfowitz2012coding}. It states that as long as $p$ is smaller than the threshold in~\eqref{eqn:p_star}, the error probability of {\em any} estimator asymptotically goes to \emph{one}.  

Some additional remarks are in order.

\begin{enumerate}[wide, labelwidth=!, labelindent=0pt]
\item  Theorem~\ref{thm:simple} implies that the MSP is 
\begin{equation}
	p^*= \max \left\{ \frac{(2  - I_{1}) \log n}{2\ttt m}, \frac{(2  - I_{2})\log m}{2\ttt n} \right\}. \label{eqn:p_star}
	\end{equation}
When $2-I_1$ and $2-I_2$ are positive and do not scale with $m$ and $n$ respectively, the sample complexity is of the order $\Theta(\max\{n \log n, m \log m\})$. 
Intuitively speaking, the first term in the right-hand side (RHS) of~\eqref{eqn:p_star} is the threshold for recovering clusters of users, while the second term in the RHS of~\eqref{eqn:p_star} is the threshold for recovering clusters of movies.
  In fact, from~\eqref{eq:achievability} and~\eqref{eq:converse}, we see that if $n\log n=c m\log m$ for some $c>1$,\footnote{This is a practically relevant regime since the number of users usually exceeds the number of movies.} the {\em normalized sample complexity} is 
	\begin{equation}
	 \frac{ nmp^* }{ n\log n  }= \max\left\{ \frac{2-I_1}{2\, h(\theta)} , \frac{2-I_2}{2\, c \, h(\theta)}   \right\}. 
	  \end{equation}  

\item Recall from standard results in community detection~\cite{abbe2015exact, mossel2015consistency} that one can recover the clusters of users (resp.\ movies) based on the social  (resp.\ movie) graph   only when $I_1 > 2$ (resp.\ $I_2 > 2$). Hence, when both $I_1 > 2$ and $I_2 > 2$, samples of the rating matrix are no longer needed. This is also reflected in our result---the MSP $p^* = 0$ when both $I_1 > 2$ and $I_2 > 2$.	  
	  
\item The MSP in~\eqref{eqn:p_star} is an increasing function of the personalization parameter $\theta$. This dovetails with our intuition since more samples are needed if there are more ratings  deviating from the nominal ones.
		  
\item The availability of  graphs (manifested in positive $I_1$ and $I_2$) in general helps to reduce the sample complexity.
\begin{enumerate}[label=(\roman*)]
\item  When $n = m$, we highlighted in Section~\ref{sec:contribution} that observing both graphs helps to reduce the MSP, while observing only one graph is equivalent to observing neither; thus the availability of {\em both} graphs produces a synergistic effect. This is because in the absence of graphs (i.e., $I_1 = I_2 = 0$), the first and the second terms in~\eqref{eqn:p_star} are equal, implying that recovering the clusters of users and movies require the same number of samples. Thus, observing both graphs reduces the MSP, which makes intuitive sense because the social graph (resp. movie graph) is helpful in clustering users (resp. movie). In contrast, only having the social graph fails to reduce
the number of samples needed for clustering movies. 

\item When $n > m$ (as illustrated in Fig.~\ref{fig:2}), the availability of the social graph $G_1$ always helps to reduce the MSP. This is because in the absence of graphs, the MSP in~\eqref{eqn:p_star} is dominated by the first term (i.e., more samples are needed to recover the clusters of users than to recover the clusters of movies). Thus, having a positive $I_1$ reduces the MSP, and the MSP decreases \emph{linearly} with $I_1$ when $I_1 \le I^*$ (i.e., $I_1$ is sufficiently small such that the first term in~\eqref{eqn:p_star} is dominant). We also note  that when $I_1 > I^*$ and $I_2 = 0$ (i.e., $G_1$ is of sufficiently high quality but the movie graph $G_2$ is unavailable), the sample complexity gain is ``saturated''. This is because recovering the clusters of users is no longer the dominant task; instead, more samples are required to recover the clusters of movies (or equivalently, the second term in~\eqref{eqn:p_star} becomes  dominant). As a consequence, the movie graph helps to further reduce the MSP. Therefore, observing two graphs (with $I_1 > I^*$ and $I_2 > 0$) is strictly better than observing only one graph.
 \item When $m > n$, observing the movie graph $G_2$ helps to reduce the MSP, and the social graph is helpful only when $G_2$ is of sufficiently high quality. The reason is similar and symmetric to case (ii).
\end{enumerate}
\end{enumerate}

\begin{remark}{\em
As mentioned in the introduction, Ahn {\em et al.}~\cite{ahn2018binary} considered a binary matrix completion problem with a single social graph. A key message therein is that the social graph helps to reduce the sample complexity when the number of users is relatively large compared to that of the items, and does not help otherwise. We note that Model 1 is similar (but not identical) to the model in~\cite{ahn2018binary} when the movie graph is unavailable (i.e., $I_2 = 0$). As discussed earlier and illustrated in Fig.~\ref{fig:12}, the social graph is helpful only when the number of users is larger than that of movies; this coincides with the key message in~\cite{ahn2018binary} at a high level. }
\end{remark}

\begin{remark} \label{remark:1} {\em 
Model 1 is   related to the so-called labelled or weighted SBM problem~\cite{heimlicher2012community,xu2014edge,lelarge2015reconstruction,xu2020opt, yun2016optimal}, in which different labels are assigned to different edges \emph{probabilistically}. To see this, one can map the two symmetric SBMs into a single unified SBM that consists of all the user nodes and movie nodes, and rating information can be viewed as labels between user and movie nodes.   One major distinction  is that prior works assume that the sizes of communities scale {\em linearly} with one another, whereas this work assumes the communities are of sizes $\Theta(n)$ and $\Theta(m)$, and $n$ and $m$ are allowed to tend to infinity at different rates, subject to  $m = \omega(\log n)$ and $n = \omega(\log m)$. Model 2 below, however, is completely different from the labelled or weighted SBM problem and, as we mentioned, motivated by our desire to situate our models closer to real world settings.}
\end{remark}

\subsection{Model 2 (The Model with Atypical Movies)}
Recall that the personalization parameters for action and romance movies are respectively $\ta$ and $\tr$.
We first define two functions of $\ta$ and $\tr$ as follows:
\begin{align*}
\tau_{uv} &\triangleq 1 \!-\! \sqrt{\tuv\tuvp} \!-\! \sqrt{(1-\tuv)(1-\tuvp)}, \text{ for } u,v \!\in \!\{\mathrm{a},\mathrm{r}\}, \\*
\nu_{uv} &\triangleq 1 \!-\! \sqrt{\tuv(1-\tuvp)} \!-\! \sqrt{\tuvp(1-\tuv)}, \text{ for } u,v \!\in\! \{\mathrm{a},\mathrm{r}\}.
\end{align*}

Theorems~\ref{thm:1} and~\ref{thm:2} below respectively provide an upper bound and a lower bound on $p$, as a function of $I_1, I_2, \ta, \tr$. In particular, the expressions for two different regimes ($\ta \ne \tr$ and $\ta = \tr$) are different.
\begin{theorem} \label{thm:1}
	(a) Consider the regime in which $\ta \ne \tr$. For any $\epsilon > 0$, if
	\begin{align}
	p &\ge \max\bigg\{\frac{(2(1+\epsilon) - I_{1}) \log n}{(\taaa+\trrr)m}, \frac{(1+\epsilon)\log m}{\min\{\taaa,\trrr\}   n}, \notag \\	
	 &\qquad\qquad\qquad \ \qquad\qquad\qquad\frac{(2(1+\epsilon) - I_{2})\log m}{2\tar n} \bigg\}, \label{eq:achievability1}
	\end{align}
then there exists a sequence of estimators $\Phi= \{\phi_n\}_{n=1}^\infty$ satisfying $\lim_{n \to \infty} P_{\emph{err}}(\phi_n) = 0$.
	
	(b) Consider the regime in which $\ta = \tr$. For any $\epsilon > 0$, if $I_{2} \ge 2(1+\epsilon)$ and
	\begin{align}
	p \!\ge\! \max\!\left\{\frac{(2(1+\epsilon) - I_{1})\log n}{(\taaa+\trrr)m}, \frac{(1+\epsilon)\log m}{\min\{\taaa,\trrr\}   n} \right\}, \label{eq:achievability2}
	\end{align}
	 then there exists  $\Phi\!=\! \{\phi_n\}_{n=1}^\infty$ satisfying $\lim_{n \to \infty} P_{\emph{err}}(\phi_n) \!=\! 0$.
\end{theorem}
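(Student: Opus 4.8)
The plan is to establish \eqref{eq:achievability1} and \eqref{eq:achievability2} by analyzing the maximum-likelihood estimator $\phi_{\text{ML}}$ over the parameter space $\Xi$, following the same strategy as the achievability part of Theorem~\ref{thm:simple} but with extra ingredients forced by the atypical movies. Writing $L(\xi')$ for the likelihood that $\phi_{\text{ML}}$ assigns to a candidate $\xi'$ (the probability of the observed $(\U,G_1,G_2)$ under the model parameterized by $\xi'$), a union bound gives
\[
P_{\text{err}}(\phi_{\text{ML}})\;\le\;\sum_{\xi'\in\Xi:\,\xi'\ne\xi}\PP_\xi\big(L(\xi')\ge L(\xi)\big),
\]
and the crux is to organize the alternatives $\xi'$ by how they differ from the true $\xi$. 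The ``atomic'' perturbations are: swapping one user between $\xi_\mathcal{M}$ and $\xi_\mathcal{W}$; moving one movie between $\xi_{\At}$ and $\xi_{\Aa}$ (or between $\xi_{\Rt}$ and $\xi_{\Ra}$); and moving one movie from an action sub-cluster to a romance sub-cluster (or vice versa). Each pairwise probability $\PP_\xi(L(\xi')\ge L(\xi))$ is bounded by a Bhattacharyya/Chernoff inequality, i.e.\ by a product of per-entry coefficients over the alphabet $\{0,1,\perp\}$; these coefficients equal $1$ when the two nominal ratings agree and the perturbation parameter is unchanged, $1-p\,\taaa$ (resp.\ $1-p\,\trrr$) when the nominal rating of an action (resp.\ romance) column is flipped (note $\taaa=h(\ta)$, $\trrr=h(\tr)$), $1-p\,\tar$ when two columns carry the \emph{same} nominal ratings but perturbations $\ta$ and $\tr$, and $1-p\,\tarr$ when the nominals are flipped \emph{and} the perturbation changes between $\ta$ and $\tr$. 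The graphs contribute, exactly as in standard exact community detection~\cite{abbe2015exact,mossel2015consistency}, a factor $n^{-I_1/2}$ for each mislabeled user (from $G_1$) and $m^{-I_2/2}$ for each movie that moves between the action and romance clusters (from $G_2$); a movie moved \emph{within} a cluster (typical $\leftrightarrow$ atypical) leaves the $G_2$-partition unchanged and earns no such factor.

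Carrying out these three cases yields the three terms. For a single user swap the nominal ratings of all $m/2$ action columns and all $m/2$ romance columns flip, so the pairwise term is at most $n^{-I_1/2}(1-p\,\taaa)^{m/2}(1-p\,\trrr)^{m/2}\le\exp(-\tfrac12 I_1\log n-\tfrac12 p\,m(\taaa+\trrr))$; a union bound over the $n$ users makes this $o(1)$ once $p\,m(\taaa+\trrr)+I_1\log n\ge 2(1+\epsilon)\log n$, the first term of both \eqref{eq:achievability1} and \eqref{eq:achievability2}. For a within-cluster move $\At\to\Aa$ the column's nominal flips for all $n$ users while its perturbation stays $\ta$, giving the pairwise bound $(1-p\,\taaa)^n$ with no graph factor; a union bound over the $m$ movies, together with the symmetric $\Rt\leftrightarrow\Ra$ statement, needs $p\,n\min\{\taaa,\trrr\}\ge(1+\epsilon)\log m$, the second term. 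For a cross-cluster move the least distinguishable case is $\At\to\Ra$ (or $\Aa\to\Rt$): the sub-clusters share the \emph{same} nominal ratings and differ only in perturbation, so the bound is $m^{-I_2/2}(1-p\,\tar)^n$, which dominates the flipped-nominal cross move $m^{-I_2/2}(1-p\,\tarr)^n$ because $\tarr-\tar=(\sqrt{\ta}-\sqrt{1-\ta})(\sqrt{\tr}-\sqrt{1-\tr})\ge0$ for $\ta,\tr\in(0,\tfrac12)$; a union bound over the $m$ movies then requires $2\,p\,n\,\tar+I_2\log m\ge2(1+\epsilon)\log m$, the third term of \eqref{eq:achievability1}. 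When $\ta=\tr$ we have $\tar=0$, so this atomic move carries no rating-matrix information and can only be resolved by the movie graph alone, forcing $I_2\ge2(1+\epsilon)$ --- precisely the extra hypothesis of part~(b), under which the third term is vacuous and only \eqref{eq:achievability2} remains (intuitively, one first recovers the action/romance partition from $G_2$, and then the matrix together with $G_1$ handles the user clusters and the typical/atypical refinement).

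The remaining and hardest step is to show that \emph{compound} alternatives --- differing from $\xi$ on $a$ users and $b$ movies with $a+b\ge2$ --- do not dominate. I plan to lower-bound the matrix part of the Chernoff exponent of such a $\xi'$ by summing the per-user exponents over its $m-b$ unchanged columns and the per-movie exponents over its $n-a$ unchanged rows, together with the graph factors; in any configuration that is not already wasteful $a=o(n)$ and $b=o(m)$, so dropping $b$ columns and $a$ rows perturbs the exponent negligibly, while the number of such $\xi'$ is at most $\binom{n}{a}\binom{m}{b}3^{b}\le n^{a}(3m)^{b}$. Since the atomic analysis already drives each per-user exponent above $(1+\epsilon)\log n$ and each per-movie exponent above $(1+\epsilon)\log m$, the exponential decay outpaces this polynomial count, and a geometric-series summation (as in~\cite{ahn2018binary} and the proof of Theorem~\ref{thm:simple}) bounds the compound contribution by $o(1)$; a minor subtlety is that an entry at the intersection of a swapped user and a swapped movie may have a partly cancelled contribution, but this affects at most $ab$ entries and is absorbed. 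Putting the atomic and compound bounds together and folding constants into $\epsilon$ shows $P_{\text{err}}(\phi_{\text{ML}})\to0$ whenever $p$ exceeds the right-hand side of \eqref{eq:achievability1} (resp.\ \eqref{eq:achievability2}, under $I_2\ge2(1+\epsilon)$). The principal obstacle is making this last estimate airtight when $a$ or $b$ is a non-vanishing fraction of $n$ or $m$; there one peels off mislabeled elements one at a time, or reduces to the atomic case via an ``almost-exact recovery implies exact recovery'' argument in the style of the community-detection literature.
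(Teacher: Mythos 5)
Your proposal is essentially the paper's own achievability argument: the ML estimator with a union bound, pairwise Chernoff/Bhattacharyya bounds whose per-entry exponents are exactly the paper's $\tau_{uv}$ and $\nu_{uv}$, graph contributions of $n^{-I_1/2}$ and $m^{-I_2/2}$ per misclassified node, and the correct identification of the three dominant error types---including the reduction of cross-cluster movie moves to the same-nominal case via $\tarr \ge \tar$ and the role of $I_2 \ge 2(1+\epsilon)$ when $\ta = \tr$---so the atomic analysis reproduces all terms of \eqref{eq:achievability1} and \eqref{eq:achievability2}. The one step you leave open (compound alternatives, in particular when a constant fraction of rows or columns is misclassified) is closed in the paper without the extra machinery you anticipate: alternatives are grouped into type classes $\Xi_{\xi}(k_1,k_2,t_{\mathrm{aa}},t_{\mathrm{rr}})$, the intersection cancellations you flag are accounted for exactly through the factors $(m-2k_2)$ and $(n-4k_1)$ in the exponent, and then a two-case analysis finishes: for $k_1 \le \delta_{\epsilon} n$, $k_2 \le \delta_{\epsilon} m$ the exponent beats the polynomial type-class count (of order $n^{2k_1} m^{2k_2+t_{\mathrm{aa}}+t_{\mathrm{rr}}}$), exactly as in your atomic computation, while for large overlaps a crude bound suffices, since the combined graph-plus-matrix exponent is already $\Omega(m\log m + n\log n)$ and the entire parameter space has size at most $2^{n+3m}$; no peeling or ``almost-exact implies exact'' reduction is needed. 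A minor bookkeeping point: because the clusters are constrained to sizes $n/2$ and $m/2$, your atomic moves are really pair swaps (corresponding to $k_1=1$ or $k_2=1$), which changes constants in the counting but none of the thresholds.
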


Again, the estimator $\Phi$ can be chosen as the ML estimator, and the proof sketch is provided in Section~\ref{sec:thm1}. A few remarks concerning Theorem~\ref{thm:1} are in order. 

\begin{enumerate}[wide, labelwidth=!, labelindent=0pt]
\item Intuitively speaking, the first term in the RHS of~\eqref{eq:achievability1} is the threshold for recovering clusters of users, the second term is the threshold for identifying atypical movies, and the third term is the threshold for recovering clusters of movies. When $\ta = \tr$, recovery of the clusters of movies is guaranteed by requiring the movie graph to satisfy $I_2 > 2$  (see point~4 below), thus the RHS of~\eqref{eq:achievability2} only contains two terms.

\item The difference between $\ta$ and $\tr$ is an important statistic for distinguishing action and romance movies. As the distance between $\ta$ and $\tr$ decreases (i.e., it is harder to distinguish action and romance movies), the third term of~\eqref{eq:achievability1} becomes larger (since $\tar$ decreases correspondingly). This means that it may require more samples for recovery.   

\item When $\ta = \tr$, the expression in~\eqref{eq:achievability1} (which is intended for $\ta \ne \tr$) is invalid since $\tar = 0$; instead, the success criterion is given by Theorem~\ref{thm:1}(b). It is interesting to note that the success criterion in Theorem~\ref{thm:1}(b) can be interpreted as a limiting consequence of~\eqref{eq:achievability1} as $\ta \to \tr$. That is, as $\tar \to 0$, we require $I_2 >2(1+\epsilon)$ so that the third term of~\eqref{eq:achievability1} is non-positive. This yields the success criterion in Theorem~\ref{thm:1}(b). 
On the other hand, when $I_2 < 2$, no achievability result is provided---indeed, Theorem~\ref{thm:2}(b) below shows that exact recovery is impossible. 

\item The reason why $I_2 > 2$ is necessary for $\ta = \tr$ is as follows. Without the movie graph $G_2$, typical action movies and atypical romance movies are statistically indistinguishable, since both of them attract (on average) $\frac{(1-\ta)n}{2}$ men and $\frac{\ta n}{2}$ women. The same is true for atypical action movies and typical romance movies. Thus, $G_2$ is the only piece of information that can be exploited to recover clusters of movies. This leads to the necessity of $I_2 > 2$ as per~\cite{abbe2015exact, mossel2015consistency}. 

In contrast, when $\ta \ne \tr$, the rating information can be exploited (together with $G_2$) to distinguish typical action movies and atypical romance movies, since the former  attracts (on average) $\frac{(1-\ta)n}{2}$ men and $\frac{\ta n}{2}$ women, whereas the latter attracts (on average) $\frac{(1-\tr)n}{2}$ men and $\frac{\tr n}{2}$ women. This is why $I_2 > 2$ is not necessary, as reflected in Theorem~\ref{thm:1}(a). 

\item When both $I_1 > 2$ and $I_2 > 2$, the observation of the rating matrix is still needed for exact recovery; this is in   contrast to Model 1. This is because recovering the nominal rating matrix in Model 2 additionally requires the learner to identify atypical movies, and observing the sub-sampled rating matrix is crucial for identifying atypical movies. 

\item When both graphs are available (i.e., $I_1 > 0$ and $I_2 > 0$), our follow-up work~\cite{zhang2020mc2g} proposes and analyzes a computationally efficient algorithm that works in a sequential manner and meets the information-theoretic achievability bound in Theorem~\ref{thm:1}. Extensive numerical experiments therein also help to validate the correctness and predictive abilities  of Theorem~\ref{thm:1}. 
\end{enumerate}

\begin{theorem}\label{thm:2}
	(a) Consider the regime in which $\ta \ne \tr$. For any $\epsilon > 0$, if 
	\begin{align}
	p &< \max\bigg\{\frac{(2(1-\epsilon) - I_{1})\log n}{(\taaa+\trrr)m}, \frac{(1-\epsilon)\log m}{\min\{\taaa,\trrr\}   n},  \notag \\
	&\qquad\qquad\qquad\qquad\qquad\qquad \ \frac{((1-\epsilon) - I_{2})\log m}{2\tar n} \bigg\}, \label{eq:converse1}
	\end{align}
	then $\lim_{n \to \infty} P_{\emph{err}}(\phi_n) = 1$ for any estimator $\Phi$.
	
	(b) Consider the regime in which $\ta = \tr$. For any $\epsilon > 0$, if $I_{2} < 2(1-\epsilon)$ or
	\begin{align}
	p\! <\! \max\left\{\!\frac{(2(1-\epsilon) - I_{1})\log n}{(\taaa+\trrr)m}, \frac{(1-\epsilon)\log m}{\min\{\taaa,\trrr\}   n} \right\}, \label{eq:converse2}
	\end{align}
	then $\lim_{n \to \infty} P_{\emph{err}}(\phi_n) = 1$ for any estimator $\Phi$.
\end{theorem}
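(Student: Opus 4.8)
The plan is to prove a \emph{strong converse}: even the maximum-likelihood estimator $\phi_{\mathrm{ML}}$ (which is Bayes-optimal under the uniform prior on $\Xi$) recovers $\xi$ with probability tending to zero. Since $p$ lying below the maximum in~\eqref{eq:converse1} (resp.~\eqref{eq:converse2}) means it lies below at least one of its constituent thresholds, it suffices to prove \emph{separately} that each threshold condition forces $P_{\mathrm{err}}(\phi_n)\to 1$. Fixing one such condition, from $P_{\mathrm{err}}(\phi)\ge \frac{1}{|\Xi|}\sum_{\xi}\PP_\xi(\phi\ne\xi)\ge \frac{1}{|\Xi|}\sum_{\xi}\PP_\xi(\phi_{\mathrm{ML}}\ne\xi)$ the task reduces to showing that, for a uniformly random $\xi$ with observations $(\U,G_1,G_2)\sim\PP_\xi$, with probability $\to 1$ there is an alternative parameter $\xi'\in\Xi\setminus\{\xi\}$ with $\PP_{\xi'}(\U,G_1,G_2)>\PP_{\xi}(\U,G_1,G_2)$, so that $\xi$ is not the ML point.

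Such a $\xi'$ is produced by a second-moment (Paley--Zygmund) argument over a family of \emph{local perturbations} of $\xi$ chosen to match the threshold at hand. The prototype is the first term: call a man $i\in\xi_\mathcal{M}$ \emph{ambiguous} if moving $i$ alone into $\xi_\mathcal{W}$ increases the likelihood by at least a large constant. Its log-likelihood ratio is a sum of independent terms --- over the $G_1$-edges incident to $i$ (intra/inter probabilities swap) and over the $\mathrm{Bern}(p)$-observed entries of row $i$ of $\U$ (the nominal value flips in every column) --- so a Chernoff estimate at tilt $t=\tfrac12$ (optimal by the symmetry of the binary model, and matched from below by a Bahadur--Rao refinement) gives $\PP(i\text{ ambiguous})=n^{-I_1/2-pm(\taaa+\trrr)/(2\log n)+o(1)}$, the matrix exponent being $\tfrac12 pm(\taaa+\trrr)$ after separating the $m/2$ action columns (per-entry exponent $\taaa$) from the $m/2$ romance columns (per-entry exponent $\trrr$). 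Hence the expected number of ambiguous men is $\Theta(n)\cdot n^{-I_1/2-pm(\taaa+\trrr)/(2\log n)}$, which tends to infinity exactly when $p<\frac{(2-I_1)\log n}{(\taaa+\trrr)m}$; since ambiguity events of distinct men rest on disjoint rows of $\U$ and almost-disjoint edge sets of $G_1$, the count concentrates, so w.h.p.\ an ambiguous man and, symmetrically, an ambiguous woman both exist, and swapping them gives the required balanced $\xi'$.

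The remaining terms follow the same template. For the second term I flip the typical/atypical status of a single movie (e.g.\ a typical action movie becomes atypical action, staying in $\xi_\mathcal{A}$, so $G_2$ is unchanged); only column $j$ of $\U$ discriminates, with per-entry exponent $\taaa$ or $\trrr$, and the expected number of ``flippable'' movies, $\Theta(m)\cdot m^{-pn\min\{\taaa,\trrr\}/\log m}$, diverges precisely when $p<\frac{\log m}{\min\{\taaa,\trrr\}n}$. For the third term ($\ta\ne\tr$) I reassign a typical action movie as an \emph{atypical romance} movie, compensated by reassigning a typical romance movie as an atypical action one so that $|\xi_\mathcal{A}|=|\xi_\mathcal{R}|$ is preserved; crucially this leaves every nominal rating of both columns unchanged, so $\U$ contributes only through the change of perturbation level $\ta\leftrightarrow\tr$ (per-entry exponent $\tar$) while $G_2$ contributes from the two movies changing cluster, and the count yields~\eqref{eq:converse1}. (The resulting bound is within a factor of two of the achievability bound in Theorem~\ref{thm:1}(a); the loss comes from the two-movie nature of this perturbation and the correlations it induces in the second-moment estimate, and I do not expect to close it here.) For part~(b), when $\ta=\tr$ the first two arguments apply verbatim (with $\taaa=\trrr=\ttt$); the extra claim that $I_2<2(1-\epsilon)$ alone suffices reduces to the classical SBM exact-recovery converse~\cite{abbe2015exact,mossel2015consistency}: when $\ta=\tr$, a typical action movie and an atypical romance movie induce \emph{identical} column laws in $\U$, so conditioned on $\U$ the cluster ($\xi_\mathcal{A}$ vs.\ $\xi_\mathcal{R}$) of a man-attracting movie is determined by $G_2$ alone; choosing the prior so that $\xi_{\At},\xi_{\Aa},\xi_{\Rt},\xi_{\Ra}$ are each of size $\Theta(m)$, the SBM converse supplies w.h.p.\ a swappable opposite-cluster pair (a typical action and an atypical romance movie) whose swap leaves the $\U$-likelihood, hence the full likelihood, non-decreasing.

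The main obstacle is the large-deviation bookkeeping behind the exponents $\taaa+\trrr$, $\min\{\taaa,\trrr\}$, $\tar$: each perturbation's log-likelihood ratio is a sum of independent contributions from graph edges and from $\mathrm{Bern}(p)$-sampled matrix entries, and one must (i) locate the optimal Chernoff tilt (here $\tfrac12$, which is precisely what produces the square-root quantities), (ii) give a matching lower bound on $\PP(\mathrm{LLR}>0)$ so that the expected number of bad perturbations grows polynomially above threshold, and (iii) bound the variance of that count, which relies on perturbations indexed by distinct users/movies touching almost-disjoint parts of the data --- step~(iii) being the delicate point for the two-movie perturbation of the third term. A secondary issue is choosing a prior on $\Xi$, in particular on the unknown sizes of $\xi_{\Aa},\xi_{\Ra}$, for which all three counting arguments and the SBM reduction go through at once; taking the atypical sets uniformly at random of size $\Theta(m)$ should suffice.
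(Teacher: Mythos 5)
Your proposal is correct in outline and follows essentially the same route as the paper: you reduce to the ML estimator under a uniform prior and, for each threshold, exhibit the same minimal local perturbations the paper uses as its type classes (a single man--woman swap for the first term, a single typical/atypical flip for the second, the nominal-rating-preserving action$\leftrightarrow$romance two-movie swap with per-entry exponent $\tar$ for the third, and the $G_2$-only/SBM argument when $\ta=\tr$), together with a reverse-Chernoff lower bound and (near-)independence of many such perturbations, incurring the same factor-of-two loss for the same structural reason the paper notes in Remark~\ref{remark:challenge}. The only difference is technical bookkeeping: you propose a second-moment/Paley--Zygmund count over ambiguous nodes with Bahadur--Rao, whereas the paper conditions on the isolated-node events $\Delta_1,\Delta_2$ to make the bad events exactly independent and invokes the Chernoff-tightness lemma (Lemma~\ref{lemma:chernoff1}); both devices serve the same purpose.
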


The proof sketch is provided in Section~\ref{sec:thm2}.
When $\ta = \tr$, as we have matching upper and lower bounds, a sharp threshold of $p^*$ is established. When $\ta \ne \tr$, the characterization of  $p^*$ is order optimal---in particular, the upper and lower bounds match exactly for a wide range of parameter space, and match up to a constant factor of two for the remaining parameter space. We discuss the reason for this gap and the challenges involved in the converse proof; see Remark~\ref{remark:challenge}.


The following example considers the case $n = 5m$, and quantifies the benefits of graph side-information by analyzing the achievability bound in Theorem~\ref{thm:1}. 

\begin{figure}[t!]
	\begin{subfigure}[t]{0.24\textwidth}
		\includegraphics[width=\textwidth]{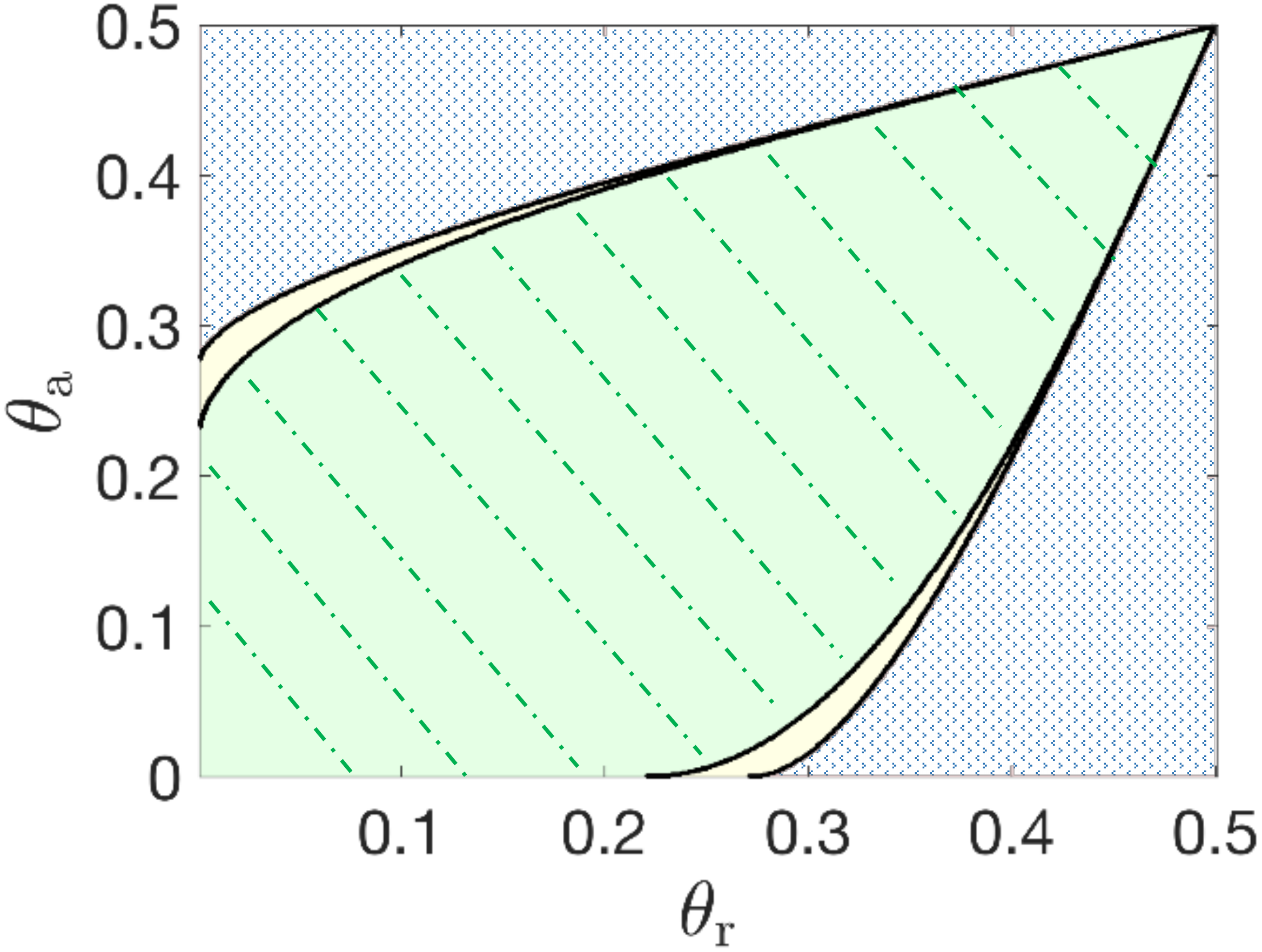}
		\caption{\small The social graph-sensitive (yellow), movie graph-sensitive (green-shaded), and atypicality-sensitive (blue-dotted) regions.}
		\label{fig:3a}
	\end{subfigure}%
	~
	\begin{subfigure}[t]{0.24\textwidth}
		\includegraphics[width=\textwidth]{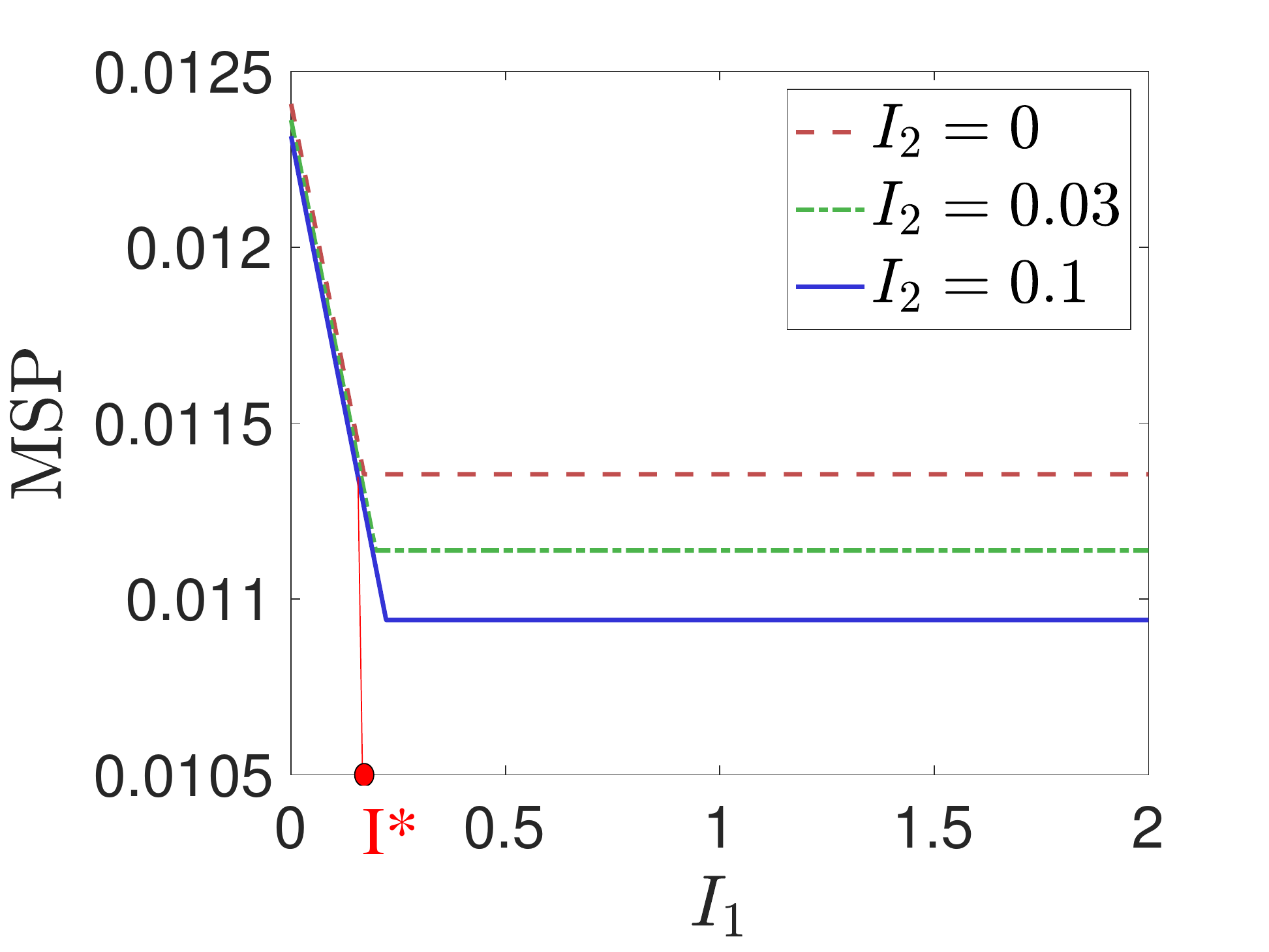}
		\caption{\small The upper bound on the MSP (abbreviated as MSP below) as a function of $I_1$ for $(\ta, \tr) = (0.3, 0.03)$.}
		\label{fig:3c}
	\end{subfigure}%

\begin{subfigure}[t]{0.24\textwidth}
	\includegraphics[width=\textwidth]{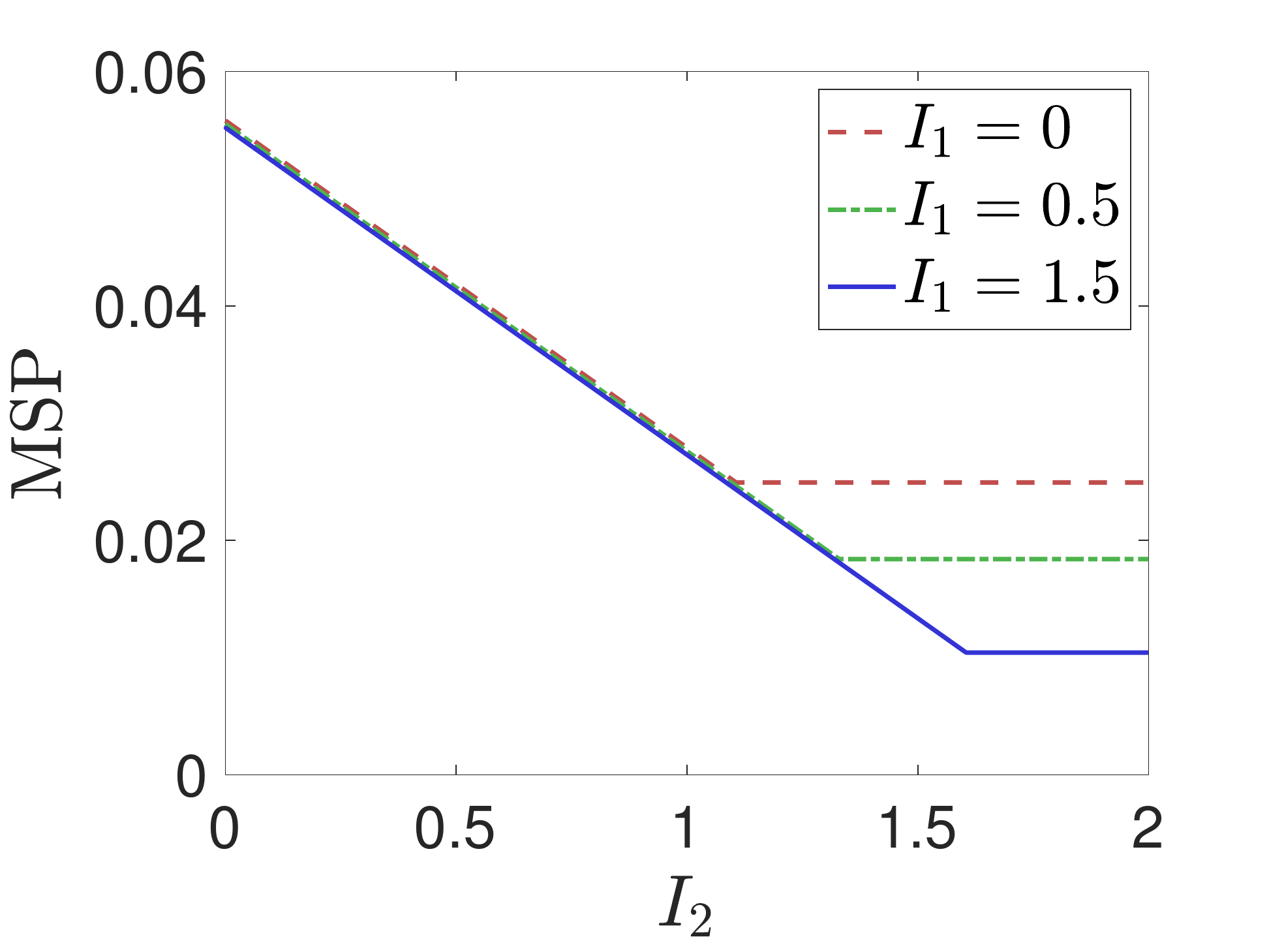}
	\caption{\small  MSP versus $I_2$ for $(\ta, \tr) = (0.3, 0.15)$.}
	\label{fig:3d}
\end{subfigure}%
~
\begin{subfigure}[t]{0.24\textwidth}
	\includegraphics[width=\textwidth]{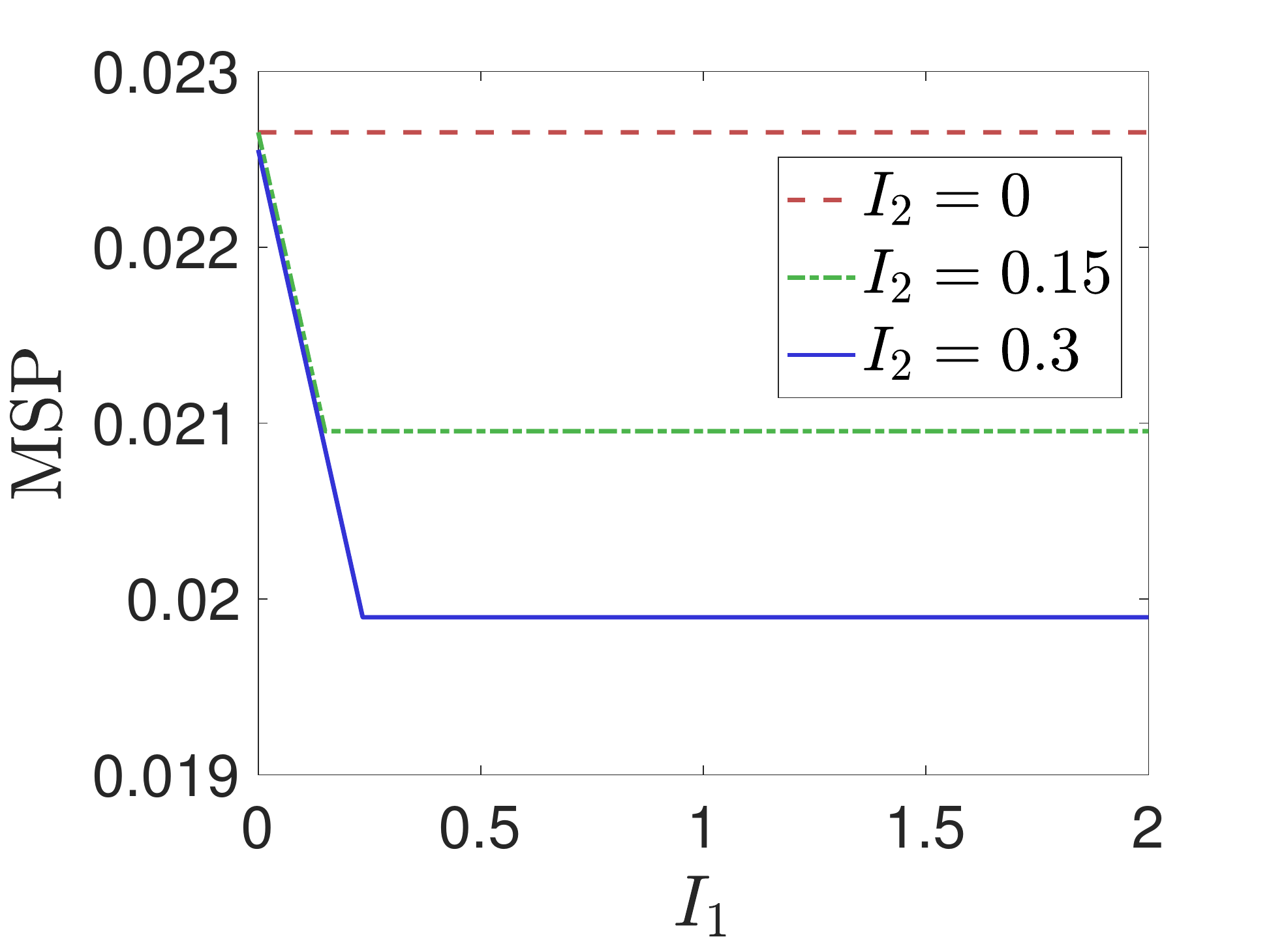}
	\caption{\small MSP versus $I_1$ for $(\ta, \tr) = (0.35, 0.1156)$.}
	\label{fig:3e}
\end{subfigure}
	\caption{The sample complexity gains due to the social and movie graphs for $n = 5m = 10,000$.}
	\vspace{-10pt}
\end{figure}

\begin{example}{\em

First note that the achievability bound in Theorem~\ref{thm:1} depends critically on the values of $(\ta, \tr)$. Fig.~\ref{fig:3a} considers the achievability part in~\eqref{eq:achievability1} and partitions the collections of $(\ta,\tr)$-pairs into three different regimes---the \emph{social graph-sensitive regime} (the yellow region), \emph{movie graph-sensitive regime} (the green-shaded region), and \emph{atypicality-sensitive regime} (the blue-dotted region).

\begin{enumerate}[wide, labelwidth=!, labelindent=0pt]
	
	\item When $(\ta,\tr)$ belongs to the social graph-sensitive regime, observing the social graph $G_1$ is helpful in reducing the \emph{upper bound on the MSP} (which is abbreviated as \emph{MSP} in the following, with a slight abuse of terminology). This is because in this regime and in the absence of graphs (i.e., $I_1 = I_2 = 0$), the MSP is dominated by the first term in~\eqref{eq:achievability1} (i.e., the dominant task is to recover the clusters of users, rather than to   recover the clusters of movies and to identify atypical movies).  Thus, having a positive $I_1$ reduces the MSP. This is reflected in Fig.~\ref{fig:3c}, which plots the MSP as a function of $I_1$ for $(\ta,\tr) = (0.3, 0.03)$ and three different values of~$I_2$. When $I_1 > I^*$ and $I_2 = 0$, the sample complexity gain due to the social graph is ``saturated'', and the movie graph $G_2$ helps to further reduce the MSP. Therefore, observing two graphs (with $I_2 > 0$ and $I_1 > I^*$) is strictly better than observing only one graph; showing a synergistic effect of the two graphs. The reason is similar to that for Fig.~\ref{fig:2}.
	
	\item When $(\ta,\tr)$ belongs to the movie graph-sensitive regime, observing the movie graph $G_2$ is helpful to reduce the MSP. This is reflected in Fig.~\ref{fig:3d}, which plots the MSP as a function of $I_2$ for $(\ta,\tr) = (0.3, 0.15)$ and three different values of $I_1$. This regime is symmetric to the social graph-sensitive regime.

	\item When $(\ta,\tr)$ belongs to the boundary between social graph-sensitive and movie graph-sensitive regimes,  observing both graphs reduces the MSP, while observing only one graph is equivalent to observing neither. The reason for this synergistic effect is similar to that for Fig.~\ref{fig:1}. This observation is also illustrated in Fig.~\ref{fig:3e}, which plots the MSP  as functions of $I_1$ for a boundary point $(\ta,\tr) = (0.35, 0.1156)$.
	

	\item When $(\ta,\tr)$ belongs to the atypicality-sensitive regime, neither the social graph $G_1$ nor the movie graph $G_2$ helps to reduce the MSP. This is because the MSP is dominated by the second term in~\eqref{eq:achievability1} (i.e., more samples are required  to identify atypical movies than to detect clusters of users and movies), thus having positive $I_1$ and/or $I_2$ does not provide any gains.

	\item When $\ta \!=\! \tr$, the movie graph $G_2$ is helpful only if $I_2 > 2$. 
\end{enumerate}

}

\end{example}

\section{Proof of Theorem~\ref{thm:simple}} \label{sec:simple}
We first introduce an important technical lemma.
\begin{lemma}[Chernoff bound]\label{lemma:generic} 
	Consider a random variable $X$. For any $a \in \mathbb{R}$, 
	$$\PP(X>a) \le \min_{t>0} e^{-ta}\cdot \E(e^{tX}).$$ 
\end{lemma}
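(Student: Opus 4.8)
The plan is to reduce the statement to Markov's inequality applied to a suitably chosen nonnegative random variable, and then optimize over a free parameter. First I would fix an arbitrary $t > 0$. Since the map $x \mapsto e^{tx}$ is strictly increasing on $\mathbb{R}$, the event $\{X > a\}$ is contained in the event $\{e^{tX} \ge e^{ta}\}$, so $\PP(X > a) \le \PP(e^{tX} \ge e^{ta})$.

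Next I would invoke Markov's inequality: because $e^{tX} \ge 0$ almost surely and $e^{ta} > 0$, we have $\PP(e^{tX} \ge e^{ta}) \le \E(e^{tX})/e^{ta} = e^{-ta}\,\E(e^{tX})$. Chaining the two bounds gives $\PP(X > a) \le e^{-ta}\,\E(e^{tX})$ for every $t > 0$. If $\E(e^{tX}) = +\infty$ the inequality is vacuously true; it carries content precisely when the moment generating function of $X$ is finite at $t$, which is the regime in which the lemma will be applied later in the paper.

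Finally, since the displayed bound holds for each individual $t > 0$, it holds in particular for the tightest such choice, so taking the infimum over $t > 0$ of the right-hand side yields $\PP(X > a) \le \inf_{t>0} e^{-ta}\,\E(e^{tX})$, which is the claimed inequality (the $\min$ in the statement being read as the infimum, and attained whenever the optimizing $t$ lies in $(0,\infty)$). There is essentially no obstacle here: the only points requiring a moment of care are that Markov's inequality is being used on $e^{tX}$, whose nonnegativity is automatic, and that the possibility of an infinite moment generating function should be acknowledged so the inequality is correctly understood rather than misread as a vacuous claim.
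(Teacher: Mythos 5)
Your proof is correct and is exactly the standard argument (monotonicity of $x\mapsto e^{tx}$, Markov's inequality applied to the nonnegative variable $e^{tX}$, then optimization over $t>0$); the paper states this lemma as a known fact without proof, and your argument is the canonical one it implicitly relies on. Your remarks about reading $\min$ as $\inf$ and about a possibly infinite moment generating function are appropriate and do not affect correctness.
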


\subsection{Proof of Achievability}  \label{sec:achievability}
Suppose the sample probability $p$ satisfies~\eqref{eq:achievability}. The proof relies on the ML estimator $\phi_{\text{ML}}$. 
For any $\xi \in \Xi$, the negative log-likelihood of $\xi$ is defined as $L(\xi) \triangleq -\log \PP_{\xi}(\U,G_1,G_2)$. The ML estimator $\phi_{\text{ML}}$ takes $(\U,G_1,G_2)$ as input, and outputs the most likely  instance $\xi$ from the parameter space $\Xi$, i.e.,
\begin{align}
\phi_{\text{ML}}(\U, G_1,G_2) =  \underset{\xi \in \Xi}{\operatorname{argmin}} \ L(\xi). \label{eq:ml}
\end{align}

For any instance $\xi \in \Xi$, the independence of the observations $\U,G_1,G_2$ yields:   
\begin{align*}
\PP_{\xi}(\U,G_1,G_2) = \PP_{\xi}(\U)\PP_{\xi}(G_1)\PP_{\xi}(G_2).
\end{align*} 
Let $\Omega \triangleq \{ (i,j)\in [n] \times [m]: \U_{ij} \ne \perp \}$ be the set of $(i,j)$-pairs such that  $\U_{ij}$ is not erased, and $\Pi_{\xi} \triangleq \{(i,j)\in \Omega: (B_{\xi})_{ij} = \U_{ij} \}$ be the set of $(i,j)$-pairs such that the observed actual rating  equals the nominal rating. Thus, 
\begin{align*}
\PP_{\xi}(\U) = p^{|\Omega|}(1-p)^{mn-|\Omega|} (1-\theta)^{|\Pi_{\xi}|} \theta^{|\Omega|-|\Pi_{\xi}|}.
\end{align*}

Recall that the social graph $G_1 = (\mathcal{V}_1, \mathcal{E}_1)$ is generated as per a binary symmetric SBM. Let $\{X_{ii'} \} \stackrel{\text{i.i.d.}}{\sim} \text{Bern}(\alpha_1)$ and $\{Y_{ii'} \} \stackrel{\text{i.i.d.}}{\sim} \text{Bern}(\beta_1)$. Then, a pair of users  $(i,i') \in \e_1$ if one of the following two is true:
\begin{itemize}
	\item $X_{ii'}=1$, and $(i,i')$ belong to the same cluster.
	\item $Y_{ii'}=1$, and $(i,i')$ belong to different clusters.
\end{itemize}   
Similarly, let $\{\tX_{jj'} \} \stackrel{\text{i.i.d.}}{\sim} \text{Bern}(\alpha_2)$ and $\{\tY_{jj'} \} \stackrel{\text{i.i.d.}}{\sim} \text{Bern}(\beta_2)$. For the movie graph $G_2 = (\mathcal{V}_2, \mathcal{E}_2)$, a pair of movies $(j,j') \in \e_2$ if one of the following two is true:
\begin{itemize}
	\item $\tX_{jj'}=1$, and $(j,j')$ belong to the same cluster.
	\item $\tY_{jj'}=1$, and $(j,j')$ belong to different clusters.
\end{itemize}

We also define the number of edges crossing clusters $\xi_{\mathcal{M}}$ and $\xi_{\mathcal{W}}$ in $G_1$ as $e(\xi_{\mathcal{M}},\xi_{\mathcal{W}})$, and the number of edges crossing  clusters $\xi_{\mathcal{A}}$ and $\xi_{\mathcal{R}}$ in $G_2$ as $e(\xi_{\mathcal{A}},\xi_{\mathcal{R}})$. The probabilities of observing $G_1$ and $G_2$ are respectively
\begin{align*}
&\PP_{\xi}(G_1) = \beta_1^{e(\xi_\mathcal{M},\xi_\mathcal{W})}(1-\beta_1)^{(\frac{n}{2})^2-e(\xi_\mathcal{M},\xi_\mathcal{W})} \\
&\qquad\qquad\times \alpha_1^{|\e_1|-e(\xi_\mathcal{M},\xi_\mathcal{W})}(1-\alpha_1)^{2\binom{n/2}{2}-|\e_1|+e(\xi_\mathcal{M},\xi_\mathcal{W})}, \\
&\PP_{\xi}(G_2) = \beta_2^{e(\xi_\mathcal{A},\xi_\mathcal{R})}(1-\beta_2)^{(\frac{m}{2})^2-e(\xi_\mathcal{A},\xi_\mathcal{R})}\\
&\qquad\qquad\times\alpha_2^{|\e_2|-e(\xi_\mathcal{A},\xi_\mathcal{R})}(1-\alpha_2)^{2\binom{m/2}{2}-|\e_2|+e(\xi_\mathcal{A},\xi_\mathcal{R})}.
\end{align*} 
Let $d_1 \triangleq \log\left(\frac{\alpha_1(1-\beta_1)}{\beta_1(1-\alpha_1)}\right)$, $d_2 \triangleq \log\left(\frac{\alpha_2(1-\beta_2)}{\beta_2(1-\alpha_2)}\right)$, and $f(x) \triangleq \log\left((1-x)/x\right)$.
One can then rewrite $L(\xi)$ as
\begin{align}
L(\xi) 
&= -(\log \PP_{\xi}(\U) + \log \PP_{\xi}(G_1) + \log \PP_{\xi}(G_2)) \notag  \\
&= \!- \lt|\Pi_{\xi}| \!+\! d_1 e(\xi_\mathcal{M},\xi_\mathcal{W}) \!+\! d_2e(\xi_\mathcal{A},\xi_\mathcal{R}) \!+\! C_0 \label{eq:three},
\end{align}
where $C_0$ is a constant that is independent of $\xi$. 

\noindent{\underline{\it Error Analysis:}} Recall from~\eqref{eq:error} that $P_{\text{err}}(\phi_{\text{ML}})$ is defined as the maximum error probability with respect to all possible ground truth parameters $\xi \in \Xi$. We first consider a specific instance $\xi \in \Xi$, and upper-bound the corresponding error probability $\PP_{\xi}(\phi_{\text{ML}}(\U, G_1,G_2) \ne \xi)$. By union bound, we have 
\begin{align}
\PP_{\xi}(\phi_{\text{ML}}(\U, G_1,G_2) \!\ne\! \xi) \!\le\! \sum_{\xi' \in \Xi: \xi' \ne \xi} \PP_{\xi}(L(\xi') \!\le\! L(\xi)). \label{eq:sy2}
\end{align}
In order to analyze $\PP_{\xi}(L(\xi') \le L(\xi))$ for different $\xi' \ne \xi$, we first calculate $L(\xi) - L(\xi')$ based on the expression in~\eqref{eq:three}.

We partition the set of users $\V_1$ into four disjoint subsets 
\begin{align*}
&\Imm \triangleq  \xi_{\mathcal{M}} \cap \xi'_{\mathcal{M}} ,\;  \ \Imw \triangleq \xi_{\mathcal{M}} \cap \xi'_{\mathcal{W}}  , \\
& \Iww \triangleq   \xi_{\mathcal{W}} \cap \xi'_{\mathcal{W}} , \; \ \ \Iwm \triangleq    \xi_{\mathcal{W}} \cap  \xi'_{\mathcal{M}}  .
\end{align*}
Under the ground truth $\xi$, we have 
\begin{align}
e(\xi_\mathcal{M},\xi_\mathcal{W}) &= \sum_{(i,i') \in \xi_{\mathcal{M}} \times \xi_{\mathcal{W}} }  Y_{ii'}, \quad \text{and} \notag \\
e(\xi'_\mathcal{M},\xi'_\mathcal{W}) &= \sum_{i \in \Imm} \sum_{i' \in \Iww} Y_{ii'} + \sum_{i \in \Iwm} \sum_{i' \in \Iww} X_{ii'} \notag  \\
&+\sum_{i \in \Imm} \sum_{i' \in \Imw} X_{ii'} + \sum_{i \in \Iwm} \sum_{i' \in \Imw} Y_{ii'}. \notag
\end{align}
Thus, $e(\xi_\mathcal{M},\xi_\mathcal{W}) - e(\xi'_\mathcal{M},\xi'_\mathcal{W})$ equals
\begin{align}
 &\sum_{(i,i')\in \Imm \times \Iwm}\!\! Y_{ii'} - \sum_{(i,i')\in \Imm \times \Imw}\!\! X_{ii'} \notag \\
 &\quad +\sum_{(i,i')\in \Iww \times \Imw} \!\!Y_{ii'} - \sum_{(i,i')\in \Iww \times \Iwm} \!\!X_{ii'} \triangleq \Gamma_1. \label{eq:fan1}
\end{align}
Note that for any $\xi' \in \Xi$, $|\Imm| = |\Iww|$ and $|\Imw| = |\Iwm|$. We define $k_1 \triangleq |\Imw|$ as the parameter that quantifies the amount of overlap between the communities of men and women in $\xi$ and $\xi'$, where $k_1 \in [0:n/4]$. It can be shown that $\Gamma_1$ in~\eqref{eq:fan1} contains $g_1 \triangleq nk_1-2k_1^2$ copies of $(Y_{ii'} - X_{ii'})$.

Similarly, we partition the set of movies $\V_2$ into subsets 
\begin{align}
&\Iaa \triangleq   \xi_{\mathcal{A}} \cap \xi'_{\mathcal{A}} ,\; \ \Iar \triangleq  \xi_{\mathcal{A}} \cap \xi'_{\mathcal{R}} , \label{eq:i1} \\
& \Irr \triangleq  \xi_{\mathcal{R}} \cap \xi'_{\mathcal{R}} ,\; \ \Ira \triangleq  \xi_{\mathcal{R}} \cap \xi'_{\mathcal{A}} , \label{eq:i2}
\end{align}
and one can show that $e(\xi_\mathcal{A},\xi_\mathcal{R}) - e(\xi'_\mathcal{A},\xi'_\mathcal{R})$ equals
\begin{align}
&\sum_{(j,j')\in \Iaa \times \Ira}\!\! \tY_{jj'} - \sum_{(j,j')\in \Iaa \times \Iar}\!\! \tX_{jj'} \notag \\
&\quad +\sum_{(j,j')\in \Irr \times \Iar} \!\!\tY_{jj'} - \sum_{(j,j')\in \Irr \times \Ira} \!\!\tX_{jj'} \triangleq \Gamma_2. \label{eq:fan2}
\end{align}
Let $k_2 \triangleq |\Iar|$ be the parameter that quantifies the amount of overlap between the communities of action and romance movies in $\xi$ and $\xi'$, where $k_2 \in [0:m/4]$. It can be shown that $\Gamma_2$ in~\eqref{eq:fan2} contains $g_2 \triangleq mk_2-2k_2^2$ copies of $(\tY_{jj'} - \tX_{jj'})$. 

Let $\s \triangleq \{(i,j)\in [n] \times [m]: \bbij = \bij\}$ be the set of $(i,j)$-pairs such that the $(i,j)$-th entry in $B_{\xi}$ and the $(i,j)$-th entry in $B_{\xi'}$ coincide. We then have
\begin{align*}
&|\Pi_{\xi}| = \! \!\sum_{(i,j) \in \s} \mathbbm{1}\left\{ \vij = \bbij \right\} + \! \!\sum_{(i,j) \in \s^c} \mathbbm{1}\left\{ \vij = \bbij \right\}, \\
&|\Pi_{\xi'}| = \! \!\sum_{(i,j) \in \s} \!\!\mathbbm{1}\left\{ \vij = \bij \right\} + \! \!\sum_{(i,j) \in \s^c}\!\!\! \mathbbm{1}\left\{ \vij = \bij \right\} \\
&\quad \ \ =\! \!\sum_{(i,j) \in \s} \!\!\mathbbm{1}\!\left\{ \vij = \bbij \right\} + \! \!\sum_{(i,j) \in \s^c} \!\!\!\mathbbm{1}\!\left\{ \vij = 1-\bbij \right\}.
\end{align*}
By noting that $\mathbbm{1}\left\{ \vij = \bbij \right\} = \zij(1-\tij)$, we obtain
\begin{align}
|\Pi_{\xi}| - |\Pi_{\xi'}| 
&= \sum_{(i,j) \in \s^c} \zij (1-2\tij) \triangleq \Gamma_3, \label{eq:fine}
\end{align}
where  $|\s^c| = g_3 \triangleq 2mk_1 + 2nk_2 - 8k_1k_2$. Thus, $\Gamma_3$ in~\eqref{eq:fine} contains $g_3$ copies of $\zij (1-2\tij)$. 

Combining~\eqref{eq:three},~\eqref{eq:fan1},~\eqref{eq:fan2}, and~\eqref{eq:fine}, we have 
\begin{align}
	L(\xi) - L(\xi') = d_1 \Gamma_1 + d_2 \Gamma_2 -\lt \Gamma_3. \label{eq:18}	
\end{align}
By applying  Chernoff bound  (Lemma~\ref{lemma:generic}) 
with $t = \frac{1}{2}$, we have  
\begin{align}
&\PP_{\xi}\big(L(\xi')\! \le\! L(\xi)\big) \!\le \! \exp \! \Big\{\! \!\!-g_1I_{1}\frac{\log n}{n}
 \!-\! g_2I_{2}\frac{\log m}{m} \!- \!g_3 p\ttt \!\Big\}. \label{eq:mo}
\end{align}
Note that the expression above depends on $k_1$ and $k_2$, since $g_1,g_2,g_3$ are all functions of $k_1$ and $k_2$. Let $\Xi_{\xi}(k_1,k_2)$ be the set of $\xi'$ with the same error probability $\PP_{\xi}(L(\xi') \le L(\xi))$, and $\mathcal{T}$ be the set of valid $(k_1,k_2)$-pairs, i.e.,
\begin{align}
&\mathcal{T} \!\triangleq \! \left\{(k_1,k_2) \!\ne\! (0,0): k_1 \in \left[0:\frac{n}{4}\right], k_2 \in \left[0:\frac{m}{4}\right] \right\}. \label{eq:ni}
\end{align}
Following~\eqref{eq:sy2}, we then decompose the parameter space $\Xi$ into different \emph{type classes} $\Xi_{\xi}(k_1,k_2)$:
\begin{align}
&\sum_{\xi' \in \Xi: \xi' \ne \xi} \PP_{\xi}(L(\xi') \le L(\xi)) \notag \\
&=  \sum_{(k_1,k_2) \in \mathcal{T}} |\Xi_{\xi}(k_1,k_2)| \cdot  \PP_{\xi}(L(\xi') \le L(\xi)). \label{eq:type}
\end{align}
For any $\epsilon > 0$, we define an auxiliary parameter
$\delta_{\epsilon} \triangleq \epsilon/(4+4\epsilon)$. We also define four subsets of $\mathcal{T}$ as 
\begin{align}
&\mathcal{T}_1 \triangleq \left\{(k_1,k_2)\!: k_1 \in [1:\delta_{\epsilon} n], k_2 \in [0:\delta_{\epsilon} m]\right\}, \label{eq:t1} \\
&\mathcal{T}_2 \triangleq \left\{(k_1,k_2)\!: k_1 = 0,  k_2 \in [1:\delta_{\epsilon} m] \right\}, \label{eq:t2}\\
&\mathcal{T}_3 \triangleq \left\{(k_1,k_2)\!: k_1 \in [\delta_{\epsilon} n+1:n/4], k_2 \in \left[0:m/4\right] \right\},\label{eq:t3} \\
&\mathcal{T}_4 \triangleq \left\{(k_1,k_2)\!: k_1 \in \left[0\!:\!n/4\right], k_2 \in  [\delta_{\epsilon} m+1 : m/4] \right\}. \label{eq:t4}
\end{align}
Note that $\mathcal{T} \subset \cup_{i=1}^4 \mathcal{T}_i$. Thus,~\eqref{eq:type} is upper-bounded by
\begin{align}
\sum_{(k_1,k_2) \in \left(\cup_{i=1}^4 \mathcal{T}_i\right)} |\Xi_{\xi}(k_1,k_2)| \cdot  \PP_{\xi}(L(\xi') \le L(\xi)). \label{eq:c}
\end{align}
We now show that the error probabilities induced by each of the four subsets $\mathcal{T}_1$, $\mathcal{T}_2$, $\mathcal{T}_3$, $\mathcal{T}_4$ vanish as $n$ tends to infinity.

\subsubsection{Case 1: $k_1 \le \delta_{\epsilon} n$ and $k_2 \le \delta_{\epsilon} m$}
In this case, $g_1 = k_1\left(n-2k_1\right) \ge k_1n\left(1-2\delta_{\epsilon}\right)$ and $g_2 \ge k_2m\left(1-2\delta_{\epsilon}\right)$. Given $(k_1,k_2)$, one can bound the RHS of~\eqref{eq:mo} from above by  
\begin{align}
&\exp\!\big\{\!\! -\! k_1\! \left(1\!-\!2\delta_{\epsilon}\right)I_{1}(\log n) \!-\! k_2\left(1\!-\!2\delta_{\epsilon}\right)I_{2}(\log m) \!-\! g_3p \ttt  \big\} \notag \\
&= \exp\big\{\!\!-\!k_1\big[\!\left(1-2\delta_{\epsilon}\right)I_{1}(\log n)+(2m-4k_2)p\ttt \big] \big\}  \notag \\
&\qquad \times \exp\big\{\!\!-\!k_2\big[\!\left(1-2\delta_{\epsilon}\right)I_{2}(\log m)+(2n-4k_1)p\ttt \big] \big\} \notag \\
&\le \exp\big\{\!\!-\!k_1\left(1-2\delta_{\epsilon}\right)\big[I_{1}(\log n)+2mp\ttt \big] \big\}  \notag \\
&\qquad  \times \exp\big\{\!\!-\!k_2\left(1-2\delta_{\epsilon}\right)\big[I_{2}(\log m)+2np\ttt \big] \big\}. \label{eq:jay}
\end{align}
Since the sample probability $p$ satisfies~\eqref{eq:achievability}, we have 
\begin{align}
&k_1\left(1-2\delta_{\epsilon}\right)\big[I_{1}(\log n)+2mp\ttt \big]
\!\ge\! k_1\left(2+\epsilon\right) \log n, \label{eq:com1}\\
&k_2\left(1-2\delta_{\epsilon}\right)\big[I_{2}(\log m)+2np\ttt \big]
\!\ge\! k_2\left(2+\epsilon\right) \log m.\label{eq:com2}
\end{align}
Combining~\eqref{eq:jay},~\eqref{eq:com1}, and~\eqref{eq:com2}, we get 
\begin{align}
\PP_{\xi}(L(\xi') \le L(\xi)) \le \exp\left\{\left(2+\epsilon\right)(k_1\log n + k_2 \log m) \right\}. \notag 
\end{align}
Also note that the size of $\Xi_{\xi}(k_1,k_2)$ satisfies
\begin{align}
\binom{\frac{n}{2}}{k_1}\binom{\frac{n}{2}}{k_1}\binom{\frac{m}{2}}{k_2}\binom{\frac{m}{2}}{k_2} \le n^{2k_1}\cdot m^{2k_2}. \label{eq:com3}
\end{align}
Thus, the error probability induced by $\mathcal{T}_1$ is upper-bounded as
\begin{align*}
&\sum_{(k_1,k_2) \in \mathcal{T}_1} |\Xi_{\xi}(k_1,k_2)| \cdot  \PP_{\xi}(L(\xi') \le L(\xi)) \\
&\le \sum_{(k_1,k_2) \in \mathcal{T}_1}\!\! n^{2k_1} m^{2k_2} \cdot \exp\left\{\left(2+\epsilon\right)(k_1\log n + k_2 \log m) \right\} \\
&\le \sum_{k_1 \in [1:\delta_{\epsilon} n]} n^{-\epsilon k_1} \sum_{k_2 \in [0:\delta_{\epsilon} m]} m^{-\epsilon k_2}  \\
&\le 4 n^{-\epsilon}.
\end{align*}
Similarly, the error probability induced by $\mathcal{T}_2$ satisfies 
\begin{align*}
\sum_{(k_1,k_2) \in \mathcal{T}_2} |\Xi_{\xi}(k_1,k_2)| \cdot  \PP_{\xi}(L(\xi') \le L(\xi)) \le 2 m^{-\epsilon}.
\end{align*}

\subsubsection{Case 2: $k_1 > \delta_{\epsilon} n$ or $k_2 > \delta_{\epsilon} m$}
When $k_1 > \delta_{\epsilon} n$,
$$g_3 \ge 2\delta_{\epsilon} mn + 2nk_2 - 8k_1k_2 \ge 2\delta_{\epsilon} mn,$$
where the last step holds since $k_1 \le n/4$. Similarly, when $k_2 > \delta_{\epsilon} m$, we have $g_3 \ge 2\delta_{\epsilon} mn$ as well. Thus, in both cases,
$$\PP_{\xi}(L(\xi') \le L(\xi)) \le \exp\left\{-\Omega(pmn)\right\} = e^{-\Omega(m\log m + n \log n)}.$$
Since the number of partitions of the set of $n$ users into $n/2$ men and $n/2$ women is upper-bounded by $2^n$, the number of partitions of the set of $m$ movies into $m/2$ action movies and $m/2$ romance movies is upper-bounded by $2^m$, the size of the parameter space $\Xi$ is at most  $2^{n + m}$. 
Therefore, one can show that the error probability induced by $\mathcal{T}_3 \cup \mathcal{T}_4$ satisfies
\begin{align}
&\sum_{(k_1,k_2) \in \mathcal{T}_3 \cup \mathcal{T}_4} |\Xi_{\xi}(k_1,k_2)| \cdot  \PP_{\xi}(L(\xi') \le L(\xi)) \notag \\
&\qquad\qquad\le 2^{n + m}\cdot e^{-\Omega(m\log m + n \log n)} = o(1). \label{eq:case2}
\end{align}
Combining~\eqref{eq:sy2},~\eqref{eq:type},~\eqref{eq:c}, and the analyses in Case 1 and Case 2, we obtain that $\PP_{\xi}(\phi_{\text{ML}}(\U, G_1,G_2) \!\ne\! \xi) = o(1)$. 

Finally, it is worth noting that the error probability bound derived above is valid regardless of which $\xi \in \Xi$ is set to be the ground truth. Thus, we conclude $$P_{\text{err}}(\phi_{\text{ML}}) = \max_{\xi \in \Xi} \PP_{\xi}(\phi_{\text{ML}}(\U, G_1,G_2) \!\ne\! \xi) = o(1).$$

\subsection{Proof of Converse} \label{sec:converse}

In this subsection, we show that for any $\epsilon > 0$, the error probability $P_{\text{err}}\to 1$ as $n\to\infty$ if $p$ satisfies~\eqref{eq:converse}. First, we state a technical lemma concerning the tightness of the Chernoff bound on the exponential scale.
\begin{lemma}[Adapted from Lemma 4 of \cite{ahnsupp}] \label{lemma:chernoff1}
	Consider integers $K,L_1,L_2 \in \mathbb{N}$ and $\theta_1,\theta_2 \in [0,1]$. Let $\{Y_i\}_{i = 1}^K \stackrel{\mathrm{i.i.d.}}{\sim} \emph{Bern}(\beta_1)$, $\{X_i\}_{i = 1}^K \stackrel{\mathrm{i.i.d.}}{\sim}$ $\emph{Bern}(\alpha_1)$, $\{Z_j^{(1)}\}_{j = 1}^{L_1}, \{Z_j^{(2)}\}_{j = 1}^{L_2} \stackrel{\mathrm{i.i.d.}}{\sim} \emph{Bern}(p)$, $\{\Theta_j^{(1)}\}_{j = 1}^{L_1} \stackrel{\mathrm{i.i.d.}}{\sim} \emph{Bern}(\theta_2)$, $\{\Theta^{(2)}_j\}_{j = 1}^{L_2} \stackrel{\mathrm{i.i.d.}}{\sim} \emph{Bern}(\theta_2)$, and assume that $\alpha_1, \beta_1, p = o(1)$ and $\max\{\sqrt{\alpha_1\beta_1}K, pL_1, pL_2\} = \omega(1)$. Then, 
	\begin{align*}
	&\PP\bigg(g_1\sum_{i=1}^K (Y_i-X_i) + \sum_{k=1}^2 f(\theta_k) \sum_{j=1}^{L_k} Z^{(k)}_j(2\Theta^{(k)}_j-1) \ge 0 \bigg) \\
	& \ge  \frac{1}{4}\exp\bigg\{\!-(1\!+\!o(1))KI_1\frac{\log n}{n} -\sum_{k=1}^2 (1\!+\!o(1))p L_k h(\theta_k) \bigg\}.
	\end{align*} 
\end{lemma}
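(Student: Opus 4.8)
```latex
\textbf{Proof proposal for Lemma~\ref{lemma:chernoff1}.}

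The plan is to establish a \emph{lower} bound on the probability that a sum of independent (centered-ish) random variables is nonnegative, matching (on the exponential scale) the Chernoff \emph{upper} bound that was used in the achievability proof. This is a standard ``tightness of the Chernoff bound'' / moderate deviations argument, and the natural tool is a change of measure (tilting) followed by a second-order (variance) estimate. First I would identify the optimal tilting parameter: the Chernoff exponent for the event $\{\sum (Y_i - X_i) \ge 0\}$ is minimized at $t = 1/2$ (this is exactly why $t=\tfrac12$ was the right choice in~\eqref{eq:mo}), and similarly the term $f(\theta_k) Z_j^{(k)}(2\Theta_j^{(k)}-1)$ is handled by the same $t = 1/2$ tilt because $f(\theta_k) = \log\frac{1-\theta_k}{\theta_k}$ is precisely the log-likelihood-ratio increment. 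So I would define a new product measure $\widetilde{\PP}$ under which each $Y_i, X_i, Z_j^{(k)}, \Theta_j^{(k)}$ is re-weighted by $e^{(1/2)\cdot(\text{its contribution})}/(\text{normalizer})$; concretely $Y_i, X_i$ become Bernoulli with success probability $\propto \sqrt{\alpha_1}, \sqrt{\beta_1}$ and $\Theta_j^{(k)}$ acquires a $\sqrt{\theta_k/(1-\theta_k)}$ tilt.

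Next I would write, for the tilted measure, the standard change-of-measure identity
\[
\PP(W \ge 0) = \widetilde{\E}\!\left[ e^{-\frac{1}{2}W} \frac{Z(\tfrac12)}{1} \,\mathbbm{1}\{W \ge 0\} \right],
\]
where $W$ is the sum in the statement and $Z(\tfrac12) = \E[e^{(1/2)W}]$ is the (deterministic) moment generating function evaluated at $t=1/2$; a direct computation gives $\log Z(\tfrac12) = -(1+o(1))\big( K I_1 \frac{\log n}{n} + \sum_k p L_k h(\theta_k)\big)$ using $h(x) = (\sqrt{1-x}-\sqrt x)^2$ and the asymptotics $\alpha_1,\beta_1,p = o(1)$, together with $\log(\alpha_1 + \beta_1 - 2\sqrt{\alpha_1\beta_1}) = \log(\sqrt{\alpha_1}-\sqrt{\beta_1})^2$ and $K(\sqrt{\alpha_1}-\sqrt{\beta_1})^2 = K I_1 \frac{\log n}{n}$ (up to the normalizations in the definition of $I_1$). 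To lower-bound the remaining expectation, I would restrict the event to $\{0 \le W \le c\}$ for a slowly growing (or suitably chosen constant) threshold $c$, on which $e^{-W/2} \ge e^{-c/2}$, giving
\[
\PP(W \ge 0) \ge e^{-c/2} Z(\tfrac12)\, \widetilde{\PP}(0 \le W \le c).
\]
Finally I would show $\widetilde{\PP}(0 \le W \le c)$ is bounded below by a constant (leading to the factor $\tfrac14$): under $\widetilde{\PP}$ the summand $W$ has mean $0$ (this is the defining property of the $t=1/2$ tilt — the log-MGF is stationary there) and variance $\Theta(\sqrt{\alpha_1\beta_1}K + p L_1 + p L_2) = \omega(1)$ by hypothesis, so by a central limit theorem / Berry--Esseen (or just a Paley--Zygmund / second-moment argument on a lattice) the probability that $W$ lands in $[0,c\sqrt{\mathrm{Var}}]$ is bounded away from $0$; absorbing $\sqrt{\mathrm{Var}}$ into $c$ and noting $c = o(\log n)$ means $e^{-c/2}$ does not affect the exponent, which yields the claimed bound with the constant $\tfrac14$.

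The main obstacle is the second-moment / CLT step: because $W$ is a sum of \emph{differently scaled} pieces (the graph part has increments of order $1$ but tiny success probabilities, while the rating part has increments $f(\theta_k) = \Theta(1)$ with probability $p = o(1)$), one must verify a Lindeberg-type condition so that the normalized sum is asymptotically Gaussian — or, to avoid a CLT entirely, one can argue more crudely that \emph{some} single favorable coordinate configuration already places $W$ in $[0,c]$ with constant probability, at the cost of tracking the tilt normalizers carefully. I would handle the lattice structure (the $Y_i-X_i$ take values in $\{-1,0,1\}$) by noting the support of $W$ is contained in a lattice of span $\gcd$ of the increments; since $f(\theta_k)$ is generically irrational this gcd is effectively $0$, but for a lower bound it suffices to exhibit an interval $[0,c]$ of positive $\widetilde{\PP}$-mass, which the variance lower bound $\omega(1)$ guarantees. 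This is exactly the content of Lemma~4 of~\cite{ahnsupp}, which I would invoke after checking its hypotheses ($\alpha_1,\beta_1,p = o(1)$ and $\max\{\sqrt{\alpha_1\beta_1}K, pL_1, pL_2\} = \omega(1)$) are met verbatim here, so in fact the cleanest route is to reduce directly to that lemma by matching notation rather than reproving the CLT.
```
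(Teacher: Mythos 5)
Your proposal is correct and ends up in the same place as the paper, which offers no standalone proof of this lemma at all: it simply invokes Lemma~4 of~\cite{ahnsupp} (with notation adapted), exactly as you suggest doing after checking the hypotheses. Your tilting-at-$t=\tfrac12$ change-of-measure sketch with the local central-limit/second-moment step faithfully reconstructs the standard argument underlying that cited lemma, including the correct identification of the exponent via $\log Z(\tfrac12)\approx -K(\sqrt{\alpha_1}-\sqrt{\beta_1})^2-\sum_k pL_k h(\theta_k)$, so there is no gap.
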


Following the definition of $P_{\text{err}}(\phi)$, we have
\begin{align}
\inf_{\phi} P_{\text{err}}(\phi) &\ge \inf_{\phi} \frac{1}{|\Xi|}\sum_{\xi \in \Xi} \PP_{\xi}(\phi(\U, G_1,G_2) \ne \xi) \notag \\
& = \frac{1}{|\Xi|}\sum_{\xi \in \Xi} \PP_{\xi}(\phi_{\mathrm{ML}}(\U, G_1,G_2) \ne \xi)  \label{eq:sj1},
\end{align}
where~\eqref{eq:sj1} holds since the ML estimator is optimal when the prior is uniform. In the following, we analyze the error probability with respect to $\phi_{\text{ML}}$ and a specific ground truth $\xi \in \Xi$.
The crux of the proof is to focus on a subset of events (corresponding to a particular type class $\Xi_{\xi}(k_1,k_2)$) that are most likely  to cause errors, and show that the  error probability tends to one even if we restrict the error analysis to this type class $\Xi_{\xi}(k_1,k_2)$. 

Note that the ML estimator $\phi_{\mathrm{ML}}$ succeeds if $L(\xi') > L(\xi)$ holds for all $\xi \ne \xi'$. Thus, the success probability $P_{\text{suc}}(\xi)$, which equals $1 - \PP_{\xi}(\phi_{\mathrm{ML}}(\U, G_1,G_2) \ne \xi)$, takes the form $\PP_{\xi}(\bigcap_{\xi'\ne \xi} \{L(\xi') \!>\! L(\xi)\})$. Clearly, we also have
\begin{align}
	&P_{\text{suc}}(\xi) \le \PP_{\xi}\Bigg(\bigcap_{\xi' \in \Xi_{\xi}(k_1,k_2)} \{L(\xi') \!>\! L(\xi)\}\Bigg), \label{eq:sub}
\end{align}
where $\Xi_{\xi}(k_1,k_2)$ can be chosen as any type class such that $(k_1, k_2) \in \mathcal{T}$.
Lemma~\ref{lemma:10} below focuses on the type classes $\Xi_{\xi}(1,0)$ and $\Xi_{\xi}(0,1)$, and shows that the success probability tends to zero if $p$ satisfies~\eqref{eq:converse}.    

\begin{lemma}\label{lemma:10}
Consider any ground truth $\xi \in \Xi$ and sufficiently large $n$. (i) When $p \le \frac{(2(1-\epsilon) - I_{1}) \log n}{2\ttt m}$, we have
\begin{align}
\PP_{\xi}\Bigg(\bigcap_{\xi' \in \Xi_{\xi}(1,0)} \{L(\xi') \!>\! L(\xi)\}\Bigg) \le 5\exp\Big(-\frac{1}{4}n^{\frac{\epsilon}{2}}\Big).\label{eq:first}
\end{align}
(ii) When $p \le \frac{(2(1-\epsilon) - I_{2}) \log m}{2\ttt n}$, we have
\begin{align}
\PP_{\xi}\Bigg(\bigcap_{\xi' \in \Xi_{\xi}(0,1)} \{L(\xi') \!>\! L(\xi)\}\Bigg) \le 5\exp\Big(-\frac{1}{4}m^{\frac{\epsilon}{2}}\Big). \label{eq:second}
\end{align}
\end{lemma}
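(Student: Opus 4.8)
The plan is to lower-bound the probability that the ML estimator is confused by \emph{at least one} competitor in the chosen type class, which then upper-bounds the success probability in~\eqref{eq:sub}. I treat part~(i); part~(ii) is symmetric with the roles of users/movies, $n/m$, and $I_1/I_2$ interchanged. Fix a ground truth $\xi$. The type class $\Xi_\xi(1,0)$ consists of all $\xi'$ obtained from $\xi$ by swapping exactly one man with exactly one woman (so $k_1=1$, $k_2=0$); there are $(n/2)^2$ such competitors. For each such $\xi'$, specialize the identity~\eqref{eq:18}, namely $L(\xi)-L(\xi') = d_1\Gamma_1 + d_2\Gamma_2 - f(\theta)\Gamma_3$. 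Here $\Gamma_2=0$ since $k_2=0$, while $\Gamma_1$ is a sum of $g_1 = n\cdot 1 - 2\cdot 1^2 = n-2$ i.i.d.\ copies of $(Y_{ii'}-X_{ii'})$ and $\Gamma_3$ is a sum of $g_3 = 2m$ i.i.d.\ copies of $Z_{ij}(1-2\Theta_{ij})$. Thus $\{L(\xi')\le L(\xi)\}$ is exactly the bad event $\{d_1\Gamma_1 - f(\theta)\Gamma_3 \ge 0\}$, which (after noting $1-2\Theta = -(2\Theta-1)$) has precisely the form handled by Lemma~\ref{lemma:chernoff1} with $K=n-2$, $L_1=2m$, $L_2=0$, and $\theta_2=\theta$.

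The key point is that these $(n/2)^2$ bad events are \emph{almost independent}: I would partition the competitors into groups that involve disjoint vertex-pairs so that within each group the events depend on disjoint sets of the underlying Bernoulli variables $\{X_{ii'}\}, \{Y_{ii'}\}, \{Z_{ij}\}, \{\Theta_{ij}\}$. Concretely, fix $\lceil n/2 \rceil$ or so disjoint (man, woman) pairs; the corresponding swap competitors $\xi'$ touch pairwise-disjoint blocks of the SBM edge variables and of the sampling/noise variables, hence the corresponding bad events are mutually independent. Then
\begin{align}
\PP_\xi\Bigg(\bigcap_{\xi'\in\Xi_\xi(1,0)}\{L(\xi')>L(\xi)\}\Bigg) &\le \prod_{\ell} \big(1 - \PP_\xi(\text{bad event }\ell)\big) \le \big(1-q\big)^{\Theta(n)} \le \exp\big(-\Theta(n)\,q\big), \notag
\end{align}
where $q$ is the single-competitor lower bound from Lemma~\ref{lemma:chernoff1}, namely $q \ge \tfrac14\exp\{-(1+o(1))(n-2)I_1\tfrac{\log n}{n} - (1+o(1))\,2mp\,h(\theta)\}$. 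Under the hypothesis $p \le \frac{(2(1-\epsilon)-I_1)\log n}{2h(\theta)m}$, the exponent satisfies $(1+o(1))I_1\log n + (1+o(1))2mp\,h(\theta) \le (1+o(1))\cdot 2(1-\epsilon)\log n \le (2-\tfrac32\epsilon)\log n$ for large $n$, so $q \ge \tfrac14 n^{-(2-\frac32\epsilon)}$ and $\Theta(n)\,q \ge \tfrac14 n^{-1+\frac32\epsilon}\cdot\Theta(n) = \Theta(n^{\frac32\epsilon})$; absorbing constants and being slightly loose in the $o(1)$ terms gives the claimed bound $5\exp(-\tfrac14 n^{\epsilon/2})$. (The constant $5$ and exponent $\epsilon/2$ leave ample slack for the $o(1)$ corrections and the $1-q \le e^{-q}$ step.)

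The main obstacle is bookkeeping the independence structure carefully: one must verify that the bad event for a swap $(\mathsf{m},\mathsf{w})$ really depends only on edge variables incident to $\{\mathsf{m},\mathsf{w}\}$ and on sampling/noise variables in the $2m$ rating entries of rows $\mathsf{m},\mathsf{w}$ that differ between $B_\xi$ and $B_{\xi'}$ — and then choose a family of $\Theta(n)$ swaps whose associated variable-sets are pairwise disjoint, so that the product bound is legitimate. A second, minor, obstacle is checking that the side conditions of Lemma~\ref{lemma:chernoff1} hold here, i.e.\ $\alpha_1,\beta_1,p=o(1)$ and $\max\{\sqrt{\alpha_1\beta_1}(n-2),\,2mp\}=\omega(1)$; the former are standard sparsity assumptions and the latter follows since otherwise the sample complexity claim is vacuous (if $mp = O(1)$ then essentially no ratings are observed and recovery is trivially impossible, handled separately). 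Everything else is a routine substitution of $g_1,g_3$ and the hypothesis on $p$ into the exponent.
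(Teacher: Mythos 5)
Your proposal has a genuine gap at the independence step, and the quantitative bookkeeping also fails even if independence were granted. On independence: the bad event for a swap of man $\mathsf{m}$ with woman $\mathsf{w}$ is \emph{not} determined only by variables attached to the pair $\{\mathsf{m},\mathsf{w}\}$ and their two rating rows. By~\eqref{eq:fan1}, the graph part $\Gamma_1$ of the statistic involves the edge variables between $\{\mathsf{m},\mathsf{w}\}$ and \emph{every other user node}; consequently two vertex-disjoint swaps $(\mathsf{m}_1,\mathsf{w}_1)$ and $(\mathsf{m}_2,\mathsf{w}_2)$ share the variables $X_{\mathsf{m}_1\mathsf{m}_2}$, $Y_{\mathsf{m}_1\mathsf{w}_2}$, $Y_{\mathsf{w}_1\mathsf{m}_2}$, $X_{\mathsf{w}_1\mathsf{w}_2}$, so the product bound over your $\Theta(n)$ disjoint swaps is not legitimate as stated. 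This is precisely why the paper's proof introduces the isolated-node event $\Delta_1$ (Lemma~\ref{lemma:strong}) and works conditionally on it, and why it additionally needs the composition Lemma~\ref{lemma:compose} to pass from the pair swaps constituting $\Xi_{\xi}(1,0)$ to single-row flips $\xi_{\text{row}}^{(i)}$ (which are not themselves in $\Xi$): conditioned on $\Delta_1$, the single-flip events across the non-adjacent nodes of $\xi_{P_1}\cup\xi_{Q_1}$ genuinely decouple, whereas the pair-swap statistic differs from the sum of the two single-flip statistics through the $(i,i')$ edge term, which the conditioning neutralizes.

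Second, your arithmetic double-counts a factor of $n$, and this is not a cosmetic slip: with the pair-swap statistic ($K=n-2$, $L_1=2m$) the per-event probability from Lemma~\ref{lemma:chernoff1} at the threshold is only about $n^{-2(1-\epsilon)(1+o(1))}$, and you have only $\Theta(n)$ disjoint swaps, so $\Theta(n)\cdot q=\Theta(n^{-1+2\epsilon+o(1)})$, which tends to zero for small $\epsilon$; your step ``$\Theta(n)\,q=\Theta(n^{3\epsilon/2})$'' multiplies by $n$ twice. Hence $(1-q)^{\Theta(n)}\to 1$ and the claimed decay $5\exp(-n^{\epsilon/2}/4)$ does not follow. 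To make the expected number of confusing competitors diverge you would need roughly all $(n/2)^2$ swaps to act independently, which is exactly the obstruction (and the source of the factor-of-two loss) discussed in Remark~\ref{remark:challenge} for Model 2. The paper's route avoids this: a single flip has $K=n/2-1$ and $L_1=m$, so its bad probability is about $n^{-(1-\epsilon)(1+o(1))}$, and with $r_1=n/\log^2 n$ conditionally independent flips one gets $r_1 q=\Omega(n^{\epsilon/2})$, which is what yields~\eqref{eq:first}. Your remark about verifying the side conditions of Lemma~\ref{lemma:chernoff1} is reasonable but secondary to these two issues.
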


The converse part of Theorem~\ref{thm:simple} follows directly from~\eqref{eq:sj1},~\eqref{eq:sub}, and Lemma~\ref{lemma:10}.
In the following, we provide a detailed proof of Lemma~\ref{lemma:10}.

\begin{proof}[Proof of Lemma~\ref{lemma:10}]
Suppose the ground truth $\xi$ satisfies $\xi_{\mathcal{M}} = [1:n/2]$, $\xi_{\mathcal{W}} = [(n/2)+1:n]$. The instantiation of $\xi$ is merely for ease of presentation, and it will be clear that the following proof is valid for every $\xi \in \Xi$.  

Let $r_1 \triangleq n/\log^2 n$, $\mathcal{V}'_1 \triangleq [1:2r_1] \cup [(n/2)+1: (n/2)+2r_1] \subseteq \mathcal{V}_1$ be a subset of user nodes, and $\Delta_1$ be the event that the number of isolated nodes in $\mathcal{V}'_1$ (i.e., the nodes that are not connected to any other nodes in $\mathcal{V}'_1$) is at least $3r_1$.
Lemma~\ref{lemma:strong} below shows that $\Delta_1$ occurs with high probability over the generation of the social graph $G_1$.
\begin{lemma} \label{lemma:strong}
	The probability  that event $\Delta_1$ occurs is at least $1 - \exp\left(-\frac{\eta^2(\alpha_1+\beta_1)n^2}{\log^4 n}\right)$, where $\eta \in (0,1)$ is arbitrary.
\end{lemma}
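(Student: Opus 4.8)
The plan is to show that, with high probability, a constant fraction (in fact $3/4$) of the $4r_1$ nodes in $\V'_1$ are isolated \emph{within the induced subgraph} on $\V'_1$. First I would compute, for a fixed node $u \in \V'_1$, the probability that $u$ is connected to none of the other $4r_1 - 1$ nodes of $\V'_1$. Since each such potential edge is present independently with probability at most $\max\{\alpha_1,\beta_1\} \le \alpha_1+\beta_1$, and since $\alpha_1,\beta_1 = o(1)$ (they are SBM connection probabilities that must be $o(1)$ for the problem to be nondegenerate; recall $m=\omega(\log n)$ etc.), we get
\begin{align*}
\PP(u \text{ isolated in } \V'_1) &\ge (1-\alpha_1-\beta_1)^{4r_1-1} \ge 1 - (4r_1)(\alpha_1+\beta_1),
\end{align*}
so the expected number of isolated nodes in $\V'_1$ is at least $4r_1\big(1 - 4r_1(\alpha_1+\beta_1)\big)$. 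A short sanity check shows $r_1^2(\alpha_1+\beta_1) = o(r_1)$ under the operating assumptions, so this expectation is $(1-o(1))\cdot 4r_1 \ge \tfrac{15}{4} r_1 > 3r_1$ for large $n$; hence the target $3r_1$ sits a constant-fraction gap below the mean.

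Next I would establish concentration. The number $N$ of isolated nodes in $\V'_1$ is not a sum of independent indicators, but it is a function of the independent edge variables $\{X_{uv}\}, \{Y_{uv}\}$ for $u,v \in \V'_1$, and flipping any single edge changes $N$ by at most $2$. So I would apply the bounded-differences (McDiarmid) inequality over these $\binom{4r_1}{2} = O(r_1^2)$ independent coordinates: for any $t>0$,
\begin{align*}
\PP\big(N \le \E[N] - t\big) \le \exp\!\Big(\!-\frac{2t^2}{4\cdot\binom{4r_1}{2}}\Big) \le \exp\!\Big(\!-\frac{\Omega(t^2)}{r_1^2}\Big).
\end{align*}
Taking $t = \E[N] - 3r_1 = \Theta(r_1)$ gives a bound of the form $\exp(-\Omega(1))$ — which is \emph{not} strong enough to match the claimed $\exp(-\eta^2(\alpha_1+\beta_1)n^2/\log^4 n)$. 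This mismatch tells me the intended argument is different: the claimed exponent is $\eta^2(\alpha_1+\beta_1)r_1^2$ (since $r_1^2 = n^2/\log^4 n$), which is exactly the scale of $\Var$-type or Chernoff-type deviations for the \emph{number of edges} inside $\V'_1$. So the cleaner route is: let $E'$ be the number of edges inside $\V'_1$; then $\E[E'] \le \binom{4r_1}{2}(\alpha_1+\beta_1) \le 8 r_1^2(\alpha_1+\beta_1)$, and a Chernoff bound (Lemma~\ref{lemma:generic}) gives $\PP(E' \ge (1+\eta')\E[E'])$ decaying like $\exp(-\Theta(\eta'^2 r_1^2(\alpha_1+\beta_1)))$. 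On the complement, $E' \le (1+\eta')\E[E'] = O(r_1^2(\alpha_1+\beta_1)) = o(r_1)$ edges can touch at most $o(r_1)$ nodes, so at least $4r_1 - o(r_1) \ge 3r_1$ nodes are isolated in $\V'_1$. Choosing the free constant appropriately turns the decay into $\exp(-\eta^2(\alpha_1+\beta_1)n^2/\log^4 n)$.

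The main obstacle is getting the exponent to land on exactly $\eta^2(\alpha_1+\beta_1)n^2/\log^4 n$: this requires routing through the edge count $E'$ (or an equivalent binomial with $\Theta(r_1^2)$ trials and success probability $\Theta(\alpha_1+\beta_1)$) rather than through a node-isolation indicator sum, and then invoking a multiplicative Chernoff bound on the \emph{lower tail of non-isolation} / \emph{upper tail of $E'$}. One must also verify the side condition $r_1^2(\alpha_1+\beta_1) = \omega(1)$ so the Chernoff estimate is meaningful and the $o(r_1)$ slack is genuinely negligible — but this follows from the standing assumptions relating $n$, $m$, and the SBM densities, analogous to the hypothesis $\max\{\sqrt{\alpha_1\beta_1}K,\dots\}=\omega(1)$ used in Lemma~\ref{lemma:chernoff1}. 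The remaining steps (translating ``few edges'' into ``many isolated nodes'' and bookkeeping the constant $\eta$) are routine.
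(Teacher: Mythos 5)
Your final route is exactly the paper's argument: the paper also bounds the number of edges inside $\mathcal{V}'_1$ (a sum of $\Theta(r_1^2)$ independent $\mathrm{Bern}(\alpha_1)$ and $\mathrm{Bern}(\beta_1)$ variables with mean in $[3r_1^2(\alpha_1+\beta_1),\,4r_1^2(\alpha_1+\beta_1)]$) via the multiplicative Chernoff bound to get the exponent $\eta^2(\alpha_1+\beta_1)n^2/\log^4 n$, and then notes that few edges force at least $3r_1$ isolated nodes, with the same implicit use of $r_1(\alpha_1+\beta_1)=o(1)$ that you flag. Your initial McDiarmid attempt is rightly discarded, and the rest matches the paper's proof, so the proposal is correct and essentially identical in approach.
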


For each $i \in \xi_{\mathcal{M}}$ and $i' \in \xi_{\mathcal{W}}$, we define variants of $\xi$ as follows:
\begin{enumerate}
	\item $\xi_{\text{row}}^{(i)}$ is identical to $\xi$ except that $(\xi_{\text{row}}^{(i)})_{\mathcal{M}} = \xi_{\mathcal{M}} \setminus \{i\}$ and $(\xi_{\text{row}}^{(i)})_{\mathcal{W}} = \xi_{\mathcal{W}} \cup \{i\}$, i.e., user $i$ in $\xi_{\text{row}}^{(i)}$ is female;
	
	\item  $\xi_{\text{row}}^{(i')}$ is identical to $\xi$ except that $(\xi_{\text{row}}^{(i')})_{\mathcal{M}} = \xi_{\mathcal{M}} \cup \{i'\}$ and $(\xi_{\text{row}}^{(i')})_{\mathcal{W}} = \xi_{\mathcal{W}} \setminus \{i'\}$, i.e., user $i'$ in $\xi_{\text{row}}^{(i)}$ is male;
	\item $\xi_{\text{row}}^{(i,i')}$ is identical to $\xi$ except that user $i$ and user $i'$ in $\xi_{\text{row}}^{(i,i')}$ are respectively female and male.
\end{enumerate}
By definition, the type class $\Xi_{\xi}(1,0)$ is equivalent to the set  $\{\xi_{\text{row}}^{(i,i')}: i \in \xi_{\mathcal{M}}, i' \in \xi_{\mathcal{W}} \}$.
Conditioned on $\Delta_1$, one can find a subset $\xi_{P_1} \subset \xi_\mathcal{M}$ and a subset $\xi_{Q_1} \subset \xi_\mathcal{W}$ such that $|\xi_{P_1}| = |\xi_{Q_1}| = r_1$ and all the nodes in $\xi_{P_1} \cup \xi_{Q_1}$ are not connected to one another. Thus,
\begin{align}
&\PP_{\xi}\Bigg(\bigcap_{\xi' \in \Xi_{\xi}(1,0)} \{L(\xi') > L(\xi)\}  \Bigg) \notag \\
&\le \PP_{\xi}\Bigg(\bigcap_{i \in \xi_{P_1}, i' \in \xi_{Q_1}} \left\{ L\left(\xi_{\text{row}}^{(i,i')}\right) > L(\xi) \right\} \Bigg) \notag \\
&= \PP_{\xi}(\Delta_1) \PP\Bigg(\bigcap_{i \in \xi_{P_1}, i' \in \xi_{Q_1}} \left\{L\left(\xi_{\text{row}}^{(i,i')}\right) > L(\xi) \right\} \bigg| \Delta_1 \Bigg) \notag \\
&\quad + \PP_{\xi}(\Delta_1^c) \PP\Bigg(\bigcap_{i \in \xi_{P_1}, i' \in \xi_{Q_1}} \left\{ L\left(\xi_{\text{row}}^{(i,i')}\right) > L(\xi) \right\} \bigg| \Delta_1^c \Bigg) \notag\\
&\le \PP_{\xi}\Bigg(\!\bigcap_{i \in \xi_{P_1}, i' \in \xi_{P_1}} \!\!\left\{ L\left(\xi_{\text{row}}^{(i,i')}\right) > L(\xi)\right\} \bigg| \Delta_1 \Bigg) + \PP_{\xi}(\Delta_1^c) \notag \\
&\le \PP_{\xi}\Bigg(\bigcap_{i \in \xi_{P_1}} \left\{ L(\xi_{\text{row}}^{(i)}) > L(\xi) \right\} \bigg| \Delta_1 \Bigg) \notag \\
&+ \PP_{\xi}\left(\bigcap_{i' \in \xi_{Q_1}} \left\{L(\xi_{\text{row}}^{(i')}) > L(\xi) \right\} \bigg| \Delta_1 \right) + \PP_{\xi}(\Delta_1^c), \label{eq:41}
\end{align}
where~\eqref{eq:41} is due to Lemma~\ref{lemma:compose} below, which is borrowed from~\cite[Lemma 6]{ahnsupp}.  
\begin{lemma} \label{lemma:compose}
	Conditioned on $\xi$ and $\Delta_1$, if $L(\xi_{\mathrm{row}}^{(i)}) \le L(\xi)$ and $L(\xi_{\mathrm{row}}^{(i')}) \le L(\xi)$ for some $i \in \xi_{P_1}$ and $i' \in \xi_{Q_1}$, then we have $L(\xi_{\mathrm{row}}^{(i,i')}) \le L(\xi)$.
\end{lemma}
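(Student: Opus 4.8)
The plan is to work directly with the decomposition of $L(\xi) - L(\xi')$ in~\eqref{eq:18}, specialized to the three variants $\xi_{\mathrm{row}}^{(i)}$, $\xi_{\mathrm{row}}^{(i')}$, and $\xi_{\mathrm{row}}^{(i,i')}$, and show that the log-likelihood difference for the ``double flip'' $\xi_{\mathrm{row}}^{(i,i')}$ is exactly the sum of the differences for the two ``single flips'' $\xi_{\mathrm{row}}^{(i)}$ and $\xi_{\mathrm{row}}^{(i')}$, provided $i\in\xi_{P_1}$, $i'\in\xi_{Q_1}$ are non-adjacent in $G_1$ (which is guaranteed on $\Delta_1$). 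Once such an additivity identity
$$
L(\xi) - L\big(\xi_{\mathrm{row}}^{(i,i')}\big) \;=\; \big(L(\xi) - L(\xi_{\mathrm{row}}^{(i)})\big) + \big(L(\xi) - L(\xi_{\mathrm{row}}^{(i')})\big)
$$
is established, the claim is immediate: if both terms on the right are nonnegative (i.e.\ $L(\xi_{\mathrm{row}}^{(i)})\le L(\xi)$ and $L(\xi_{\mathrm{row}}^{(i')})\le L(\xi)$), then the left side is nonnegative, i.e.\ $L(\xi_{\mathrm{row}}^{(i,i')})\le L(\xi)$.

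First I would record what each single flip changes. Flipping the membership of a single user $i$ from $\xi_\mathcal{M}$ to $\xi_\mathcal{W}$ corresponds to a type class with $k_1 = 1$, $k_2 = 0$, so by~\eqref{eq:18} the difference $L(\xi) - L(\xi_{\mathrm{row}}^{(i)})$ is $d_1\Gamma_1^{(i)} - f(\theta)\Gamma_3^{(i)}$ (the $\Gamma_2$ term vanishes since the movie partition is untouched). Here $\Gamma_1^{(i)}$ is a signed sum over the $g_1 = n\cdot 1 - 2\cdot 1^2 = n-2$ edges incident to $i$ whose ``same/different cluster'' status toggles under the flip, and $\Gamma_3^{(i)}$ is a signed sum over the $g_3 = 2m\cdot1 + 2n\cdot 0 - 0 = 2m$ matrix entries in row $i$. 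The key structural point is that for the double flip $\xi_{\mathrm{row}}^{(i,i')}$, the corresponding $\Gamma_1^{(i,i')}$ decomposes into the edges incident to $i$ plus the edges incident to $i'$ — and these two edge sets are disjoint \emph{except} for the single edge $\{i,i'\}$ itself. On $\Delta_1$, with $i\in\xi_{P_1}\subseteq\xi_\mathcal{M}$ and $i'\in\xi_{Q_1}\subseteq\xi_\mathcal{W}$ both isolated within $\xi_{P_1}\cup\xi_{Q_1}$, the edge $\{i,i'\}$ is absent; moreover, whether present or not, one must check that its contribution to $\Gamma_1^{(i,i')}$ either cancels or is accounted for consistently — this is where the non-adjacency hypothesis does its work, killing the lone overlap term so that $\Gamma_1^{(i,i')} = \Gamma_1^{(i)} + \Gamma_1^{(i')}$ exactly. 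Similarly, row $i$ and row $i'$ of the matrix are disjoint, so $\Gamma_3^{(i,i')} = \Gamma_3^{(i)} + \Gamma_3^{(i')}$ with no overlap at all. Substituting into~\eqref{eq:18} gives the desired additivity.

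The main obstacle, and the step deserving the most care, is the precise bookkeeping of the edge $\{i,i'\}$ in $\Gamma_1^{(i,i')}$: under $\xi$ the pair $(i,i')$ straddles the two clusters, and under $\xi_{\mathrm{row}}^{(i,i')}$ it again straddles the two clusters (since both endpoints flip), so the ``same/different'' status of that particular pair is actually \emph{unchanged} by the double flip — hence it contributes nothing to $\Gamma_1^{(i,i')}$ regardless. By contrast, it \emph{does} flip status under each single flip $\xi_{\mathrm{row}}^{(i)}$ and $\xi_{\mathrm{row}}^{(i')}$ individually. So the two single-flip $\Gamma_1$'s each contain a $\pm(X_{ii'}-Y_{ii'})$-type term for this edge that is absent from the double-flip $\Gamma_1$; these two stray terms must cancel against each other for additivity to hold, and that cancellation is exactly guaranteed when $\{i,i'\}\notin\e_1$, i.e.\ on $\Delta_1$ (the relevant Bernoulli indicator for that edge is zero in both single-flip sums). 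I would verify this cancellation carefully by writing out the four blocks of~\eqref{eq:fan1} for each of the three variants and matching terms; the matrix part is routine by comparison. This argument is the two-graph analogue of~\cite[Lemma 6]{ahnsupp}, and the movie graph plays no role since the movie partition is identical across all three variants.
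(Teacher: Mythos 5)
Your proposal is correct, and it in fact supplies more than the paper does: the paper does not prove this lemma at all but simply borrows it from \cite[Lemma 6]{ahnsupp} (the single-graph setting of Ahn et al.), noting that the movie-graph term is untouched by row flips. Your additivity argument is exactly the standard proof behind that citation. Concretely, writing $L(\xi)-L(\xi')=d_1\Gamma_1+d_2\Gamma_2-f(\theta)\Gamma_3$ with $\Gamma_2=0$ for all three row variants, one gets $\Gamma_3^{(i,i')}=\Gamma_3^{(i)}+\Gamma_3^{(i')}$ exactly (disjoint rows), and for the graph term
\begin{align*}
\bigl(L(\xi)-L(\xi_{\mathrm{row}}^{(i)})\bigr)+\bigl(L(\xi)-L(\xi_{\mathrm{row}}^{(i')})\bigr)
=\bigl(L(\xi)-L(\xi_{\mathrm{row}}^{(i,i')})\bigr)+2d_1\,\mathbbm{1}\{(i,i')\in\e_1\},
\end{align*}
and on $\Delta_1$ the indicator is zero (by the construction of $\xi_{P_1},\xi_{Q_1}$), giving exact additivity and hence the claim.

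Two small inaccuracies in your write-up are worth fixing, though neither affects the logic. First, the two ``stray'' terms from the single flips do not cancel against each other: under the ground truth the pair $(i,i')$ is cross-cluster, so each single-flip $\Gamma_1$ picks up the same term $+Y_{ii'}$ (not an $X_{ii'}-Y_{ii'}$ combination), and the sum exceeds $\Gamma_1^{(i,i')}$ by $2Y_{ii'}$; these terms vanish only because the edge is absent on $\Delta_1$, which is precisely your parenthetical remark --- and it is genuinely needed, since if the edge were present and $d_1>0$ the correction $-2d_1Y_{ii'}$ would point the wrong way. Second, the count $g_1=nk_1-2k_1^2=n-2$ pertains to the valid double-flip class $\Xi_\xi(1,0)$; a single flip of user $i$ toggles $n-1$ pairs ($n/2$ copies of $Y$ and $n/2-1$ copies of $X$), which is consistent with the paper's application of Lemma~\ref{lemma:chernoff1} with $K=(n/2)-1$. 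With these wording corrections your proof is complete and matches the intended argument.
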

Without loss of generality, we assume $1 \in \xi_{P_1}$ and $(n/2)+1 \in \xi_{Q_1}$. It is worth noting that conditioned on $\Delta_1$ and $\xi$, the events $\{L(\xi_{\text{row}}^{(i)}) > L(\xi)\}$ for different $i \in \xi_{P_1}$ are mutually independent, thus the first term of~\eqref{eq:41} equals
\begin{align} \PP_{\xi}\!\left(\!L(\xi_{\text{row}}^{(1)}) \!>\! L(\xi) \big| \Delta_1 \!\right)^{\!|\xi_{P_1}|}\!\!\! =\! \PP_{\xi}\!\left(\!L(\xi_{\text{row}}^{(1)}) \!>\! L(\xi) \big| \Delta_1 \!\right)^{\!r_1}. \label{eq:mul1}
\end{align}
Similarly, the events $\{L(\xi_{\text{row}}^{(i')}) > L(\xi)\}$ for different $i' \in \xi_{Q_1}$ are mutually independent, thus the second term of~\eqref{eq:41} equals
\begin{align}
 \PP_{\xi}\left(L(\xi_{\text{row}}^{(\frac{n}{2}+1)}) > L(\xi) \big| \Delta_1 \right)^{r_1}. \label{eq:mul2}
\end{align}

\begin{remark}{\em
	The main purpose of introducing $\xi_{P_1}$ and $\xi_{Q_1}$ is to ensure that the events $\{L(\xi_{\text{row}}^{(i)}) > L(\xi)\}_{i \in \xi_{P_1}}$ are mutually independent and the events $\{L(\xi_{\text{row}}^{(i')}) > L(\xi)\}_{i' \in \xi_{Q_1}}$ are mutually independent. }
\end{remark}

By noting that $\PP_{\xi}(\Delta_1) \ge 1 - \exp\left(-\frac{\eta^2(\alpha_1+\beta_1)n^2}{\log^4 n}\right)$ and
\begin{align*}
\PP_{\xi}\!\left(\!L(\xi_{\text{row}}^{(1)}) > L(\xi) \right) &\ge \PP_{\xi}\!\left(L(\xi_{\text{row}}^{(1)}) > L(\xi) \big| \Delta_1 \!\right) \PP_{\xi}(\Delta_1),
\end{align*}
we have 
\begin{align}
\PP_{\xi}\!\left(L(\xi_{\text{row}}^{(1)}) > L(\xi) \big| \Delta_1 \!\right) \le \frac{\PP_{\xi}\!\left(L(\xi_{\text{row}}^{(1)}) > L(\xi)\!\right)}{1\! -\! \exp\!\left(\!-\frac{\eta^2(\alpha_1+\beta_1)n^2}{\log^4 n}\right)}.  \label{eq:xing1}
\end{align}
It then remains to provide an upper bound on $\PP_{\xi}(L(\xi_{\text{row}}^{(1)}) > L(\xi))$. By substituting $\xi'$ with $\xi_{\text{row}}^{(1)}$ in~\eqref{eq:18}, one can formulate $L(\xi) - L(\xi_{\text{row}}^{(1)})$ as per~\eqref{eq:18}, wherein  $\Gamma_1, \Gamma_2, \Gamma_3$ are represented in terms of  $\s^c = \{(1,j)\}_{j=1}^m$ and
\begin{align*}
&\Imm = \Big[2:\frac{n}{2}\Big], \ \Iww = \Big[\frac{n}{2}+1:n\Big], \ \Imw = \emptyset, \ \Iwm = \{1\}.
\end{align*}
We apply Lemma~\ref{lemma:chernoff1} (with $K= (n/2)-1$, $L_1 = m$, $\theta_1 = \theta$, $L_2 = 0$) to bound $\PP_{\xi}(L(\xi_{\text{row}}^{(1)}) \le L(\xi))$ from below by 
\begin{align*}
	\frac{1}{4}\exp\left\{-(1+o(1))\left(\frac{n}{2}-1\right)I_1\frac{\log n}{n} - (1+o(1))mp \ttt \right\}.
\end{align*}
For sufficiently large $n$, we have
\begin{align}
&\PP_{\xi}\left(L(\xi_{\text{row}}^{(1)}) > L(\xi)  \right)^{r_1} \notag \\
&= \left(1 - \PP_{\xi}\left(L(\xi_{\text{row}}^{(1)}) \le L(\xi)  \right) \right)^{r_1} \notag \\ 
& \le  \exp\Big\{-\frac{1}{4} e^{-(1+o(1))\left(\frac{n}{2}-1\right)I_{1}\frac{\log n}{n} -(1+o(1)) mp\ttt} \Big\}^{r_1} \notag \\
&\le \exp(-n^{\frac{\epsilon}{2}}/4), \label{eq:xing2}
\end{align}
where the last step holds since $p \le  \frac{(2(1-\epsilon) - I_{1}) \log n}{2\ttt m}$. Combining~\eqref{eq:xing1} and~\eqref{eq:xing2}, one can upper-bound~\eqref{eq:mul1} as
\begin{align}
\PP_{\xi}\left(L(\xi_{\text{row}}^{(1)}) > L(\xi) \big| \Delta_1 \right)^{r_1} \le 2\exp(-n^{\frac{\epsilon}{2}}/4). \label{eq:rou1}
\end{align}
Similarly, one can upper-bound~\eqref{eq:mul2}  as
\begin{align}
\PP_{\xi}\left(L(\xi_{\text{row}}^{(\frac{n}{2}+1)}) > L(\xi) \big| \Delta_1\right)^{r_1} \le 2\exp(-n^{\frac{\epsilon}{2}}/4). \label{eq:rou2}
\end{align}
Finally, by~\eqref{eq:41},~\eqref{eq:mul1},~\eqref{eq:mul2},~\eqref{eq:rou1},~\eqref{eq:rou2}, and Lemma~\ref{lemma:strong}, we have that for sufficiently large $n$, the left-hand side (LHS) of~\eqref{eq:first} is upper-bounded by $5\exp(-n^{\frac{\epsilon}{2}}/4)$.

In a completely similar and symmetric fashion, one can also prove  inequality~\eqref{eq:second} in Lemma~\ref{lemma:10} (which focuses on the type class $\Xi_{\xi}(0,1)$). We omit it here for brevity. 
\end{proof}

\section{Proof Sketches of Theorems~\ref{thm:1} and~\ref{thm:2}} \label{sec:model2}
Due to the existence of atypical movies and different personalization parameters $\ta$ and $\tr$, proving the lower and upper bounds for Model 2 is more challenging. Fortunately, the key techniques used in Section~\ref{sec:simple} for both lower and upper bounds are still applicable to this more complicated model. Hence, for brevity, we respectively \emph{sketch} the proofs of Theorems~\ref{thm:1} and~\ref{thm:2} in Subsection~\ref{sec:thm1} and~\ref{sec:thm2} by highlighting the steps that are different from Section~\ref{sec:simple}. 

\subsection{Proof Sketch of Achievability (Theorem~\ref{thm:1})} \label{sec:thm1}
Suppose the sample probability $p$ satisfies inequality~\eqref{eq:achievability1} when $\ta \ne \tr$ or inequality~\eqref{eq:achievability2} when $\ta = \tr$. 
The achievability proof for Theorem~\ref{thm:1} also relies on the ML estimator $\phi_{\text{ML}}(\U, G_1,G_2)$  defined in~\eqref{eq:ml}. Recall from~\eqref{eq:error} that 
\begin{align}
P_{\text{err}}(\phi_{\text{ML}}) = \max_{\xi \in \Xi}\PP_\xi(\phi_{\text{ML}}(\U, G_1,G_2) \ne \xi). \label{eq:err}
\end{align}
We first consider a specific ground truth
$\xi \in \Xi$, and upper-bound the error probability $\PP_\xi(\phi_{\text{ML}}(\U, G_1,G_2) \ne \xi)$. The fraction of atypical movies in $\xi$ can be arbitrary.

Analogous to~\eqref{eq:sy2} in Section~\ref{sec:achievability}, we have
\begin{align}
\PP_\xi(\phi_{\text{ML}}(\U, G_1,G_2) \!\ne\! \xi) \!\le\! \sum_{\xi' \in \Xi: \xi' \ne \xi} \PP_{\xi}(L(\xi') \!\le\! L(\xi)). \label{eq:40}
\end{align}
Consider another instance $\xi' \in \Xi$ such that $\xi' \ne \xi$, and recall the definitions of $\Iaa,\Iar, \Ira, \Irr$ in~\eqref{eq:i1} and~\eqref{eq:i2}. Let 
\begin{align*}
&\saa \triangleq [n] \times \Iaa, \quad   \sar \triangleq [n] \times \Iar, \\
&\sra \triangleq [n] \times \Ira,  \quad \srr \triangleq [n] \times \Irr,
\end{align*}
respectively be the set of $(i,j)$-pairs such that movie $j$ belongs to $\Iaa,\Iar,\Ira, \Irr$. For $u, v \in \{\mathrm{a},\mathrm{r}\}$, we further partition $\mathcal{S}_{uv}$ into two subsets:
\begin{align*}
\mathcal{S}_{uv}^{\text{e}} &\triangleq \{(i,j)\in \mathcal{S}_{uv}: \bbij = \bij\}, \\
\mathcal{S}_{uv}^{\text{ue}} &\triangleq \{(i,j)\in \mathcal{S}_{uv}: \bbij \ne \bij\},
\end{align*}
where $\mathcal{S}_{uv}^{\text{e}}$ (resp. $\mathcal{S}_{uv}^{\text{ue}}$) contains all the pairs $(i,j) \in \mathcal{S}_{uv}$ such that the $(i,j)$-th entry in $B_{\xi}$ and the $(i,j)$-th entry in $B_{\xi'}$ are coincident (resp. different). One can adapt the calculations in~\eqref{eq:fan1}-\eqref{eq:fine} to Model 2 to obtain 
\begin{align}
&L(\xi) - L(\xi') = d_1 \Gamma_1 + d_2 \Gamma_2-  \!\!\!\sum_{u,v \in \{\mathrm{a},\mathrm{r}\}} \!\! \left(\Gamma^{\text{e}}_{uv} + \Gamma^{\text{ue}}_{uv} \right),  \notag \\
&\text{where} \ \Gamma^{\text{e}}_{uv}  \! \triangleq \!\!\!\sum_{(i,j) \in \mathcal{S}_{uv}^{\text{e}}} \!\!\!\! [f(\theta_u) \!-\! f(\theta_v)]\zij (1\!-\!\tij^{u}) + \zij\log\frac{\theta_u}{\theta_v}, \notag \\
&\Gamma^{\text{ue}}_{uv} \! \triangleq \sum_{(i,j) \in \mathcal{S}_{uv}^{\text{ue}}} \!\!\!\!\!\! - [f(\theta_u) \!+\! f(\theta_v)]\zij \tij^{u} +\zij \log\frac{1\!-\!\theta_u}{\theta_v}, \label{eq:42}
\end{align}
and $\Gamma_1$ and $\Gamma_2$ are respectively defined in~\eqref{eq:fan1} and~\eqref{eq:fine}.
Note that the expression above is parallel to~\eqref{eq:18} for Model 1. Applying  Chernoff bound (Lemma~\ref{lemma:generic})  with $t = 1/2$, we can upper-bound $\PP_{\xi}(L(\xi') \le L(\xi))$ by
\begin{align}
\exp\!\bigg\{\!\!\!-\!g_1I_{1}\frac{\log n}{n} \!-\! g_2I_{2}\frac{\log m}{m} \!-
\! \!\!\!\!\sum_{u,v \in \{\mathrm{a},\mathrm{r}\}}\!\!\!\! \left|\mathcal{S}_{uv}^{\text{e}}\right|  p\tau_{uv} \!+\! \left|\mathcal{S}_{uv}^{\text{ue}}\right| p\nu_{uv} \!\bigg\}. \label{eq:zhao}
\end{align}
To calculate the $|\mathcal{S}_{uv}^{\text{e}}|$ and $|\mathcal{S}_{uv}^{\text{ue}}|$, we define 
\begin{align*}
	t_{uv} \triangleq \sum_{j \in \mathcal{I}_{uv}} \mathbbm{1}\{\bbij \ne \bij \}, \quad \text{for} \ u, v \in \{\mathrm{a},\mathrm{r}\},
\end{align*}
where $i \in [n]$ is arbitrary. Then, one can show that
\begin{align*}
&\left|\suv\right| = \begin{cases} n(\frac{m}{2}-k_2), & \text{if } (u,v)=(\mathrm{a},\mathrm{a}), (\mathrm{r},\mathrm{r}), \\
nk_2,  & \text{if } (u,v)=(\mathrm{a},\mathrm{r}), (\mathrm{r},\mathrm{a}),
\end{cases}\\
&\left|\suv^{\text{ue}}\right| = \left(n-2k_1\right)t_{uv} + 2k_1\left(\left|\suv\right| - t_{uv}\right),
\end{align*}
and $\left|\suv^{\text{e}}\right| = \left|\suv\right|- \left|\suv^{\text{ue}}\right|$.
By routine calculations and using the fact that $\tarr \ge \tar$ for any $\ta,\tr \in (0,\frac{1}{2})$, we further upper-bound~\eqref{eq:zhao} by
\begin{align}
	&\exp\!\bigg\{\!-g_1I_{1}\frac{\log n}{n} - g_2I_{2}\frac{\log m}{m}- \Phi(k_1,k_2,t_{\mathrm{aa}}, t_{\mathrm{rr}}) \bigg\}, \label{eq:bound} \\
	&\text{where} \ \Phi(k_1,k_2,t_{\mathrm{aa}}, t_{\mathrm{rr}}) \!=\! 2\tar nk_2 \!+\! k_1\left(m\!-\!2k_2\right)(\taaa\!+\!\trrr) \notag \\
	&\qquad\qquad\qquad\qquad+ (t_{\mathrm{aa}} + t_{\mathrm{rr}})(n-4k_1)\min\{\taaa,\trrr\}. \notag
\end{align}
Note that~\eqref{eq:bound} depends only on   $k_1,k_2,t_{\mathrm{aa}}, t_{\mathrm{rr}}$. Analogous to the type class decomposition technique used in~\eqref{eq:ni}-\eqref{eq:t4}, we define $\Xi_{\xi}(k_1,k_2,t_{\mathrm{aa}},t_{\mathrm{rr}})$ as the set of $\xi'$ with the same error probability $\PP_{\xi}(L(\xi') \le L(\xi))$, and
\begin{align*}
&\mathcal{T}' \!\triangleq \! \left\{(t_{\mathrm{aa}},t_{\mathrm{rr}})\!: t_{\mathrm{aa}} \in \left[0:(m/2)\!-\!k_2\right], t_{\mathrm{rr}} \in \left[0:(m/2)\!-\!k_2\right] \right\}.
\end{align*}
For any $\epsilon > 0$, let the auxiliary parameter $\delta_{\epsilon} \triangleq \min\{\frac{\epsilon}{2I_2}, \frac{\epsilon}{8(1+\epsilon)} \}$. Recalling the definitions of $\mathcal{T}_1, \mathcal{T}_2, \mathcal{T}_3, \mathcal{T}_4$ in~\eqref{eq:t1}-\eqref{eq:t4}, we can bound the error probability in~\eqref{eq:40} by
\begin{align}
\sum_{i=1}^4 \sum_{(k_1,k_2) \in \mathcal{T}_i} \sum_{(t_{\mathrm{aa}},t_{\mathrm{rr}}) \in \mathcal{T}'} |\Xi_{\xi}(k_1,k_2,t_{\mathrm{aa}},t_{\mathrm{rr}})|  \PP_{\xi}(L(\xi') \le L(\xi)). \notag
\end{align}
It then remains to show that the error probabilities induced by each of the four subsets $\mathcal{T}_1$, $\mathcal{T}_2$, $\mathcal{T}_3$, and $\mathcal{T}_4$ vanish. 

\subsubsection{Case 1: $k_1 \le \delta_{\epsilon} n$ and $k_2 \le \delta_{\epsilon} m$}
Let 
\begin{align*}
&\Upsilon_1 \triangleq \left(1-2\delta_{\epsilon}\right)I_{1}(\log n) + pm\left(1-2\delta_{\epsilon}\right)(\taaa+\trrr), \quad \text{and}\\
&\Upsilon_2 \triangleq \left(1-2\delta_{\epsilon}\right)I_{2}(\log m) + 2pn \tar.
\end{align*}
In this case, the bound~\eqref{eq:bound} for $\PP_{\xi}\left(L(\xi') \le L(\xi) \right)$ can be further bounded   by
\begin{align}
&\exp\left\{-k_1\Upsilon_1 -k_2\Upsilon_2 -pn(t_{\mathrm{aa}}+t_{\mathrm{rr}})(1-4\delta_{\epsilon})\min\{\taaa,\trrr\} \right\}. \notag
\end{align}
Since $p$ satisfies the conditions in Theorem~\ref{thm:1}, we have 
\begin{align*}
&k_1\Upsilon_1 \ge k_1\left(2+\epsilon\right) \log n, \quad \text{and} \\
&k_2\Upsilon_2 +pn(t_{\mathrm{aa}}+t_{\mathrm{rr}})(1-4\delta_{\epsilon})\min\{\taaa,\trrr\} \\
&\qquad\qquad\qquad\ge \left(1+(\epsilon/2)\right)(2k_2 + t_{\mathrm{aa}} +t_{\mathrm{rr}}) \log m.
\end{align*}
Moreover, by noting that the size of  $\Xi_{\xi}(k_1,k_2,t_{\mathrm{aa}},t_{\mathrm{rr}})$ is
\begin{align}
& \binom{\frac{n}{2}}{k_1}^2\binom{\frac{m}{2}}{k_2}^2\binom{\frac{m}{2}-k_2}{t_{\mathrm{aa}}}\binom{\frac{m}{2}-k_2}{t_{\mathrm{rr}}} \sum_{t_{\mathrm{ar}}=0}^{k_2} \sum_{t_{\mathrm{ra}}=0}^{k_2} \binom{k_2}{t_{\mathrm{ar}}}\binom{k_2}{t_{\mathrm{ra}}} \notag \\
& \le \exp\left(2k_1 \log n\right) \exp\left((2k_2+t_{\mathrm{aa}}+t_{\mathrm{rr}}) \log m + 2k_2\right), \notag 
\end{align}
we eventually obtain that 
\begin{align*}
&\sum_{ \substack{(k_1,k_2)\\ \in \mathcal{T}_1}} \sum_{\substack{(t_{\mathrm{aa}},t_{\mathrm{rr}}) \\\in \mathcal{T}'}} \!\!\!|\Xi_{\xi}(k_1,k_2,t_{\mathrm{aa}},t_{\mathrm{rr}})|  \PP_{\xi}(L(\xi') \le L(\xi)) \!\le\! 16m^{-\epsilon},\\
&\sum_{\substack{(k_1,k_2) \\ \in \mathcal{T}_2}} \sum_{\substack{(t_{\mathrm{aa}},t_{\mathrm{rr}})\\ \in \mathcal{T}'}} \!\!\!|\Xi_{\xi}(k_1,k_2,t_{\mathrm{aa}},t_{\mathrm{rr}})|  \PP_{\xi}(L(\xi') \le L(\xi)) \!\le\! 32n^{-\epsilon}.
\end{align*}

\subsubsection{Case 2: $k_1 > \delta_{\epsilon} n$ or $k_2 > \delta_{\epsilon} m$}
In this case, one can apply similar techniques used in~\eqref{eq:case2} to show that the error probability $\PP_{\xi}\left(L(\xi') \le L(\xi) \right) \le e^{-\Omega(m\log m + n \log n)}$ and the size of $\Xi$ is bounded by $2^{n + 3m}$. Thus,
\begin{align}
&\sum_{(k_1,k_2)\in \mathcal{T}_3 \cup \mathcal{T}_4} \sum_{ (t_{\mathrm{aa}},t_{\mathrm{rr}})\in \mathcal{T}'} \!\! |\Xi_{\xi}(k_1,k_2,t_{\mathrm{aa}},t_{\mathrm{rr}})| \cdot \PP_{\xi}(L(\xi') \!\le\! L(\xi)) \notag \\
&\qquad\qquad\qquad\qquad\qquad\qquad\qquad\qquad\qquad= o(1). \label{eq:final}
\end{align}
Combining the analysis from~\eqref{eq:40} to~\eqref{eq:final}, we obtain that $\PP_\xi(\phi_{\text{ML}}(\U, G_1,G_2) \!\ne\! \xi) = o(1)$. By noting that the analysis from~\eqref{eq:40} to~\eqref{eq:final} is valid for any $\xi \in \Xi$ with an arbitrary fraction of atypical movies, one can eventually show that $$P_{\text{err}}(\phi_{\text{ML}}) = \max_{\xi \in \Xi}\PP_\xi(\phi_{\text{ML}}(\U, G_1,G_2) \ne \xi) = o(1).$$

\subsection{Proof Sketch of Converse (Theorem~\ref{thm:2})} \label{sec:thm2}

The following shows that for any $\epsilon > 0$, the error probability $P_{\text{err}} \to 1$ as $n \to \infty$ if the sample probability $p$ satisfies  the conditions in Theorem~\ref{thm:2}.

Similar to~\eqref{eq:sj1} in Subsection~\ref{sec:converse}, we first consider the ML estimator $\phi_{\mathrm{ML}}$ with respect to a specific ground truth $\xi$. Note that the success probability $P_{\text{suc}}(\xi)$ is upper-bounded by $\PP_{\xi}(\bigcap_{\xi' \in \Xi_{\xi}(k_1,k_2,t_{\mathrm{aa}},t_{\mathrm{rr}})}\{L(\xi') > L(\xi)\})$ for any type class $\Xi_{\xi}(k_1,k_2,t_{\mathrm{aa}},t_{\mathrm{rr}})$. The following lemma implies that the success probability $P_{\text{suc}}(\xi) \to 0$  even if we restrict our analysis to a specific type class $\Xi_{\xi}(k_1,k_2,t_{\mathrm{aa}},t_{\mathrm{rr}})$.

\begin{lemma}\label{lemma:1000}
Consider any ground truth $\xi \in \Xi$ and sufficiently large $n$.	(i) When $p < \frac{(2(1-\epsilon)-I_1)\log n}{(\taaa + \trrr)m}$, we have 
\begin{align} \PP_{\xi}\Bigg(\bigcap_{\xi' \in \Xi_{\xi}(1,0,0,0)} \!\!\{L(\xi') \!>\! L(\xi)\}\Bigg) \le 5\exp\Big(-\frac{1}{4}n^{\frac{\epsilon}{2}}\Big).\label{eq:1000}
\end{align}
(ii) When $p < \frac{(1-\epsilon)\log m}{\taaa n}$, we have
\begin{align} \PP_{\xi}\Bigg(\bigcap_{\xi' \in \Xi_{\xi}(0,0,1,0)} \!\!\{L(\xi') \!>\! L(\xi)\}\Bigg) \le \exp\Big(-\frac{1}{8}m^{\frac{\epsilon}{2}}\Big). \label{eq:0010}
\end{align}
(iii) When $p < \frac{(1-\epsilon)\log m}{\trrr n}$, we have
\begin{align} \PP_{\xi}\Bigg(\bigcap_{\xi' \in \Xi_{\xi}(0,0,0,1)} \!\!\{L(\xi') \!>\! L(\xi)\}\Bigg) \le \exp\Big(-\frac{1}{8}m^{\frac{\epsilon}{2}}\Big). \label{eq:0001}
\end{align}
(iv) When (a) $\ta \ne \tr$ and  $p < \frac{((1-\epsilon)-I_2)\log m}{2\tar n}$ or (b) $\ta = \tr$ and $I_2 \le 2(1-\epsilon)$, we have
\begin{align} \PP_{\xi}\Bigg(\bigcap_{\xi' \in \Xi_{\xi}(0,1,0,0)} \!\!\!\{L(\xi') \!>\! L(\xi)\}\Bigg) \le 5\exp\Big(-\frac{1}{4}m^{\frac{\epsilon}{2}}\Big). \label{eq:0100}
\end{align}
\end{lemma}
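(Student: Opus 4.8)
The plan is to replay the proof of Lemma~\ref{lemma:10} (Section~\ref{sec:converse}) separately for each of the four ``confusability directions'', adapting its three ingredients---the isolation lemma (Lemma~\ref{lemma:strong}), the composition lemma (Lemma~\ref{lemma:compose}), and the Chernoff-tightness bound (Lemma~\ref{lemma:chernoff1})---to Model~2. In each part I would fix an arbitrary ground truth $\xi$ (the fraction of atypical movies being irrelevant), restrict attention to the single type class named in the statement, reduce the intersection over that type class to a \emph{product} of mutually independent single-swap events, lower-bound each single-swap ``error'' probability $\PP_{\xi}(L(\xi')\le L(\xi))$ by a tightness estimate, and conclude that the product vanishes because $p$ lies below the stated threshold.

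Directions (ii) and (iii) are the cleanest and I would treat them first. Here $\xi'$ merely flips the typical/atypical status of one action (resp.\ romance) movie $j$, so no cluster membership changes, $\Gamma_1=\Gamma_2=0$ in~\eqref{eq:42}, and $L(\xi)-L(\xi')$ depends only on the $j$-th column $\{\U_{ij}\}_{i\in[n]}$. Consequently \emph{no isolation step is needed}: the events $\{L(\xi'^{(j)})>L(\xi)\}$ over the $m/2$ same-type movies are already mutually independent, so the intersection probability equals $\PP_{\xi}(L(\xi'^{(1)})>L(\xi))^{m/2}$. Since for $u=v$ the two offsets in $\Gamma^{\text{ue}}_{uu}$ collapse to $\zij f(\theta_u)(1-2\tij^{u})$, this is exactly the single-parameter, graph-free instance of Lemma~\ref{lemma:chernoff1} (its hypotheses hold when $pn=\omega(1)$; the complementary regime $pn=O(1)$ is even easier, column $j$ being then unobserved with constant probability), giving $\PP_{\xi}(L(\xi'^{(1)})\le L(\xi))\ge\tfrac14\exp\{-(1+o(1))\,pn\,h(\theta_u)\}$ with $h(\ta)=\taaa$ and $h(\tr)=\trrr$. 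Combining $(1-x)^{m/2}\le e^{-xm/2}$ with the hypotheses $p<\frac{(1-\epsilon)\log m}{\taaa n}$ and $p<\frac{(1-\epsilon)\log m}{\trrr n}$ makes $\tfrac m2 e^{-(1+o(1))pnh(\theta_u)}\ge\tfrac18 m^{\epsilon/2}$ for large $m$, which yields~\eqref{eq:0010} and~\eqref{eq:0001}.

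Directions (i) and (iv) follow the full template of Lemma~\ref{lemma:10}(i)--(ii). For (i) I would condition on the isolation event $\Delta_1$ of Lemma~\ref{lemma:strong}, apply Lemma~\ref{lemma:compose} to bound the intersection over $\Xi_{\xi}(1,0,0,0)$ by $\PP_{\xi}(L(\xi_{\text{row}}^{(1)})>L(\xi)\mid\Delta_1)^{r_1}+\PP_{\xi}(L(\xi_{\text{row}}^{(\frac{n}{2}+1)})>L(\xi)\mid\Delta_1)^{r_1}+\PP_{\xi}(\Delta_1^c)$, and lower-bound each single-flip error probability; the only change from Model~1 is that flipping one user perturbs \emph{every} column, with action columns contributing $\mathcal{S}^{\text{ue}}$-type increments at rate $\taaa=h(\ta)$ and romance columns at rate $\trrr=h(\tr)$. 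Thus Lemma~\ref{lemma:chernoff1} applies directly with $K\approx\tfrac n2$ and two rating blocks of size $\tfrac m2$ (parameters $\ta$ and $\tr$), giving $\PP_{\xi}(L(\xi_{\text{row}}^{(1)})\le L(\xi))\ge\tfrac14\exp\{-(1+o(1))(\tfrac{I_1}{2}\log n+\tfrac{pm}{2}(\taaa+\trrr))\}$, and $p<\frac{(2(1-\epsilon)-I_1)\log n}{(\taaa+\trrr)m}$ forces the exponent below $(1-\epsilon)\log n$, so raising to the power $r_1=n/\log^2 n$ gives~\eqref{eq:1000}. Part (iv) is the symmetric statement for $G_2$: using a $G_2$-analogue of Lemma~\ref{lemma:strong} (valid since $m=\omega(\log n)$) and a movie-analogue of Lemma~\ref{lemma:compose}, the intersection over $\Xi_{\xi}(0,1,0,0)$ factors into a product over $r_2=m/\log^2 m$ isolated movies; for a single move of an action movie into the romance cluster the cheapest alternative keeps the nominal column fixed (typical action $\leftrightarrow$ atypical romance), so the exponent is $(1+o(1))(\tfrac{I_2}{2}\log m+pn\tar)$. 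When $\ta\ne\tr$, $p<\frac{((1-\epsilon)-I_2)\log m}{2\tar n}$ pushes this below $\tfrac{1-\epsilon}{2}\log m$; when $\ta=\tr$ the rate $\tar$ vanishes and the argument collapses to the classical SBM impossibility estimate, with $I_2\le2(1-\epsilon)$ pushing the exponent below $(1-\epsilon)\log m$. Either way, raising to the power $r_2$ yields~\eqref{eq:0100}.

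The surrounding bookkeeping---type-class sizes, the $G_2$-counterparts of Lemmas~\ref{lemma:strong} and~\ref{lemma:compose}, the union over isolated vertices---is routine. I expect the main obstacle to be the tightness bound on $\PP_{\xi}(L(\xi')\le L(\xi))$ in direction (iv) when $\ta\ne\tr$: there the relevant increment is a $\Gamma^{\text{e}}_{\mathrm{ar}}$-type sum whose summands carry the non-standard coefficient $f(\ta)-f(\tr)$ together with the deterministic offset $\log(\theta_u/\theta_v)$ (and, for the $\mathcal{S}^{\text{ue}}$ variant, $\log\frac{1-\theta_u}{\theta_v}$) appearing in~\eqref{eq:42}---features with no analogue in Model~1, so Lemma~\ref{lemma:chernoff1} (Lemma~4 of~\cite{ahnsupp}) does not apply verbatim. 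One must show that the Chernoff upper bound on such an i.i.d.\ sum of bounded, offset-carrying increments is tight on the exponential scale, which calls for a Bahadur--Rao / local-central-limit refinement tracking this heterogeneous block; establishing this sharp reverse bound is where the bulk of the work lies.
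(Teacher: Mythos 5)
Your treatment of parts (i), (ii), (iii) and of part (iv) in the case $\ta=\tr$ coincides with the paper's proof: (i) is the Model-2 replay of Lemma~\ref{lemma:10} via $\Delta_1$, Lemma~\ref{lemma:compose} and Lemma~\ref{lemma:chernoff1} with $K=(n/2)-1$, $L_1=L_2=m/2$, $(\theta_1,\theta_2)=(\ta,\tr)$; (ii)--(iii) use that the $m/2$ single-typicality-flip events are already independent, with Lemma~\ref{lemma:chernoff1} applied with $L_1=n$, $K=L_2=0$; and (iv)(b) reduces to the pure-graph estimate~\eqref{eq:55}--\eqref{eq:56} with Lemma~\ref{lemma:strong2}.

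The genuine gap is in part (iv) with $\ta\ne\tr$. You reduce the intersection over $\Xi_{\xi}(0,1,0,0)$ to single-movie moves via a movie analogue of Lemma~\ref{lemma:compose} and then need the lower bound $\PP_{\xi}(L(\xi')\le L(\xi))\ge\tfrac14\exp\{-(1+o(1))(\tfrac{I_2}{2}\log m+pn\tar)\}$ for one column, deferring its proof to a Bahadur--Rao refinement. But $\tfrac{I_2}{2}\log m+pn\tar$ is only the $t=\tfrac12$ Chernoff \emph{upper}-bound exponent; the single-move statistic is an asymmetric test of $\mathrm{Bern}(\ta)$ against $\mathrm{Bern}(\tr)$ along an entire column (the increments carry $f(\ta)-f(\tr)$ and the drift $\log(\ta/\tr)$), so its true decay rate is governed by the Chernoff information between $\mathrm{Bern}(\ta)$ and $\mathrm{Bern}(\tr)$, attained at a tilting $t^*\ne\tfrac12$ and in general strictly larger than $\tar$ per sampled entry. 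Hence the lower bound you need is simply false in general, and no local-CLT refinement can recover it; worse, with the correct (larger) exponent the per-move failure probability can fall below the $m^{-1+\Omega(\epsilon)}$ level needed for the product over $r_2$ moves, so the stated threshold $p<\frac{((1-\epsilon)-I_2)\log m}{2\tar n}$ cannot be reached this way. This is exactly the obstruction recorded in Remark~\ref{remark:challenge}: the paper avoids single moves and instead considers the \emph{symmetric pair swap} $\xi_{\text{col}}^{(j,j')}$ with $t_{\mathrm{ar}}=t_{\mathrm{ra}}=0$ (typical action $\to$ atypical romance together with typical romance $\to$ atypical action, both nominal columns unchanged), for which the log-likelihood difference is symmetric so the $t=\tfrac12$ bound is tight, giving~\eqref{eq:51} with exponent $I_2\log m+2np\tar$; independence is then obtained only along a matching $\xi_{P_2,Q_2}$ of $r_2$ disjoint isolated pairs as in~\eqref{eq:sas}, which is precisely why the converse constant in (iv)(a) is $(1-\epsilon)-I_2$ rather than $2(1-\epsilon)-I_2$. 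To fix your argument you should adopt this pair-swap construction rather than attempt a sharper reverse Chernoff bound for single moves.
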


Note that Theorem~\ref{thm:2} follows directly from Lemma~\ref{lemma:1000}. Thus, it remains to prove Eqns.~\eqref{eq:1000}-\eqref{eq:0100} in Lemma~\ref{lemma:1000}.

\begin{proof}[Proof of Eqn.~\eqref{eq:1000}]
 Consider the type class $\Xi_{\xi}(1,0,0,0)$ with $k_1 = 1$ and $k_2 = t_{\mathrm{aa}} = t_{\mathrm{rr}} = 0$. Recalling the definitions of $\xi_{\text{row}}^{(i)}$, $r_1$, $\Delta_1$, and following the steps in~\eqref{eq:41}-\eqref{eq:mul2}, we can upper-bound the LHS of~\eqref{eq:1000} by  
\begin{align}
&\PP_{\xi}(\Delta_1^c) +  \PP_{\xi}\left(L(\xi_{\text{row}}^{(1)}) > L(\xi) \big| \Delta_1 \right)^{r_1} \notag \\
&\qquad\qquad\qquad\qquad+ \PP_{\xi}\left(L(\xi_{\text{row}}^{(\frac{n}{2}+1)}) > L(\xi) \big| \Delta_1 \right)^{r_1}. \label{eq:suc}
\end{align}
One can formulate 
$L(\xi) - L(\xi_{\text{row}}^{(1)})$ as per~\eqref{eq:42}, wherein $\Gamma_1, \Gamma_2, \Gamma_{uv}^{\text{ue}}$ are represented in terms of
\begin{align*}
&\s_{\mathrm{aa}}^{\mathrm{ue}} = \{(1,j)\}_{j=1}^{m/2}, \quad \s_{\mathrm{rr}}^{\mathrm{ue}} = \{(1,j)\}_{j=(m/2)+1}^{m},  \quad \text{and}\\
&\Imm = \left[2:\frac{n}{2}\right], \ \Imw = \emptyset, \ \Iwm = \{1\}, \ \Iww = \left[\frac{n}{2}+1:n\right].
\end{align*}
We apply Lemma~\ref{lemma:chernoff1} (with $K= (n/2)-1$, $L_1 = L_2 = m/2$, $\theta_1 = \ta$, $\theta_2 = \tr$) to lower-bound $\PP_{\xi}(L(\xi_{\text{row}}^{(1)}) \le L(\xi))$  by 
\begin{align*}
	\frac{1}{4}\! \exp\left\{\!\!-(1\!+\!o(1))\!\left(\frac{n}{2}\!-\!1\right)I_{1}\frac{\log n}{n} \!- \!\!\sum_{k=1}^2(1+o(1))\frac{m}{2} p h(\theta_k)\! \right\}\!,
\end{align*}
where $h(\theta_1) = h(\ta) = \taaa$ and $h(\theta_2) = h(\tr) = \trrr$. Following the derivations in~\eqref{eq:xing2} and~\eqref{eq:rou1} and noting that $p < \frac{(2(1-\epsilon)-I_1)\log n}{(\taaa + \trrr)m}$, we have
\begin{align*}
\PP_{\xi}\left(L(\xi_{\text{row}}^{(1)}) > L(\xi) \big| \Delta_1 \right)^{r_1} \le 2\exp\left(-n^{\frac{\epsilon}{2}}/4\right).
\end{align*}
One can similarly bound $\PP_{\xi}(L(\xi_{\text{row}}^{(\frac{n}{2}+1)}) > L(\xi) \big| \Delta_1 )^{r_1}$ by $2\exp\left(-n^{\frac{\epsilon}{2}}/4\right)$.  Thus, \eqref{eq:suc} can be upper-bounded by $5\exp\left(-n^{\frac{\epsilon}{2}}/4\right)$.
This completes the proof of Eqn.~\eqref{eq:1000}.
\end{proof}

\begin{proof}[Proof of Eqns.~\eqref{eq:0010} and~\eqref{eq:0001}]
Consider the type class $\Xi_{\xi}(0,0,1,0)$ with $t_{\mathrm{aa}} = 1$ and $k_1 = k_2  = t_{\mathrm{rr}} = 0$.
For each $\xi' \in \Xi_{\xi}(0,0,1,0)$, one can formulate 
$L(\xi) - L(\xi')$ as per~\eqref{eq:42}, wherein $|\s_{\mathrm{aa}}^{\mathrm{ue}}| = n$ and $|\s_{\mathrm{rr}}^{\mathrm{ue}}| = 0$. Applying Lemma~\ref{lemma:chernoff1} (with $L_1 = n, K = L_2 = 0, \theta_1 = \ta$), we have 
\begin{align}
	\PP_{\xi}(L(\xi') \le L(\xi)) \ge \frac{1}{4} \exp\left\{-(1+o(1))np h(\ta) \right\}. \label{eq:47}
\end{align}
Conditioned on the ground truth $\xi$, the events $\{L(\xi') > L(\xi)\}$ for different $\xi' \in \Xi_{\xi}(0,0,1,0)$ are mutually independent. Thus, the LHS of~\eqref{eq:0010} equals
\begin{align}
\prod_{\xi' \in \Xi_{\xi}(0,0,1,0)} \PP_{\xi}\left(L(\xi') > L(\xi) \right) \le \exp\left(-m^{\frac{\epsilon}{2}}/8\right), \notag
\end{align}
where the last inequality follows from~\eqref{eq:47} and the facts that $p < \frac{(1-\epsilon)\log m}{\taaa n}$ and $|\Xi_{\xi}(0,0,1,0)| = m/2$. 

The proof of Eqn.~\eqref{eq:0001} in Lemma~\ref{lemma:1000} is similar to that of Eqn.~\eqref{eq:0010}, hence we omit it here for brevity. 
\end{proof}

\begin{proof}[Proof of Eqn.~\eqref{eq:0100}]

Consider the type class $\Xi_{\xi}(0,1,0,0)$ with $k_2 = 1$ and $k_1 = t_{\mathrm{aa}} = t_{\mathrm{rr}} = 0$.
For ease of presentation, we suppose the ground truth $\xi$ satisfies $\xi_{\mathcal{A}} = [1:m/2]$ and $\xi_{\mathcal{R}} = [(m/2)+1:m]$. 

Let $r_2 \triangleq m/\log^2 m$, $\mathcal{V}'_2 \triangleq [1:2r_2]\cup [(m/2)+1: (m/2)+2r_2] \subseteq \mathcal{V}_2$ be a subset of movie nodes, and $\Delta_2$ be the event that the number of isolated nodes in $\mathcal{V}'_2$ is at least $3r_2$.

\begin{lemma}[Parallel to Lemma~\ref{lemma:strong}] \label{lemma:strong2}
	The probability  that $\Delta_2$ occurs is at least $1 - \exp\left(-\frac{\eta^2(\alpha_2+\beta_2)m^2}{\log^4 m}\right)$, where $\eta \in (0,1)$ is arbitrary.
\end{lemma}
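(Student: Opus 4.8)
The plan is to mirror the proof of Lemma~\ref{lemma:strong} verbatim, with the substitution $(n,r_1,\alpha_1,\beta_1,\mathcal{V}'_1)\mapsto(m,r_2,\alpha_2,\beta_2,\mathcal{V}'_2)$. The first step is a purely deterministic reduction. Let $D$ denote the number of edges of $G_2$ that have both endpoints inside $\mathcal{V}'_2$; since every non-isolated node of $\mathcal{V}'_2$ must be an endpoint of at least one such edge, at most $2D$ of the $4r_2$ nodes of $\mathcal{V}'_2$ are non-isolated. Hence on the event $\{D\le r_2/2\}$ there are at least $4r_2-r_2=3r_2$ isolated nodes, i.e. $\{D\le r_2/2\}\subseteq\Delta_2$, and it remains to control $\PP_\xi(D>r_2/2)$.

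The second step is a first-moment computation. Because the edges of the SBM are drawn independently, $D$ is a sum of independent Bernoulli variables, one per unordered pair in $\mathcal{V}'_2$, with success parameter $\alpha_2$ for an intra-cluster pair and $\beta_2$ for an inter-cluster pair. Counting pairs ($2\binom{2r_2}{2}$ intra-cluster, $(2r_2)^2$ inter-cluster) gives $\E[D]=2\binom{2r_2}{2}\alpha_2+(2r_2)^2\beta_2\le 4r_2^2(\alpha_2+\beta_2)$. The quantitative input I would invoke here is that $r_2(\alpha_2+\beta_2)=m(\alpha_2+\beta_2)/\log^2 m=o(1)$ in the (logarithmic-degree) regime in which this lemma is applied; consequently $\E[D]=o(r_2)$, so the threshold $r_2/2$ sits far above the mean.

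The third step is a Chernoff bound. Using $\E[e^{D}]=\prod_e\big(1+p_e(e-1)\big)\le\exp\!\big((e-1)\E[D]\big)$ and Markov's inequality, $\PP_\xi(D>r_2/2)\le e^{-r_2/2}\E[e^{D}]\le\exp\!\big(-\tfrac{r_2}{2}+4(e-1)r_2^2(\alpha_2+\beta_2)\big)$. Since $r_2(\alpha_2+\beta_2)\to 0$, for all sufficiently large $m$ we have $\tfrac{r_2}{2}-4(e-1)r_2^2(\alpha_2+\beta_2)\ge \eta^2 r_2^2(\alpha_2+\beta_2)$ for any fixed $\eta\in(0,1)$, which yields $\PP_\xi(D>r_2/2)\le\exp\!\big(-\eta^2 r_2^2(\alpha_2+\beta_2)\big)=\exp\!\big(-\eta^2(\alpha_2+\beta_2)m^2/\log^4 m\big)$. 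Combined with the first step this gives $\PP_\xi(\Delta_2)\ge 1-\exp\!\big(-\eta^2(\alpha_2+\beta_2)m^2/\log^4 m\big)$.

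There is no real obstacle: the argument is a routine union/first-moment plus Chernoff estimate identical in structure to the one for $G_1$. The only point requiring a word of care is the parameter assumption $m(\alpha_2+\beta_2)/\log^2 m=o(1)$, which is what makes $\E[D]\ll r_2$ and hence makes the deterministic reduction worthwhile; this is automatically satisfied in the regime of interest (and in particular whenever $\alpha_2,\beta_2=\Theta(\log m/m)$, where $r_2(\alpha_2+\beta_2)=\Theta(1/\log m)$).
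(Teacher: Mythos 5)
Your proposal is correct and follows essentially the same route as the paper's proof of Lemma~\ref{lemma:strong}: bound the number of non-isolated nodes of $\mathcal{V}'_2$ by twice the edge count within $\mathcal{V}'_2$, compute its mean $\le 4r_2^2(\alpha_2+\beta_2)$, and apply a Chernoff bound, using the same implicit regime assumption $r_2(\alpha_2+\beta_2)=o(1)$ that the paper also relies on. The only (immaterial) difference is that the paper invokes the multiplicative Chernoff bound at $(1+\eta)\E[X]$, which produces the $\eta^2$ factor directly, whereas you bound the tail at the fixed threshold $r_2/2$ via the MGF at $t=1$ and then absorb constants, which in fact yields a slightly stronger exponent that implies the stated bound.
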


For each $j \in \xi_{\mathcal{A}}$ and $j' \in \xi_{\mathcal{R}}$, we define several variants of $\xi$ as follows:

\noindent{1)} $\xi_{\text{col}}^{(j)}$ is identical to $\xi$ except that $(\xi_{\text{col}}^{(j)})_{\mathcal{A}} = \xi_{\mathcal{A}} \setminus \{j\}$ and $(\xi_{\text{col}}^{(i)})_{\mathcal{R}} = \xi_{\mathcal{R}} \cup \{j\}$, i.e., movie $j$ in $\xi_{\text{col}}^{(j)}$ is a romance movie.

\noindent{2)} $\xi_{\text{col}}^{(j')}$ to be identical to $\xi$ except that $(\xi_{\text{col}}^{(j')})_{\mathcal{A}} \!=\! \xi_{\mathcal{A}} \cup \{j'\}$ and $(\xi_{\text{col}}^{(j')})_{\mathcal{R}} \!=\! \xi_{\mathcal{R}} \!\!\setminus\!\! \{j'\}$, i.e., movie $j'$ in $\xi_{\text{col}}^{(j')}$ is an action  movie.

\noindent{3)} $\xi_{\text{col}}^{(j,j')}$ is identical to $\xi$ except that movie $j$ and  movie $j'$ in $\xi_{\text{col}}^{(j,j')}$ are respectively romance and action movies.

By definition, the type class $\Xi_{\xi}(0,1,0,0)$ is equivalent to the set  $\{\xi_{\text{col}}^{(j,j')}: j \in \xi_{\mathcal{A}}, j' \in \xi_{\mathcal{R}} \}$.
Conditioned on $\Delta_2$, one can find a subset $\xi_{P_2} \subset \xi_{\mathcal{A}}$ and a subset $\xi_{Q_2} \subset \xi_{\mathcal{R}}$ such that $|\xi_{P_2}| = |\xi_{Q_2}| = r_2$ and all the nodes in $\xi_{P_2} \cup \xi_{Q_2}$ are not connected to one another.

\noindent{\bf (i) When $\ta \ne \tr$:} Let $[\xi_{P_2}]_k$ and $[\xi_{Q_2}]_k$ respectively be the $k$-th elements of $\xi_{P_2}$ and $\xi_{Q_2}$, where $k \in [1:r_2]$. We define 
$$\xi_{P_2,Q_2} = \left\{\left([\xi_{P_2}]_k, [\xi_{Q_2}]_j \right)\right\}_{k=1}^{r_2}.$$ Note that the LHS of~\eqref{eq:0100} can be upper-bounded  by 
\begin{align}
&\PP_{\xi}\Bigg(\bigcap_{(j,j') \in \xi_{P_2,Q_2}} \left\{ L\left(\xi_{\text{col}}^{(j,j')}\right) > L(\xi) \right\} \Bigg) \notag \\
&\le \PP_{\xi}\Bigg(\bigcap_{(j,j') \in \xi_{P_2,Q_2}} \left\{ L\left(\xi_{\text{col}}^{(j,j')}\right) > L(\xi) \right\} \bigg| \Delta_2 \Bigg) + \PP_{\xi}(\Delta_2^c) \label{eq:sas}.
\end{align}
Without loss of generality, we assume $(1,(m/2)+1) \in \xi_{P_2,Q_2}$. The key observation is that conditioned on $\Delta_2$, the events $\{L(\xi_{\text{col}}^{(j',j)}) > L(\xi)\}$ for different $(j,j') \in \xi_{P_2,Q_2}$ are mutually independent, thus the first term in~\eqref{eq:sas} equals
\begin{align}
\PP_{\xi}\Big( L(\xi_{\text{col}}^{(1,\frac{m}{2}+1)}) > L(\xi) \Big| \Delta_2 \Big)^{r_2}.  \label{eq:50} 
\end{align}
Since $k_2 = 1$, the parameters $t_{\mathrm{ar}}$ and $t_{\mathrm{ra}}$ can be chosen to equal either $0$ or $1$, but in the following we consider the scenario in which both $t_{\mathrm{ar}}$ and $t_{\mathrm{ra}}$ equal zero. Recall that 
$L(\xi) - L(\xi_{\text{col}}^{(1,\frac{m}{2}+1)})$ can be formulated as per~\eqref{eq:42}, wherein $\Gamma_1, \Gamma_2, \Gamma_{uv}^{\text{e}}, \Gamma_{uv}^{\text{ue}}$ are represented in terms of
\begin{align*}
&\s_{\mathrm{ar}}^{\mathrm{e}} \!=\! \{(i,1)\}_{i=1}^{n}, \ \s_{\mathrm{ra}}^{\mathrm{e}} \!=\! \{(i,(m/2)+1)\}_{i=1}^{n},  \ \s_{\mathrm{ar}}^{\mathrm{ue}} \!=\! \s_{\mathrm{ra}}^{\mathrm{ue}} \!=\! \emptyset, \\
&\Iaa \!=\! \left[2:\frac{n}{2}\right], \ \Iar \!=\! \{1\}, \ \Ira \!=\! \left\{\frac{m}{2}\!+\!1\right\}, \ \Irr \!=\! \left[\frac{m}{2}\!+\!2:m\right].
\end{align*}
Applying similar techniques used for Lemma~\ref{lemma:chernoff1}, we have
\begin{align}
&\PP_{\xi}\left( L(\xi_{\text{col}}^{(1,\frac{m}{2}+1)}) \le L(\xi) \right) \notag \\
&\ge \frac{1}{4} \exp\left\{-(1\!+\!o(1))I_{2}(\log m) \!-\! (1\!+\!o(1)) 2np\tar  \right\}. \label{eq:51}
\end{align}
Combining~\eqref{eq:50},~\eqref{eq:51}, Lemma~\ref{lemma:strong2}, and the fact that $p < \frac{((1-\epsilon)-I_2)\log m}{2\tar n}$, we have that~\eqref{eq:sas} is upper-bounded by
\begin{align}
 \frac{\PP_{\xi}\left(L(\xi_{\text{col}}^{(1,\frac{m}{2}+1)}) > L(\xi)  \right)^{r_2}}{\PP_{\xi}(\Delta_2)^{r_2}} + \PP_{\xi}(\Delta_2^c) \le  3\exp\left(-\frac{m^{\frac{\epsilon}{2}}}{4}\right). \notag
\end{align}

\begin{remark}  \label{remark:challenge} {\em
	Note that the above analysis for $\ta \ne \tr$ is sub-optimal---the  number of events that are most likely  to cause errors is $|\xi_{P_2}| \times |\xi_{Q_2}| = \mathcal{O}(m^2/(\log m)^4)$; however, among them only $\mathcal{O}(m/(\log m)^2)$ independent events are extracted to $\xi_{P_2,Q_2}$, as shown in equation~\eqref{eq:sas}. Hence, a factor of two is lost in the converse part. Furthermore, due to the fact that $\ta \ne \tr$, the approach used in the proof of Lemma~\ref{lemma:1000} (i.e., split $\PP_{\xi}(\cap_{j \in \xi_{P_2}, j' \in \xi_{Q_2}} \{L(\xi_{\text{col}}^{(j,j')}) > L(\xi) \}\big| \Delta_2)$ into two individual terms as per~\eqref{eq:suc}) does not yield a tight converse either.}
\end{remark}

\noindent{\bf (ii) When $\ta = \tr$:}   Without loss of generality, we assume $1 \in \xi_{P_2}$ and $(m/2)+1 \in \xi_{Q_2}$. Similar to~\eqref{eq:suc}, one can bound the LHS of~\eqref{eq:0100} by
\begin{align}
&\PP_{\xi}(\Delta_2^c) +  \PP_{\xi}\left(L(\xi_{\text{col}}^{(1)}) > L(\xi) \big| \Delta_2 \right)^{r_2} \notag \\
&\qquad\qquad\qquad\quad + \PP_{\xi}\left(L(\xi_{\text{col}}^{(\frac{m}{2}+1)}) > L(\xi) \big| \Delta_2 \right)^{r_2}. \label{eq:suc2}
\end{align}
This is because conditioned on $\Delta_2$, the events $\{L(\xi_{\text{col}}^{(j)}) > L(\xi)\}$ for different $j \in \xi_{P_2}$ are mutually independent, and the events $\{L(\xi_{\text{col}}^{(j')}) > L(\xi)\}$ for different $j' \in \xi_{Q_2}$ are also mutually independent.
Also, by noting that  
$L(\xi) - L(\xi_{\text{col}}^{(1)})$ can be formulated as per~\eqref{eq:42}, wherein $\Gamma_1, \Gamma_2, \Gamma_{uv}^{\text{e}}, \Gamma_{uv}^{\text{ue}}$ are represented in terms of
\begin{align*}
&\s_{\mathrm{ar}}^{\mathrm{e}} \!=\! \{(i,1)\}_{i=1}^{n}, \ \s_{\mathrm{ra}}^{\mathrm{e}} \!=\! \s_{\mathrm{ar}}^{\mathrm{ue}} \!=\! \s_{\mathrm{ra}}^{\mathrm{ue}} \!=\! \emptyset, \quad \text{and} \\
&\Iaa \!=\! \left[2:\frac{n}{2}\right], \ \Iar \!=\! \{1\}, \ \Ira \!=\! \emptyset, \ \Irr \!=\! \left[\frac{m}{2}\!+\!2:m\right],
\end{align*}
we obtain 
\begin{align}
\PP_{\xi}\!\left(\! L(\xi_{\text{col}}^{(1)}) \!\le\! L(\xi) \!\right) \!\ge\! \frac{1}{4} \exp\left\{-(1\!+\!o(1))\frac{I_{2}\log m}{2} \right\}. \label{eq:55}
\end{align}
Similarly, we have
\begin{align}
\PP_{\xi}\!\left(\! L(\xi_{\text{col}}^{(\!\frac{m}{2}\!+\!1)}) \!\le\! L(\xi) \!\right) \!\ge\! \frac{1}{4} \exp\!\left\{\!-(1\!+\!o(1))\frac{I_{2}\log m}{2} \right\}. \label{eq:56}
\end{align}
Combining~\eqref{eq:suc2}-\eqref{eq:56}, Lemma~\ref{lemma:strong2}, and the fact that $I_2 \le 2(1-\epsilon)$, one can eventually show that the LHS of~\eqref{eq:0100} is bounded by $5\exp\left(-m^{\frac{\epsilon}{2}}/4\right)$. This completes the proof of Eqn.~\eqref{eq:0100}.\end{proof}

\section{Conclusion and future directions} \label{sec:conclusion}
This paper investigates two variants of a novel community recovery problem based on a partially observed rating matrix and social and movie graphs. Our information-theoretic characterizations on the sample complexity quantify the gains due to graph side-information; in particular, there exists a certain regime in which simultaneously observing two pieces of graph side-information is critical to reduce the sample complexity.

While the information-theoretic characterization for Model~2 is optimal in a certain parameter regime and order-optimal in the remaining parameter regime, one would expect that overcoming the challenge discussed in Remark~\ref{remark:challenge} and establishing a sharp threshold by filling the small gap for the regime in which our bounds do not match would be a fruitful endeavour.


\appendices

\section{Proof of Lemma~\ref{lemma:strong}}
	Let $N \triangleq 2 \binom{2r_1}{2} = 4r_1^2-2r_1$, $N' \triangleq 4r_1^2$, $\{X_i\}_{i=1}^N \stackrel{\text{i.i.d.}}{\sim} \text{Bern}(\alpha_1)$, $\{Y_i\}_{i=1}^{N'} \stackrel{\text{i.i.d.}}{\sim} \text{Bern}(\beta_1)$,
	and 
	$X \triangleq \sum_{i=1}^N X_i + \sum_{i=1}^{N'} Y_i$ 
	be the number of edges in $\mathcal{G}_1$. Thus, the number of non-isolated nodes is at most $2X$. Note that $\E(X) = N\alpha_1 + N'\beta_1$, which lies in the interval $[3r_1^2(\alpha_1+\beta_1), 4r_1^2(\alpha_1+\beta_1)]$ for sufficiently large $n$. For any $\eta \in (0,1)$, by applying the multiplicative Chernoff bound, we have 
	\begin{align*}
	\PP\left(X \ge
	\left(1+\eta\right)4r_1^2(\alpha_1 + \beta_1) \right) &\le \PP\left(X \ge \left(1+\eta\right)\E(X) \right) \\
	&\le \exp\left(-\frac{\eta^2(\alpha_1+\beta_1)n^2}{\log^4 n}\right).
	\end{align*}
	Therefore, with probability at least $1 - \exp\left(-\frac{\eta^2(\alpha_1+\beta_1)n^2}{\log^4 n}\right)$, $X \le \left(1+\eta\right)4r_1^2(\alpha_1 + \beta_1) < r_1/2$, and the number of non-isolated nodes is at most $r_1$.

\ifCLASSOPTIONcaptionsoff
  \newpage
\fi

\bibliographystyle{IEEEtran}
\bibliography{reference}

\end{document}